\documentclass[11pt]{article}
\usepackage[margin=1in]{geometry}
\usepackage[T1]{fontenc}
\usepackage{lmodern}
\usepackage{amsmath,amssymb,amsthm,mathtools}
\usepackage{microtype,graphicx,booktabs,array}
\usepackage[numbers,sort&compress]{natbib}
\usepackage[colorlinks=true,allcolors=blue]{hyperref}
\newcommand{\doi}[1]{doi:\,\href{https://doi.org/#1}{\nolinkurl{#1}}}
\hypersetup{pdftitle={Entangled measurements are necessary for optimal tomography of mixed fermionic Gaussian states and of bosonic Gaussian states near the vacuum},pdfauthor={Ron Rubin},pdfkeywords={Gaussian states, quantum state tomography, collective measurements, Fisher information, free fermions, covariance stability}}
\numberwithin{equation}{section}
\newtheorem{theorem}{Theorem}[section]
\newtheorem{lemma}[theorem]{Lemma}
\newtheorem{corollary}[theorem]{Corollary}
\newtheorem{proposition}[theorem]{Proposition}
\theoremstyle{remark}
\newtheorem{remark}[theorem]{Remark}
\theoremstyle{plain}
\newtheorem{conjecture}[theorem]{Conjecture}
\newtheorem{question}[theorem]{Question}
\providecommand{\Sym}{\operatorname{Sym}}
\providecommand{\Fock}{\mathcal F}
\providecommand{\Qm}{\mathcal Q_m}
\newcommand{\Tr}{\operatorname{Tr}}
\newcommand{\tr}{\operatorname{tr}}
\newcommand{\ad}{\operatorname{ad}}

\newcommand{\E}{\mathbb E}
\newcommand{\Prob}{\mathbb P}
\newcommand{\R}{\mathbb R}
\newcommand{\Tcal}{\mathcal T}
\newcommand{\Gc}{\mathcal G_c}
\title{Entangled measurements are necessary for optimal tomography of mixed fermionic Gaussian states and of bosonic Gaussian states near the vacuum}
\author{Ron Rubin\\Rubin Anders Scientific, Inc.}
\date{September 19, 2026}

\begin{document}
\maketitle

\begin{abstract}
We prove that learning an unknown mixed fermionic Gaussian state on $m$ modes to trace distance $\epsilon$ requires $\Omega(m^3/\epsilon^2)$ copies when measurements act on one copy at a time, even with arbitrary POVMs, fresh ancillas and classical adaptivity, but without quantum memory between copies. The bound holds for $0<\epsilon\le1/3600$ and separates this model from the known collective rate $\Theta(m^2/\epsilon^2)$. It follows from a uniform single-copy Fisher-information budget and a dimension-independent comparison between trace distance and Gaussian parameters. On covariance matrices of operator norm at most $1-c$, we prove a Frobenius-to-trace-norm continuity bound with constant $[2c(2-c)]^{-1/2}$; matchgate shadows then attain $O(m^3/(c\epsilon^2))$ copies. This also gives explicit error certificates for thermal free-fermion states. For a class of mixed passive bosonic Gaussian states with total mean photon number at most one, a reduction to bounded-block qudit tomography gives a single-copy lower bound $\Omega(m^3/(\epsilon^2\sqrt{\log(m/\epsilon)}))$, versus collective complexity $\Theta(m^2/\epsilon^2)$. These separations answer the mixed-state measurement-resource question posed by Chen et al. A two-mode example has optimal two-copy Fisher trace $5$, versus the separable ceiling $4$. An IBM replication gave a pre-registered witness lower bound $4.12$ at one-sided $95\%$ shot-noise confidence; its interpretation assumes the prescribed preparations and a common measurement channel and does not bound preparation systematics.
\end{abstract}

\section{Introduction and main result}\label{sec:intro}

Fermionic Gaussian states are the free-fermion analogue of bosonic Gaussian states: a Gaussian state $\rho$ on $m$ modes is determined by the real antisymmetric $2m\times2m$ covariance matrix $\Gamma_{ab}=\Tr(\rho\,ic_ac_b)$ ($a\ne b$; $\Gamma_{aa}=0$) of its Majorana operators, and the class underlies matchgate circuits \cite{Val02,JM08}, free-fermion simulation \cite{TD02,Bra05} and fermionic classical shadows \cite{HKP20,ZRM21,WHLB23}. Walter and Witteveen \cite[Thm.~1.4]{WW25} gave the first tomography protocol for fermionic Gaussian states, pure or mixed, that is optimal in both the number of modes and the error, reaching fidelity $1-\epsilon$ with $O(m^2/\epsilon)$ copies through a random-purification channel; Chen, Fanizza, Girardi, Lami, Mele, Walter and Witteveen \cite{CFGLMWW26}, whose paper subsumes it, showed that $\Theta((m^2+\log\delta^{-1})/\epsilon^2)$ copies are necessary and sufficient to learn a fermionic Gaussian state, pure or mixed, to purified distance $\epsilon$ with probability $1-\delta$ (Theorems~4.2 and~4.3 there; the lower bound, Theorem~4.3, is proved for pure states and is inherited by the class of all Gaussian states, which contains them). Their optimal protocol applies a random-purification channel followed by a measurement that is fully entangled across all copies, and in their Section~6 they leave open whether entangled measurements are fundamentally required for optimal tomography of bosonic and fermionic Gaussian states. For pure fermionic Gaussian states they are not: single-copy protocols reach $\tilde O(m^2/\epsilon^2)$, as noted in footnote~1 of \cite{CFGLMWW26} via the bounded-gate-complexity learner of \cite{ZLKQHC24}, Proposition~1 of Bittel, Mele, Eisert and Leone \cite{BMEL25} is an explicit fermionic-native single-copy pure-state learner with $O((m^3/\epsilon^2)\log(m^2/\delta))$ copies, and earlier single-copy pure-state learners go back to Aaronson and Grewal \cite{AG23}. To our knowledge, for mixed fermionic Gaussian states the best known single-copy upper bound is $O((m^4/\epsilon^2)\log(m^2/\delta))$ \cite[Thm.~5]{BMEL25}, which this note improves only on the promised class $\Gc$ of Section~\ref{sec:tight}, and no lower bound above the pure-state $\Omega(m^2/\epsilon^2)$ of \cite[Thm.~4.3]{CFGLMWW26} was known. We prove that the gap between single-copy and collective protocols is real.

Throughout, $\|X\|_1=\Tr|X|$ is the trace norm, $T(\rho,\sigma)=\tfrac12\|\rho-\sigma\|_1\in[0,1]$ the trace distance, $\|X\|_{\rm op}$ the operator norm (largest singular value), $\|X\|_F=(\sum_{ab}|X_{ab}|^2)^{1/2}$ the Frobenius norm of a matrix, and $P(\rho,\sigma)=\sqrt{1-F(\rho,\sigma)^2}$, with $F(\rho,\sigma)=\|\sqrt\rho\sqrt\sigma\|_1$, the purified distance; $T\le P$ \cite{FvdG99}, with equality for pure states. $\tilde O$ hides polylogarithmic factors. Section~\ref{sec:tight} states its bounds in $\|\rho-\sigma\|_1=2T$. Further, $m\ge1$ is the number of modes and $p=\binom{2m}{2}=m(2m-1)$, the number of independent entries of a real antisymmetric $2m\times2m$ matrix such as $\Gamma$. An ancilla is an auxiliary register prepared in a fixed state independent of $\rho$ and measured jointly with the copy; it is fresh if a new one is used for each copy. An example of a single-copy protocol in the sense of Theorem~\ref{thm:main} is the matchgate shadow estimator of Section~\ref{sec:tight}, which rotates each copy by a random free-fermion unitary and reads out the occupation numbers.

\begin{theorem}\label{thm:main}
Let $0<\epsilon\le 1/3600$. Consider a protocol that receives $N$ copies of an unknown mixed fermionic Gaussian state $\rho$ on $m$ modes, measures them one at a time with an arbitrary POVM on the current copy together with fresh ancillas, chooses each POVM as an arbitrary (possibly randomised) function of all earlier outcomes, keeps no quantum memory between copies, and finally outputs a density matrix $\hat\rho$, not necessarily Gaussian. If $T(\hat\rho,\rho)\le\epsilon$ with probability at least $2/3$ for every such $\rho$, then
\begin{equation}\label{eq:main}
N>\frac{p^2}{8{,}553{,}600\,m\,\epsilon^2}\ \ge\ \frac{m^3}{8{,}553{,}600\,\epsilon^2}.
\end{equation}
\end{theorem}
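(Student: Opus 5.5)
The plan follows the abstract: a uniform single-copy Fisher-information budget, a local packing/Assouad argument around a maximally mixed-ish reference point, and a dimension-independent comparison between trace distance and Gaussian parameters.

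\textbf{Reference family.} I work in a neighbourhood of a mixed reference state, say $\Gamma_0=0$ (the maximally mixed state) or a scaled interior point $\Gamma_0$ with $\|\Gamma_0\|_{\rm op}\le 1/2$. The hypotheses are $\Gamma_\theta=\Gamma_0+\delta\sum_{j=1}^p\theta_j E_j$, with $\theta\in\{\pm1\}^p$ and $E_j$ the elementary antisymmetric basis matrices. The step size is $\delta\asymp\epsilon/\sqrt p$, small enough that every $\Gamma_\theta$ stays a valid covariance matrix, i.e.\ $\|\Gamma_\theta\|_{\rm op}<1$. A random-matrix bound handles this, since $\|\sum\theta_jE_j\|_{\rm op}=O(\sqrt m)$ for most sign patterns. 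Alternatively one restricts to a Gilbert--Varshamov subset of sign patterns with controlled norm, which gives $\delta\sqrt m\ll1$.

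\textbf{Separation.} Two hypotheses differing in a constant fraction of the $p$ coordinates have Frobenius distance $\gtrsim\delta\sqrt p\asymp\epsilon$. The lower half of the trace-distance/parameter comparison then gives $T(\rho_\theta,\rho_{\theta'})\ge C\|\Gamma_\theta-\Gamma_{\theta'}\|_F$ near $\Gamma_0=0$. Near the maximally mixed state, $\rho\approx 2^{-m}(I+\tfrac{i}{2}\sum_{a<b}\Gamma_{ab}c_ac_b)$ to first order. The trace norm of a quadratic Majorana operator $\sum h_{ab}c_ac_b$ under the normalized trace is of order $\|h\|_F$, uniformly in $m$ by Khintchine-type hypercontractivity on the Clifford algebra. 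So a successful learner with accuracy $\epsilon$ (with constant slack) identifies most coordinates of $\theta$. Assouad's lemma, or Fano over the packing, turns this into the requirement that the total information gathered about $\theta$ is $\Omega(p)$ bits per unit of $\delta^2$.

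\textbf{Information budget.} This is the main step. For any single-copy POVM (with fresh ancilla, which only enlarges the POVM on the copy), the classical Fisher information matrix $J$ of the outcome distribution with respect to $\Gamma$ at $\Gamma_0$ satisfies $\operatorname{tr}J\le O(m)$. It is not $O(p)$, because the quadratic observables $ic_ac_b$ cannot be jointly read out. Concretely, $\operatorname{tr}J=\sum_x\sum_{a<b}\Tr(M_x\, ic_ac_b)^2/\Tr(M_x\rho_0)$ up to constants. For a rank-one effect $M_x=|\psi\rangle\langle\psi|$ at $\rho_0\propto I$, this reduces to bounding $\sum_{a<b}\langle\psi|ic_ac_b|\psi\rangle^2$. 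That sum is the squared Frobenius norm of the covariance matrix of a pure state, which is at most $m$: it equals $\|\Gamma_\psi\|_F^2/2\le m$, since $\Gamma_\psi$ has singular values $\le1$ in $m$ pairs. Summing with weights $\Tr M_x/2^m$ gives $\operatorname{tr}J\le O(m)$ uniformly over measurements. The same bound must hold in a neighbourhood of $\Gamma_0$ to survive the perturbation argument. Adaptivity is handled by the chain rule: the total Fisher trace after $N$ rounds is at most $N\cdot O(m)$.

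\textbf{Combining.} I apply a van Trees / Bayesian Cram\'er--Rao inequality with a smooth prior on a box of side $\asymp\delta$ in all $p$ coordinates, or equivalently Assouad with per-coordinate KL $\le N\delta^2 J_{jj}$. This yields $\sum_j\mathbb E(\hat\theta_j-\theta_j)^2\gtrsim p^2/(N\operatorname{tr}J+p/\delta^2)$. Accuracy $\epsilon$ forces the mean squared Frobenius error to be $\lesssim\epsilon^2$, so $p\delta^2\cdot p/(Nm\delta^2)\lesssim\epsilon^2$, giving $N\gtrsim p^2/(m\epsilon^2)$. Tracking the explicit constants yields the $8{,}553{,}600$ and the condition $\epsilon\le1/3600$; the latter keeps the box inside the valid covariance region and inside the regime where the linearized trace-distance comparison holds. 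The main obstacle I expect is the uniform Fisher bound: making it hold not just at $\Gamma_0$ but on the whole prior box, and for mixed effects with ancillas. For this I would first try convexity, since Fisher information is maximized by rank-one refinements, together with the operator bound $\rho_\Gamma\ge(1-O(\delta\sqrt m))\rho_0$ on the box.
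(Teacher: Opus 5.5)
The genuine gap is the step you flag yourself, the uniform single-copy budget on the whole prior box, and the route you propose for it does not work. The operator bound $\rho_\Gamma\ge(1-O(\delta\sqrt m))\rho_0$ is false. In the normal-form frame $2^m\rho_\Gamma=\prod_j(I+\lambda_jZ_j)$, so the best constant $c$ in $\rho_\Gamma\ge c\,\rho_0$ is $\prod_j(1-\lambda_j)\le e^{-\sum_j\lambda_j}$. This is governed by the nuclear norm of $\Gamma$, not by $\|\Gamma\|_{\rm op}\asymp\delta\sqrt m$. On your box $\sum_j\lambda_j^2=\tfrac12\|\Gamma\|_F^2\asymp\epsilon^2$, but at typical points most of the $m$ normal-form values are of order $\epsilon/\sqrt m$. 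Hence $\sum_j\lambda_j\asymp\sqrt m\,\epsilon$, and $\rho_\Gamma/\rho_0$ ranges over $e^{\pm\Theta(\sqrt m\,\epsilon)}$, which is unbounded in $m$ at fixed $\epsilon$ (this is the growth of $\|f_\theta\|_{\rm op}$ noted in Section~\ref{sec:setting}). The numerator fails too. Away from $\Gamma=0$, $\partial_{ab}\rho_\Gamma$ is no longer $2^{-m}B_{ab}$; it contains quartic and higher Majorana strings (for instance $\pm2^{-m}\Gamma_{cd}\,c_ac_bc_cc_d$). Lemma~\ref{lem:cov} does not see these terms, and crude estimates control them only through $\|K_\Gamma\|_{\rm op}\asymp\sqrt m\,\epsilon$.

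The missing idea is to whiten rather than bound numerator and denominator separately. Write $\Tr(M_x\partial_j\rho)=q(x)\Tr(\sigma_xR_j)$, where $\sigma_x=\rho^{1/2}M_x\rho^{1/2}/q(x)$ is a density matrix and $R_j=\rho^{-1/2}(\partial_j\rho)\rho^{-1/2}$. In the Gibbs coordinates $\rho_\theta\propto e^{K_\theta}$, Duhamel gives $R_j=\Tcal_\theta(B_j)-\mu_jI$ with $\Tcal_\theta=\sinh(\ad K_\theta/2)/(\ad K_\theta/2)$. Because $\ad K_\theta$ preserves the quadratic span with eigenvalues $0,\pm2\lambda_j\pm2\lambda_k$, the bound $\|\Tcal_\theta|_V\|\le\sinh(2r)/(2r)$ depends on $\|\theta\|_2\le r$ alone. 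The score is then $\Tcal^Tb(x)-\mu$, with $b(x)$ the quadratic expectations of $\sigma_x$, and Lemma~\ref{lem:cov} applied to $\sigma_x$ gives $\tr I_M(\theta)\le C_r^2m$ on the whole ball (Proposition~\ref{prop:fisher}). Your covariance coordinates are not themselves the obstacle: $\Gamma(\rho_\theta)=-i\tanh(i\Theta)$ has a Jacobian with singular values in $[\operatorname{sech}^2r,1]$. But the budget has to be proved through this whitened derivative.

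Three further steps are asserted rather than proved.

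First, the lower comparison $T(\rho_\Gamma,\rho_{\Gamma'})\ge C\|\Gamma-\Gamma'\|_F$ needs the nonlinear remainder of your first-order expansion controlled in normalised trace norm, uniformly in $m$. Khintchine handles only the linear term, and for the reason just given, any remainder estimate through operator norms or the unnormalised trace norm loses an $m$-dependent factor. The paper integrates along the segment and proves $\|D_vf_\theta-K_v\|_{1,\tau}\le5r\|v\|_2$, using H\"older with exponents $(4,2,4)$ and the sub-Gaussian bound $\tau e^{tK_\theta}\le e^{t^2\|\theta\|_2^2/2}$ (Lemmas~\ref{lem:expK}, \ref{lem:Df}, Proposition~\ref{prop:metric}).

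Second, in your van Trees route, ``accuracy $\epsilon$ forces mean-squared error $\lesssim\epsilon^2$'' does not follow from a guarantee holding with probability $2/3$ for a possibly non-Gaussian output. You must project $\hat\rho$ onto the family (nearest parameter in trace distance, costing a factor $2$). You must also suppress the failure event: with failure probability $1/3$ and box diameter $2r$, the resulting risk bound contains $\tfrac13(2r)^2>r^2/(\pi^2+1)$, so no contradiction with van Trees is possible. The paper's $99$-fold majority-cluster selection (Lemma~\ref{lem:amplify}) is what fixes this. Assouad would tolerate failure probability $1/3$, but it needs the same uniform budget along segments.

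Third, $8{,}553{,}600$ and $\epsilon\le1/3600$ come from the paper's specific choices: $\kappa=1/(4\sqrt3)$, budget $2m$ on the ball of radius $1/(10\sqrt3)$, $r=30\epsilon/\kappa$, and $99$ repetitions. Your construction would yield a different absolute constant.
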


\begin{corollary}\label{cor:sep}
Collective protocols learn every fermionic Gaussian state to trace distance $\epsilon$ with $O(m^2/\epsilon^2)$ copies (\cite[Thm.~1.4]{WW25}, \cite[Thm.~4.2]{CFGLMWW26}; purified distance dominates trace distance, $T\le P$ \cite{FvdG99}), and this is optimal even in trace distance: the lower bound \cite[Thm.~4.3]{CFGLMWW26} is proved for pure states in purified distance, and an output within trace distance $\epsilon$ of a pure Gaussian state can be replaced by the nearest pure Gaussian state, which is within $2\epsilon$ of it by the triangle inequality, and for pure states trace and purified distance coincide. This pure-state argument applies for $m\ge2$; for $m=1$, the commuting family $\rho=(I+u\,ic_1c_2)/2$ gives the usual Bernoulli $\Omega(1/\epsilon^2)$ lower bound. Hence, for every fixed $\epsilon\le1/3600$, single-copy protocols need a factor $\Omega(m)$ more copies than the optimal collective protocol on mixed fermionic Gaussian states, and entangled measurements across copies are necessary for the optimal rate: measuring one copy at a time without quantum memory cannot reach it, whatever the classical feedforward (protocols that hold copies in a quantum memory and measure them one at a time are outside Theorem~\ref{thm:main}; see Section~\ref{sec:discussion}). To our knowledge this is the first such separation for fermionic Gaussian states, and it answers the fermionic mixed-state case of the question in \cite[Sec.~6]{CFGLMWW26}. It is the Gaussian instance of a pattern established for unstructured $d$-dimensional states, where entangled measurements achieve $\Theta(d^2/\epsilon^2)$ \cite{HHJWY17,OW16} while single-copy measurements need $\Theta(d^3/\epsilon^2)$, whether nonadaptive \cite{HHJWY17} or adaptive \cite{CHLLS22,LN25}, and measurements on blocks of at most $k$ copies need $\Omega(d^3/(\sqrt k\,\epsilon^2))$ \cite{CLL24,KLMR26}; entanglement is likewise necessary for optimal property testing \cite{BCL20}. Those bounds have budgets exponential in $m$ on the $2^m$-dimensional Fock space and say nothing about the Gaussian subclass, which is why a dimension-free argument is needed here.
\end{corollary}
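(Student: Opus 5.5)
The corollary is an assembly of known results plus Theorem~\ref{thm:main}, so I would prove it in three short steps.

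\textbf{Collective upper bound.} Invoke \cite[Thm.~4.2]{CFGLMWW26} (or \cite[Thm.~1.4]{WW25}). It gives purified distance $\epsilon$ with $O((m^2+\log\delta^{-1})/\epsilon^2)$ copies, and I fix $\delta=1/3$. Since $T\le P$ \cite{FvdG99}, the same protocol achieves trace distance $\epsilon$ with $O(m^2/\epsilon^2)$ copies.

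\textbf{Collective lower bound in trace distance, for $m\ge2$.} I reduce from the pure-state bound \cite[Thm.~4.3]{CFGLMWW26}. Suppose a protocol outputs $\hat\rho$ with $T(\hat\rho,\rho)\le\epsilon$ for every pure Gaussian $\rho$. Post-process by choosing a pure Gaussian $\hat\psi$ that minimises $T(\hat\rho,\cdot)$; a minimiser exists because the set of pure Gaussian states is compact. Then $T(\hat\psi,\hat\rho)\le T(\rho,\hat\rho)\le\epsilon$, so the triangle inequality gives $T(\hat\psi,\rho)\le2\epsilon$. Both states are pure, so $P(\hat\psi,\rho)=T(\hat\psi,\rho)\le2\epsilon$. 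The pure-state bound then forces $\Omega(m^2/(2\epsilon)^2)=\Omega(m^2/\epsilon^2)$ copies.

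\textbf{Lower bound for $m=1$.} The pure-state lower bound is only stated for $m\ge2$, so for $m=1$ I use the commuting family $\rho_u=(I+u\,ic_1c_2)/2$ with $u\in[-1/2,1/2]$. These states are diagonal in a common basis, and $T(\rho_u,\rho_{u'})=|u-u'|/2$. Learning $\rho_u$ to trace distance $\epsilon$ therefore amounts to distinguishing biased coins with biases differing by about $4\epsilon$. The standard Bernoulli bound (Le Cam, via KL divergence $O(\epsilon^2)$ per copy) gives $\Omega(1/\epsilon^2)$, even for collective measurements, because one can first dephase in the common eigenbasis without changing any copy. Together with the previous step this gives $\Theta(m^2/\epsilon^2)$ for all $m\ge1$.

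\textbf{Separation.} Theorem~\ref{thm:main} gives $N>m^3/(8{,}553{,}600\,\epsilon^2)$ for single-copy protocols without quantum memory whenever $\epsilon\le1/3600$. Dividing by the collective rate $Cm^2/\epsilon^2$ gives a ratio of $\Omega(m)$, uniformly in $\epsilon$ in that range. Hence no protocol in the class covered by Theorem~\ref{thm:main} attains the optimal rate, so entangled measurements are necessary.

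The only point requiring care is the $m\ge2$ reduction: the projection must be onto pure Gaussian states, and the pure-state theorem must be applied at accuracy $2\epsilon$. The other statements in the corollary, about prior work and the role of quantum memory, are commentary and need no proof.
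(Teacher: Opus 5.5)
Your proposal is correct and follows the paper's own argument, which is given inline in the corollary: the collective upper bound comes from \cite[Thm.~4.2]{CFGLMWW26} together with $T\le P$; for $m\ge2$ the trace-distance lower bound comes from projecting onto the nearest pure Gaussian state (at a cost of $2\epsilon$) and using $T=P$ for pure states; for $m=1$ it is the Bernoulli bound on the commuting family; and the $\Omega(m)$ ratio comes from Theorem~\ref{thm:main}. Your added details are exactly the justifications the paper leaves implicit: compactness of the pure Gaussian states gives the minimiser, and for $m=1$ dephasing in the common eigenbasis reduces any collective measurement to classical coin samples.
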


\paragraph{Results and uses.}
The central resource distinction is between measuring one preparation at a time and preserving quantum coherence across preparations. The lower bound applies even when a single-copy detector is otherwise unrestricted: improving its classical feedback or post-processing cannot remove the worst-case factor $m$. It is a statement about learning the whole state in trace distance, not about estimating one chosen energy or correlation function. Table~\ref{tab:rates} collects the guarantees. Two results can also be used independently of the lower bound: Proposition~\ref{prop:A} converts a covariance error into a state error, and Corollary~\ref{cor:thermal} makes the conversion explicit for quadratic Gibbs states. Proposition~\ref{prop:fisher} supplies a measurement-independent constraint for multiparameter estimation. Section~\ref{sec:discussion} explains the resulting applications and their limits.

\begin{table}[t]
\centering\small
\begin{tabular}{@{}>{\raggedright\arraybackslash}p{0.29\textwidth}>{\raggedright\arraybackslash}p{0.40\textwidth}>{\raggedright\arraybackslash}p{0.21\textwidth}@{}}
\toprule
State class & Single-copy measurements & Collective measurements\\
\midrule
All mixed fermionic Gaussian states
& $\Omega(m^3/\epsilon^2)$; $O(m^4/\epsilon^2)$
& $\Theta(m^2/\epsilon^2)$\\[3pt]
Fermionic class $\Gc$, fixed $0<c<1$
& $\Theta(m^3/\epsilon^2)$
& $O(m^2/\epsilon^2)$\\[3pt]
Passive bosonic class $\Qm$
& $\Omega(m^3/(\epsilon^2\sqrt{\log(m/\epsilon)}))$; $O(m^3/\epsilon^2)$
& $\Theta(m^2/\epsilon^2)$\\
\bottomrule
\end{tabular}
\caption{Copy counts at constant success probability and sufficiently small trace-distance error. Single-copy measurements may be adaptive but retain no quantum memory between copies. The first two rows follow from Theorems~\ref{thm:main} and~\ref{thm:upper}, Remark~\ref{rem:worst}, and the collective results of \cite{CFGLMWW26}. The $\Gc$ row requires the accuracy range in Theorem~\ref{thm:upper}; its implicit constants may depend on fixed $c$. The last row is Corollary~\ref{cor:bos-rate}. Bounds count input copies, not circuit depth or classical computation.}
\label{tab:rates}
\end{table}

\paragraph{Mechanism and credit.}
The proof is information-theoretic and rests on one elementary fact: for every density matrix $\sigma$ on $m$ modes, not necessarily Gaussian,
\begin{equation}\label{eq:cov-intro}
\sum_{a<b}\Tr(\sigma\, i c_a c_b)^2\le m .
\end{equation}
In the language of King, Gosset, Kothari and Babbush \cite{KGKB25}, whose commutation index of a set $S$ of operators is $\Delta(S)=|S|^{-1}\max_\rho\sum_{P\in S}\Tr(P\rho)^2$ (Definition~15 there), \eqref{eq:cov-intro} together with its equality case (Lemma~\ref{lem:cov}) says that the set of the $p$ quadratic Majorana operators has $\Delta=m/p=1/(2m-1)$ exactly; they obtain the order $O(1/m)$ from Linz's evaluation of the Lov\'asz theta function of generalised Johnson graphs \cite[Thm.~1.2]{Lin25} (Theorem~28 of \cite{KGKB25}, which also settles a conjecture of Hastings and O'Donnell \cite{HO22}, converted to a bound on $\Delta$ by their Lemma~18) and use it, with the many-versus-one argument of Chen, Cotler, Huang and Li \cite{CCHL22} (Theorem~16 of \cite{KGKB25}, Theorem~5.5 of \cite{CCHL22}), to prove that any adaptive single-copy protocol estimating all $k$-body Majorana expectation values to additive precision $\eta$ with constant probability needs $\Omega(m^k/\eta^2)$ copies, for every fixed $k\ge1$ (their Theorem~27). We credit the mechanism to them explicitly. Other fermionic single-copy lower bounds are also coordinatewise: Zhao's thesis \cite[Thm.~5.17]{Zha23} gives $\Omega(m\log m/\epsilon^2)$ for estimating all covariance entries with nonadaptive Gaussian (matchgate) measurements, and Koizumi, Wada, Takama and Yoshioka \cite[Thm.~3]{KWTY26} give $\Omega_k(\eta^k/\epsilon^2)$ for the $k$-body reduced density matrices of $\eta$-particle states under adaptive single-copy measurements. Theorem~27 of \cite{KGKB25} does not imply Theorem~\ref{thm:main}: trace-distance accuracy $\epsilon$ guarantees only coordinatewise accuracy $2\epsilon$ for the $\langle ic_ac_b\rangle$, since $|\Tr((\hat\rho-\rho)\,ic_ac_b)|\le\|ic_ac_b\|_{\rm op}\|\hat\rho-\rho\|_1\le2\epsilon$, so the $k=1$ case with $\eta=2\epsilon$ gives $\Omega(m/\epsilon^2)$, below the pure-state $\Omega(m^2/\epsilon^2)$ of \cite[Thm.~4.3]{CFGLMWW26}. What is new here, to our knowledge, is the packaging that turns \eqref{eq:cov-intro} into a trace-distance statement: (i) a classical Fisher-information trace of at most $2m$ for every one-copy POVM, uniformly on a Euclidean ball of Gaussian parameters around the maximally mixed state (the coordinates $\theta\in\R^p$ of $\rho_\theta\propto\exp(\sum_{a<b}\theta_{ab}\,ic_ac_b)$, Section~\ref{sec:setting}; $\theta=0$ is $I/2^m$, which is Gaussian with $\Gamma=0$) (Section~\ref{sec:fisher}); (ii) adaptivity does not enlarge this budget (Section~\ref{sec:adaptive}); (iii) on that ball the trace distance dominates the Euclidean parameter distance by a dimension-free constant, proved with Schatten norms normalised by the dimension, $\|X\|_{q,\tau}=(2^{-m}\Tr|X|^q)^{1/q}$, so that no factor $2^m$ appears, whereas the unnormalised comparison $\|X\|_1\le2^m\|X\|_{\rm op}$ would cost one (Section~\ref{sec:metric}); (iv) a vector van Trees inequality with a cosine-squared prior converts (i)--(iii) into \eqref{eq:main} (Section~\ref{sec:vantrees}). Steps (i), (ii) and (iv) are also the architecture of two general-state bounds posted in the fortnight before this note \cite{KLMR26,NZ26} (Section~\ref{sec:discussion}); what is specific to the Gaussian family, and is the content of this note, is that the budget in (i) is $2m$ rather than exponential in $m$, and that the comparison in (iii) is dimension-free.

\paragraph{What is claimed and what is not.}
Theorem~\ref{thm:main} concerns a fixed copy budget $N$, single-copy measurements with fresh ancillas, classical feedforward, no quantum memory between copies, trace distance, and an accuracy range $\epsilon\le1/3600$ independent of $m$; the constants are bookkeeping choices. The hard instances lie near the maximally mixed Gaussian state, and there the bound is tight: on the Gaussian states $\rho_\Gamma$ with $\Gamma\in\Gc$, the covariance matrices of operator norm at most $1-c$, single-copy matchgate shadows achieve $O(m^3/(c\epsilon^2))$ (Section~\ref{sec:tight}). We do not determine the worst-case single-copy rate for mixed states, which lies between $m^3$ and $m^4$, with the unresolved cases approaching the boundary of the covariance domain (an obstruction to a uniform Frobenius bound is given in Section~\ref{sec:tight}); we treat the bosonic case only for a class of passive states near the vacuum, by a reduction to a general-state block-tomography bound (Section~\ref{sec:bosonic}); and no unconditional experimental demonstration of a collective advantage is claimed: Section~\ref{sec:twomode} gives a two-mode illustration with an explicit two-copy measurement and a hardware calibration of it, a pilot whose pre-registered criterion was not met (one-sided $95\%$ shot-noise lower bound $3.99$ against the threshold $4$) and an independent replication whose criterion was met (lower bound $4.12$), both assuming the inputs were prepared as intended; preparation error and drift are not bounded by the stated confidence.

\section{Setting and normalisation}\label{sec:setting}

Let $c_1,\dots,c_{2m}$ be Majorana operators, $\{c_a,c_b\}=2\delta_{ab}I$, $c_a^\dagger=c_a$, acting on the Fock space of dimension $d=2^m$. Concretely (Jordan--Wigner \cite{JW28,TD02}) one may take, on $(\mathbb C^2)^{\otimes m}$, $c_{2j-1}=Z^{\otimes(j-1)}\otimes X\otimes I^{\otimes(m-j)}$ and $c_{2j}=Z^{\otimes(j-1)}\otimes Y\otimes I^{\otimes(m-j)}$: distinct operators anticommute and each squares to $I$. This $2^m$-dimensional space is the Fock space; mode $j$ is the pair $(c_{2j-1},c_{2j})$, carried by qubit $j$, with annihilation operator $a_j=(c_{2j-1}+ic_{2j})/2=Z^{\otimes(j-1)}\otimes|0\rangle\langle1|\otimes I^{\otimes(m-j)}$; the computational basis is the occupation (Fock) basis, $|1\rangle_j$ meaning mode $j$ occupied, and $ic_{2j-1}c_{2j}=2a_j^\dagger a_j-I$ equals $-Z$ on qubit $j$ in Pauli notation, because $XY=iZ$. Nothing below depends on this choice: every representation of the anticommutation relations on a $2^m$-dimensional space is unitarily equivalent to it, and the proofs use only the relations and $\Tr I=2^m$. For $a<b$ put $B_{ab}=ic_ac_b$. Each $B_{ab}$ is Hermitian, since $B_{ab}^\dagger=-ic_bc_a=ic_ac_b$, and squares to $I$, since $B_{ab}^2=-c_ac_bc_ac_b=c_a^2c_b^2=I$. We use repeatedly that a product $P=c_{e_1}c_{e_2}\cdots c_{e_k}$ of $k$ distinct Majoranas with $k$ even, $2\le k\le2m$, is traceless: $c_{e_1}$ anticommutes with the other $k-1$ factors, so $c_{e_1}Pc_{e_1}=(-1)^{k-1}P=-P$, and $\Tr P=\Tr(c_{e_1}c_{e_1}P)=\Tr(c_{e_1}Pc_{e_1})=-\Tr P$. In particular $B_{ab}$ is traceless, and for $\{a,b\}\ne\{c,d\}$ the product $B_{ab}B_{cd}=-c_ac_bc_cc_d$ is, up to sign, a product of two (if the pairs share an index) or four distinct Majoranas, hence traceless. With the normalised trace $\tau(X)=d^{-1}\Tr X$ the family $\{B_{ab}\}_{a<b}$ is therefore orthonormal for $\langle X,Y\rangle_\tau=\tau(X^\dagger Y)$, the Hilbert--Schmidt inner product divided by $d$. We index the $p$ pairs $a<b$ by $j$ (in lexicographic order, say) when convenient. For $\theta\in\R^p$ define
\begin{equation}\label{eq:family}
K_\theta=\sum_{a<b}\theta_{ab}B_{ab},\qquad z_\theta=\tau(e^{K_\theta}),\qquad f_\theta=e^{K_\theta}/z_\theta,\qquad \rho_\theta=f_\theta/d .
\end{equation}
In particular $\tau(K_\theta K_\phi)=\sum_{a<b}\theta_{ab}\phi_{ab}=\theta\cdot\phi$, so $\theta\mapsto K_\theta$ preserves inner products; this is why $\|\theta\|_2$ is the parameter norm throughout. A state $\rho$ is fermionic Gaussian if it obeys Wick's theorem: writing $\Gamma_{ab}=\Tr(\rho\,ic_ac_b)$ for $a\ne b$ and $\Gamma_{aa}=0$ for its covariance matrix (a real antisymmetric matrix, since $ic_ac_b$ is Hermitian and $ic_bc_a=-ic_ac_b$), the expansion of $\rho$ in the products $c_S=c_{s_1}\cdots c_{s_{|S|}}$ over subsets $S$ of even size has coefficient $2^{-m}$ times the Pfaffian of the principal submatrix $\Gamma_S$, up to the phase fixed by the ordering convention \cite{Bra05,ST22}. Since the $c_S$ are orthogonal for $\tau$, a Gaussian state is the unique state with these moments, so it is determined by $\Gamma$ and is a polynomial in it; in the rotated frame of Lemma~\ref{lem:normalform} it is the product $\prod_k\tfrac12(I+\lambda_kZ_k)$ of single-mode states used in Section~\ref{sec:tight}, with $\lambda_k=1$ giving pure modes. Conversely every such product, $0\le\lambda_k\le1$, is Gaussian: expanding it, $\Tr(\rho\,c_T)$ for $|T|$ even vanishes unless $T$ is a union of mode pairs $\{2k-1,2k\}$, $k\in S$, and then equals $\prod_{k\in S}\lambda_k$ up to the ordering phase, which is the Pfaffian of the corresponding principal submatrix of $\bigoplus_k\lambda_k\begin{psmallmatrix}0&1\\-1&0\end{psmallmatrix}$ (zero when $T$ is not such a union, that submatrix having a zero row). A state is even if it commutes with the parity operator $P=(-i)^mc_1c_2\cdots c_{2m}$, as every Gaussian state does. Each $\rho_\theta$ is a full-rank even fermionic Gaussian state. Indeed, in the rotated frame of Lemma~\ref{lem:normalform} the $Z_j$ commute and square to $I$, so $e^{K_\theta}=\prod_je^{\lambda_jZ_j}=\prod_j(\cosh\lambda_j+\sinh\lambda_j\,Z_j)$, and $z_\theta=\prod_j\cosh\lambda_j$ by \eqref{eq:cosh} at $t=1$; hence
\begin{equation}\label{eq:product}
\rho_\theta=\prod_{j=1}^m\tfrac12\big(I+\tanh(\lambda_j)\,Z_j\big),\qquad \Tr(\rho_\theta Z_j)=\tanh\lambda_j ,
\end{equation}
the second identity because only the identity term in the expansion of $\rho_\theta Z_j$ has nonzero trace. This is a product of the form above with $\lambda_k$ replaced by $\tanh\lambda_k\in[0,1)$, so $\rho_\theta$ is Gaussian, of full rank (its eigenvalues are $2^{-m}\prod_j(1\pm\tanh\lambda_j)>0$), and even; $\rho_0=I/d$ is the maximally mixed state. A lower bound proved on this subfamily holds a fortiori for the class of all mixed Gaussian states. Its covariance matrix is $\Gamma(\rho_\theta)_{ab}=\Tr(\rho_\theta B_{ab})$ for $a<b$, extended antisymmetrically. In the rotated frame $\tilde c_a=\sum_bO_{ab}c_b$ of Lemma~\ref{lem:normalform}, $\Tr(\rho_\theta\,i\tilde c_a\tilde c_b)$ vanishes for $a<b$ unless $\{a,b\}=\{2j-1,2j\}$, because otherwise every term of the expansion of $\rho_\theta\,i\tilde c_a\tilde c_b$ is a product of an even number, at least two, of distinct Majoranas; so by \eqref{eq:product}
\begin{equation}\label{eq:covtheta}
\Gamma(\rho_\theta)=O^T\Big(\bigoplus_{j=1}^m\tanh(\lambda_j)\,J\Big)O,\qquad J=\begin{psmallmatrix}0&1\\-1&0\end{psmallmatrix},\qquad \|\Gamma(\rho_\theta)\|_{\rm op}=\max_j\tanh\lambda_j<1,
\end{equation}
since $c_a=\sum_cO_{ca}\tilde c_c$. Thus $\tanh\lambda_1,\dots,\tanh\lambda_m$ are the normal-form values of $\Gamma(\rho_\theta)$ in the sense of Section~\ref{sec:tight}. We use the normalised Schatten norms $\|X\|_{q,\tau}=(\tau|X|^q)^{1/q}$, $1\le q<\infty$, with $|X|=(X^\dagger X)^{1/2}$. If $s_1,\dots,s_d$ are the singular values of $X$ then $\tau|X|^q=d^{-1}\sum_is_i^q$, so $\|X\|_{q,\tau}=d^{-1/q}\|X\|_q$ in terms of the ordinary Schatten norm $\|X\|_q=(\sum_is_i^q)^{1/q}$, and $\|X\|_{q,\tau}$ is the power mean of order $q$ of the singular values under the uniform distribution on $\{1,\dots,d\}$. Two consequences are used throughout. First, $q\mapsto\|X\|_{q,\tau}$ is nondecreasing, by the power-mean (Jensen) inequality; this is the reverse of the ordering $\|X\|_1\ge\|X\|_2\ge\cdots$ of the unnormalised norms, the difference being the factor $d^{-1/q}$. Second, H\"older's inequality $\|XY\|_{r,\tau}\le\|X\|_{q_1,\tau}\|Y\|_{q_2,\tau}$ for $1/r=1/q_1+1/q_2$, which follows from the Schatten-norm H\"older inequality $\|XY\|_r\le\|X\|_{q_1}\|Y\|_{q_2}$ \cite[Ch.~IV]{Bha97} on multiplying both sides by $d^{-1/r}=d^{-1/q_1}d^{-1/q_2}$; iterating it gives the three-factor form $\|XYW\|_{1,\tau}\le\|X\|_{4,\tau}\|Y\|_{2,\tau}\|W\|_{4,\tau}$ used in Section~\ref{sec:metric}, and applying it with $r=1$ to $|X|^q=|X|^{(1-t)q}\,|X|^{tq}$, with exponents $q_0/((1-t)q)$ and $q_1/(tq)$, gives the interpolation inequality $\|X\|_{q,\tau}\le\|X\|_{q_0,\tau}^{1-t}\|X\|_{q_1,\tau}^{t}$ for $1/q=(1-t)/q_0+t/q_1$, $0\le t\le1$, also used there. Since $\rho_\theta=f_\theta/d$ and $\|X\|_{1,\tau}=\|X\|_1/d$,
\begin{equation}\label{eq:T-normalised}
T(\rho_\theta,\rho_\phi)=\tfrac12\|f_\theta-f_\phi\|_{1,\tau}.
\end{equation}
The operator norms of $K_\theta$ and $f_\theta$ grow with $m$ and never enter the argument: by Lemma~\ref{lem:normalform} and \eqref{eq:product}, $\|K_\theta\|_{\rm op}=\sum_j\lambda_j$, which equals $\sqrt m\,\|\theta\|_2$ when all $\lambda_j$ coincide (Cauchy--Schwarz), and $\|f_\theta\|_{\rm op}=\prod_j(1+\tanh\lambda_j)$, which grows as $\exp(r\sqrt m+O(r^2))$ when $\lambda_j=r/\sqrt m$ with fixed $r>0$; every estimate below is a normalised-trace estimate, in which no such factor appears.

\begin{lemma}[normal form]\label{lem:normalform}
Let $\Theta$ be the real antisymmetric matrix with $\Theta_{ab}=\theta_{ab}$ for $a<b$. There is a real orthogonal $2m\times2m$ matrix $O$ such that $\tilde c_a=\sum_bO_{ab}c_b$, $a=1,\dots,2m$, are again Majorana operators and
\[
K_\theta=\sum_{j=1}^m\lambda_jZ_j,\qquad Z_j=i\tilde c_{2j-1}\tilde c_{2j},\qquad \lambda_j\ge0,\qquad \sum_j\lambda_j^2=\|\theta\|_2^2 ,
\]
where the $Z_j$ are commuting involutions: $Z_j^2=I$, and $Z_jZ_k=Z_kZ_j$ for $j\ne k$ because each Majorana in $Z_k$ anticommutes with both Majoranas in $Z_j$. They play the role of the Pauli $Z$ operators of $m$ qubits (for $O=I$ in the Jordan--Wigner representation, $Z_j=-Z$ on qubit $j$). The joint spectral projectors $\Pi_s=\prod_j\tfrac12(I+s_jZ_j)$, $s\in\{\pm1\}^m$, that is, the projectors onto the joint eigenspaces $\{Z_j=s_j\ \text{for all }j\}$, are mutually orthogonal, sum to $I$, satisfy $K_\theta\Pi_s=(\sum_js_j\lambda_j)\Pi_s$, and all have trace one, hence rank one: $K_\theta$ has one eigenvalue $\sum_js_j\lambda_j$ for each sign pattern $s$ (distinct patterns may give equal values, which then add their multiplicities). Consequently, for every function $h:\R\to\mathbb C$,
\begin{equation}\label{eq:rademacher}
\tau\,h(K_\theta)=\E_s\,h\Big(\sum_j s_j\lambda_j\Big),\qquad s\ \text{uniform on }\{\pm1\}^m,
\end{equation}
and in particular
\begin{equation}\label{eq:cosh}
\tau e^{tK_\theta}=\prod_j\cosh(t\lambda_j)\le e^{t^2\|\theta\|_2^2/2},\qquad z_\theta\ge1 .
\end{equation}
\end{lemma}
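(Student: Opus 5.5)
The plan is to reduce $K_\theta$ to a canonical form with the real Schur decomposition of the antisymmetric matrix $\Theta$, and then read off the spectral statements from the Clifford algebra relations.

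\emph{Rotation.} Since $\Theta$ is real antisymmetric, there is a real orthogonal $O$ with $O\Theta O^T=\bigoplus_j\lambda_jJ$ and $\lambda_j\ge0$. If some $\lambda_j$ would come out negative, compose $O$ with a swap of the corresponding pair of rows. Set $\tilde c_a=\sum_bO_{ab}c_b$. These operators are Hermitian, and
\[
\{\tilde c_a,\tilde c_b\}=\sum_{cd}O_{ac}O_{bd}\,2\delta_{cd}I=2(OO^T)_{ab}I=2\delta_{ab}I,
\]
so they are again Majorana operators. Write $K_\theta=\tfrac i2\sum_{a,b}\Theta_{ab}c_ac_b$; this uses antisymmetry of $\Theta$ together with the anticommutation of distinct Majoranas, and the diagonal terms vanish. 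Substituting $c=O^T\tilde c$ gives $K_\theta=\tfrac i2\sum_{a,b}(O\Theta O^T)_{ab}\tilde c_a\tilde c_b=\sum_j\lambda_j\,i\tilde c_{2j-1}\tilde c_{2j}$. The identity $\sum_j\lambda_j^2=\|\theta\|_2^2$ holds because $\|\Theta\|_F^2=2\|\theta\|_2^2$ is invariant under orthogonal conjugation, while $\|\bigoplus_j\lambda_jJ\|_F^2=2\sum_j\lambda_j^2$. The facts that $Z_j^2=I$ and that the $Z_j$ commute were already shown in the statement.

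\emph{Projectors.} Each factor $\tfrac12(I+s_jZ_j)$ is a projector because $Z_j^2=I$, and the factors commute, so $\Pi_s$ is a projector. For $s\ne s'$, pick $j$ with $s_j\ne s'_j$; then the product contains $\tfrac12(I+Z_j)\cdot\tfrac12(I-Z_j)=0$, so distinct projectors are orthogonal. Expanding the product and summing over $s$ kills every term containing some $Z_j$, which gives $\sum_s\Pi_s=I$. Since $Z_j\Pi_s=s_j\Pi_s$, we get $K_\theta\Pi_s=(\sum_js_j\lambda_j)\Pi_s$.

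\emph{Traces.} This is the step that needs care. Expand $\Pi_s=2^{-m}\sum_{A\subseteq[m]}\prod_{j\in A}s_jZ_j$. For $A\ne\emptyset$, the product $\prod_{j\in A}Z_j$ is, up to a phase, a product of $2|A|$ distinct Majoranas $\tilde c$. Such a product is traceless by the argument of Section~\ref{sec:setting}, which uses only the anticommutation relations and therefore applies to the $\tilde c$. Hence $\Tr\Pi_s=2^{-m}\Tr I=1$, so each $\Pi_s$ has rank one. There are $2^m=d$ of them, so they form a complete set of rank-one spectral projectors of $K_\theta$.

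\emph{Consequences.} Any function of $K_\theta$ decomposes as $h(K_\theta)=\sum_sh(\sum_js_j\lambda_j)\Pi_s$. Taking $\tau$ gives \eqref{eq:rademacher}. Applying this with $h(x)=e^{tx}$, the independence of the signs $s_j$ yields $\tau e^{tK_\theta}=\prod_j\cosh(t\lambda_j)$. The bound $\cosh x\le e^{x^2/2}$, by termwise comparison of the two power series, gives the Gaussian upper bound. Finally, $\cosh\ge1$ gives $z_\theta\ge1$ at $t=1$.
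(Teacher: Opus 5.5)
Your proposal is correct and follows the paper's proof essentially step for step. The steps are: the real normal form of $\Theta$ with a row swap to make $\lambda_j\ge0$; the change of Majorana frame via $K_\theta=\tfrac i2\sum_{ab}\Theta_{ab}c_ac_b$; Frobenius-norm invariance giving $\sum_j\lambda_j^2=\|\theta\|_2^2$; expanding $\Pi_s$ and using tracelessness of products of distinct Majoranas to get rank one; and the spectral decomposition yielding \eqref{eq:rademacher} and the $\cosh$ bound. The only differences are small: you cite the real normal form as standard where the paper recalls its eigenvector proof, and you leave implicit that each $Z_j$ is Hermitian (so the $\Pi_s$ are orthogonal projectors), which follows directly from the Hermiticity and anticommutation of the $\tilde c_a$.
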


\begin{proof}
Every real antisymmetric $\Theta$ satisfies $O\Theta O^{T}=\bigoplus_j\lambda_jJ$ with $O$ real orthogonal, $J=\begin{psmallmatrix}0&1\\-1&0\end{psmallmatrix}$ and $\lambda_j\ge0$; this is the real normal form of an antisymmetric matrix \cite{You61}, and for completeness we recall the argument. $i\Theta$ is Hermitian, so its eigenvalues are real, and they come in pairs $\pm\lambda$ because $\Theta$ is real: $i\Theta u=\lambda u$ implies $i\Theta\bar u=-\lambda\bar u$. For $\lambda>0$ and a unit eigenvector $u=x+iy$ with $x,y$ real, $\Theta x=\lambda y$ and $\Theta y=-\lambda x$, and $u\perp\bar u$ gives $\|x\|=\|y\|=1/\sqrt2$, $x\perp y$; in the ordered orthonormal basis $(\sqrt2\,y,\sqrt2\,x)$ of the plane they span, $\Theta$ acts as $\lambda J$. Planes coming from orthogonal eigenvectors are orthogonal, and the kernel of $\Theta$, which has even dimension, contributes blocks with $\lambda=0$; the rows of $O$ are these basis vectors. Interchanging the two vectors of a block flips the sign of its $\lambda$ and keeps $O$ orthogonal, so $\lambda_j\ge0$ is a choice, not a constraint. Then $\tilde c_a^\dagger=\tilde c_a$, $\{\tilde c_a,\tilde c_b\}=2(OO^T)_{ab}=2\delta_{ab}$, and writing $K_\theta=\tfrac i2\sum_{ab}\Theta_{ab}c_ac_b=\tfrac i2\sum_{cd}(O\Theta O^T)_{cd}\tilde c_c\tilde c_d$ gives $K_\theta=\sum_j\lambda_jZ_j$. Moreover $\sum_j\lambda_j^2=\tfrac12\|\Theta\|_F^2=\|\theta\|_2^2$. Since the $Z_j$ commute, $Z_j\Pi_s=s_j\Pi_s$, because $Z_j\cdot\tfrac12(I+s_jZ_j)=\tfrac12(Z_j+s_jI)=s_j\cdot\tfrac12(I+s_jZ_j)$; hence $K_\theta\Pi_s=(\sum_js_j\lambda_j)\Pi_s$. The $\Pi_s$ are Hermitian, $\Pi_s\Pi_{s'}=\delta_{ss'}\Pi_s$, since $\tfrac12(I+s_jZ_j)\cdot\tfrac12(I+s'_jZ_j)$ equals $0$ for $s'_j=-s_j$ and $\tfrac12(I+s_jZ_j)$ for $s'_j=s_j$, and $\sum_s\Pi_s=\prod_j\sum_{s_j=\pm1}\tfrac12(I+s_jZ_j)=I$. Expanding the product, $\Pi_s=2^{-m}\sum_{S\subset\{1,\dots,m\}}\prod_{j\in S}s_j\prod_{j\in S}Z_j$, and for $S\ne\emptyset$ the operator $\prod_{j\in S}Z_j$ is $i^{|S|}$ times a product of $2|S|$ distinct Majoranas, hence traceless by the conjugation argument at the start of this section; so $\Tr\Pi_s=2^{-m}\Tr I=1$ and each $\Pi_s$ is a rank-one projector. Therefore $K_\theta=\sum_s(\sum_js_j\lambda_j)\Pi_s$ is a spectral decomposition into $2^m$ mutually orthogonal rank-one projectors, $h(K_\theta)=\sum_sh(\sum_js_j\lambda_j)\Pi_s$ for every $h:\R\to\mathbb C$, and applying $\tau$ gives \eqref{eq:rademacher}. Finally, \eqref{eq:rademacher} with $h(x)=e^{tx}$ and the independence of the $s_j$ give $\tau e^{tK_\theta}=\prod_j\E_{s_j}e^{ts_j\lambda_j}=\prod_j\cosh(t\lambda_j)$ (equivalently, $e^{tK_\theta}=\prod_j(\cosh(t\lambda_j)+\sinh(t\lambda_j)Z_j)$ and the cross terms are traceless), and $\cosh x\le e^{x^2/2}$ holds termwise in the Taylor series because $(2k)!\ge2^kk!$; at $t=1$ this gives $z_\theta=\prod_j\cosh\lambda_j\ge1$.
\end{proof}

\section{The Fisher-information budget of a single copy}\label{sec:fisher}

\begin{lemma}[covariance bound for arbitrary states]\label{lem:cov}
Let $\sigma$ be any density matrix on the Fock space and $b_{ab}=\Tr(\sigma B_{ab})$ for $a<b$. Then $\sum_{a<b}b_{ab}^2\le m$. Equality holds for the Fock vacuum, so the constant is sharp.
\end{lemma}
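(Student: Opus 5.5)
Consider the Hermitian operator $X=\sum_{a<b}b_{ab}B_{ab}$. By linearity of the trace, $\Tr(\sigma X)=\sum_{a<b}b_{ab}^2=:S$. We then have $S\le\|X\|_{\rm op}$. So the plan is to bound the operator norm of a quadratic Majorana Hamiltonian in terms of the Frobenius norm of its coefficient vector, and the natural tool is Lemma~\ref{lem:normalform} applied with $\theta=b$.

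Lemma~\ref{lem:normalform} gives $X=K_b=\sum_j\lambda_jZ_j$ with $\lambda_j\ge0$ and $\sum_j\lambda_j^2=\|b\|_2^2=S$. Its spectrum is $\{\sum_js_j\lambda_j\}$, so $\|X\|_{\rm op}=\sum_j\lambda_j\le\sqrt m\,(\sum_j\lambda_j^2)^{1/2}=\sqrt{mS}$ by Cauchy--Schwarz. Combining the two bounds gives $S\le\sqrt{mS}$, hence $S\le m$. The case $S=0$ is trivial.

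For sharpness, take the Fock vacuum. In the Jordan--Wigner representation this is $|0\cdots0\rangle$, on which $B_{2j-1,2j}=ic_{2j-1}c_{2j}=-Z_j$ has expectation $-1$ for each $j$. Every other $B_{ab}$ has expectation zero, since it flips an occupation (contains an $X$ or $Y$ on some qubit). Thus the sum equals $m$.

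There is no real obstacle here. The only step needing care is the identification $\|K_b\|_{\rm op}=\sum_j\lambda_j$ together with $\sum_j\lambda_j^2=\|b\|_2^2$, and both are supplied by Lemma~\ref{lem:normalform}, in particular by the $\tfrac12\|\Theta\|_F^2$ normalisation.
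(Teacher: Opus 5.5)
Your proof is correct, but it argues dually to the paper. The paper's proof never touches the $2^m$-dimensional spectrum. It forms the $2m\times2m$ two-point matrix $M_{ab}=\Tr(\sigma c_ac_b)=I-iG$, gets $M\ge0$ from $u^\dagger Mu=\Tr(\sigma A^\dagger A)$ with $A=\sum_bu_bc_b$, and conjugates to get $I+iG\ge0$. It concludes $\|G\|_{\rm op}\le1$ and finishes with $\|G\|_F^2\le\operatorname{rank}(G)\,\|G\|_{\rm op}^2\le2m$.

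You instead pair the expectation vector with itself, $S=\Tr(\sigma K_b)\le\|K_b\|_{\rm op}$, and control the operator norm of the quadratic Hamiltonian through Lemma~\ref{lem:normalform}, which is proved earlier, so there is no circularity. The fact you use, $\|K_\theta\|_{\rm op}=\sum_j\lambda_j=\tfrac12\|\Theta\|_1$, is the trace-norm dual of the paper's $\|G\|_{\rm op}\le1$. Your Cauchy--Schwarz over the $m$ normal-form values plays the role of the paper's rank bound.

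The paper's route yields a stronger, reusable intermediate fact from the anticommutation relations alone: the covariance matrix of every state, Gaussian or not, lies in the operator-norm unit ball, the domain on which $\Gc$ is later defined. Your route is shorter given the normal form. It is the generic bound $\sum_j\Tr(\sigma P_j)^2\le\sup_{\|\alpha\|_2=1}\|\sum_j\alpha_jP_j\|_{\rm op}^2$, valid for any operator family, specialised to the quadratics. It is tight here because $\sup_{\|\alpha\|_2=1}\|K_\alpha\|_{\rm op}=\sqrt m$.

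You need only the inequality $\|\sum_j\lambda_jZ_j\|_{\rm op}\le\sum_j\lambda_j$, which follows from the triangle inequality and $Z_j^2=I$. So the rank-one spectral decomposition in Lemma~\ref{lem:normalform} is not needed, only the rotation and $\sum_j\lambda_j^2=\|\theta\|_2^2$. Your equality check for the vacuum matches the paper's.
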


\begin{proof}
Let $M_{ab}=\Tr(\sigma c_ac_b)$, a $2m\times2m$ matrix. Then $M_{aa}=1$, $M_{ba}=\overline{M_{ab}}$, and for $a\ne b$ the operator $c_ac_b$ is anti-Hermitian so $M_{ab}$ is purely imaginary; thus $M=I-iG$ with $G$ real antisymmetric and $G_{ab}=b_{ab}$ for $a<b$. For $u\in\mathbb C^{2m}$ put $A=\sum_bu_bc_b$; then $u^\dagger Mu=\Tr(\sigma A^\dagger A)\ge0$, so $I-iG\ge0$. Complex conjugation preserves positivity, so $I+iG\ge0$ as well, and the Hermitian matrix $iG$ has spectrum in $[-1,1]$. Hence $\|G\|_F^2\le\operatorname{rank}(G)\|G\|_{\rm op}^2\le2m$ and $\sum_{a<b}b_{ab}^2=\tfrac12\|G\|_F^2\le m$. For the Fock vacuum $\langle ic_{2j-1}c_{2j}\rangle=\pm1$ for each $j$ and all other pair expectations vanish.
\end{proof}

Fix $\theta$ and a POVM $\{M_x\}$ on the Fock space, with outcome probabilities $q_\theta(x)=\Tr(\rho_\theta M_x)$; outcomes with $M_x=0$ are discarded, and since $\rho_\theta$ has full rank every remaining outcome has $q_\theta(x)>0$. Continuous outcome spaces are handled by integrating against the POVM measure, every step being linear in $M_x$, and an ancilla in a $\theta$-independent state $\eta$ measured jointly with the copy induces the POVM $\Tr_{\rm anc}[(I\otimes\eta^{1/2})M_x(I\otimes\eta^{1/2})]$ on the copy, so ancillas are absorbed into $\{M_x\}$. The classical Fisher information matrix is $I_M(\theta)_{jl}=\sum_xq_\theta(x)^{-1}\Tr(M_x\partial_j\rho_\theta)\Tr(M_x\partial_l\rho_\theta)$ and $\tr I_M(\theta)$ is its trace over the $p$ coordinates.

\begin{lemma}[whitened derivative]\label{lem:whitened}
Let $\mu_j=\Tr(\rho_\theta B_j)$ and $R_j=\rho_\theta^{-1/2}(\partial_j\rho_\theta)\rho_\theta^{-1/2}$. Then
\begin{equation}\label{eq:whitened}
R_j=\Tcal_\theta(B_j)-\mu_jI,\qquad \Tcal_\theta=\int_{-1/2}^{1/2}e^{s\,\ad K_\theta}\,ds=\frac{\sinh(\ad K_\theta/2)}{\ad K_\theta/2},
\end{equation}
where $\ad K(X)=[K,X]$ and $e^{s\,\ad K}(X)=e^{sK}Xe^{-sK}$. Here $\ad K$ is a linear map on operators; functions of it, such as $e^{s\,\ad K}$ and $g(\ad K)$ with $g(x)=\sinh(x/2)/(x/2)$, are defined by their everywhere-convergent power series, and $e^{s\,\ad K}(X)=e^{sK}Xe^{-sK}$ because both sides solve $Y'(s)=[K,Y(s)]$ with $Y(0)=X$. On the space where $\ad K$ is self-adjoint (Lemma~\ref{lem:adK}) $g(\ad K)$ multiplies each eigenvector by $g$ of its eigenvalue.
\end{lemma}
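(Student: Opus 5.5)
The plan is to differentiate $\rho_\theta=e^{K_\theta}/(d\,z_\theta)$ directly, using the Duhamel formula for the derivative of an exponential, and then conjugate by $\rho_\theta^{-1/2}$.

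\emph{Step 1: the exponential.} First I will show
\[
\partial_j e^{K_\theta}=\int_0^1 e^{(1-u)K_\theta}\,B_j\,e^{uK_\theta}\,du .
\]
I would prove this by noting that $\partial_jK_\theta=B_j$, and that $F(t)=e^{tK_{\theta+\delta e_j}}-e^{tK_\theta}$ satisfies a variation-of-constants ODE. Alternatively, one can expand the power series $\sum_n K^n/n!$ and differentiate termwise; the series converges absolutely in finite dimension.

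\emph{Step 2: the partition function.} Next, $\partial_j z_\theta=\tau(\partial_j e^{K_\theta})=\tau(B_je^{K_\theta})$ by cyclicity of $\tau$. Hence $\partial_j\log z_\theta=\tau(B_jf_\theta)=\Tr(\rho_\theta B_j)=\mu_j$, and
\[
\partial_j\rho_\theta=\frac{1}{dz_\theta}\,\partial_je^{K_\theta}-\mu_j\rho_\theta .
\]

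\emph{Step 3: whitening.} Since $\rho_\theta^{\pm1/2}=(dz_\theta)^{\mp1/2}e^{\pm K_\theta/2}$, conjugation gives
\[
R_j=\int_0^1 e^{(1/2-u)K_\theta}\,B_j\,e^{-(1/2-u)K_\theta}\,du-\mu_jI .
\]
The normalisations cancel exactly, and the second term whitens to $\mu_jI$. Substituting $s=1/2-u$ turns the integral into $\int_{-1/2}^{1/2}e^{sK}B_je^{-sK}\,ds=\int_{-1/2}^{1/2}e^{s\,\ad K}(B_j)\,ds$. The identity $e^{s\,\ad K}(X)=e^{sK}Xe^{-sK}$ holds because both sides solve $Y'=[K,Y]$ with $Y(0)=X$, and solutions of this linear ODE are unique.

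\emph{Step 4: closed form.} Finally, $\int_{-1/2}^{1/2}e^{sx}\,ds=\sinh(x/2)/(x/2)=g(x)$, with $g(0)=1$. Integrating the power series of $e^{s\,\ad K}$ termwise (it converges uniformly for $s\in[-1/2,1/2]$) gives $\sum_n \frac{(\ad K)^{2n}}{4^n(2n+1)!}=g(\ad K)$.

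The remark about eigenvectors follows from the functional calculus of the power series. On the subspace where $\ad K$ is self-adjoint (Lemma~\ref{lem:adK}), an eigenvector $X$ with $\ad K(X)=\omega X$ satisfies $(\ad K)^nX=\omega^nX$, so $g(\ad K)X=g(\omega)X$. One could also check this via the spectral projectors $\Pi_s$ of Lemma~\ref{lem:normalform}: $\Pi_sX\Pi_{s'}$ is an eigenvector with eigenvalue equal to the difference of the corresponding eigenvalues of $K$.

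I expect no real obstacle here. The only step requiring care is Step 1, the justification of the Duhamel formula, and the power series argument settles it cleanly.
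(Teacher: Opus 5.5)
Your proof is correct and follows the paper's argument essentially step for step. Both use the Duhamel formula for $\partial_j e^{K_\theta}$, get $\partial_j z_\theta = z_\theta\mu_j$ by cyclicity, whiten with $\rho_\theta^{-1/2}=(dz_\theta)^{1/2}e^{-K_\theta/2}$, shift to the symmetric interval $[-1/2,1/2]$, and integrate the power series in $\ad K_\theta$ termwise. The only difference is cosmetic: your Duhamel integrand $e^{(1-u)K}B_je^{uK}$ is the paper's $e^{uK}B_je^{(1-u)K}$ after $u\mapsto 1-u$.
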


\begin{proof}
By the Duhamel formula, $\partial_je^{K}=\int_0^1e^{sK}B_je^{(1-s)K}ds$, and by cyclicity $\partial_jz_\theta=\tau(B_je^K)=z_\theta\mu_j$. Hence $\partial_j\rho_\theta=(dz_\theta)^{-1}\partial_je^K-\mu_j\rho_\theta$. With $\rho_\theta^{-1/2}=(dz_\theta)^{1/2}e^{-K/2}$ this gives
\[
R_j=\int_0^1e^{(s-1/2)K}B_je^{-(s-1/2)K}ds-\mu_jI,
\]
which is \eqref{eq:whitened} after the substitution $u=s-1/2$; and $\int_{-1/2}^{1/2}e^{ux}du=\sinh(x/2)/(x/2)$, applied termwise to the power series in $x=\ad K_\theta$.
\end{proof}

The individual conjugates $e^{uK}B_je^{-uK}$ are not Hermitian for $u\ne0$ (the adjoint is the conjugate at $-u$); the symmetric integral in \eqref{eq:whitened} is.

\begin{lemma}[$\ad K$ on the quadratic span]\label{lem:adK}
Let $V=\operatorname{span}_{\mathbb C}\{B_{ab}\}_{a<b}$. Then $\ad K_\theta$ maps $V$ to $V$, is self-adjoint for $\langle\cdot,\cdot\rangle_\tau$, and its eigenvalues on $V$ are $0$ and $\pm2\lambda_j\pm2\lambda_k$ ($j<k$), with $\lambda_j$ as in Lemma~\ref{lem:normalform}. Consequently $\|\ad K_\theta|_V\|\le2\sqrt2\,\|\theta\|_2\le4\|\theta\|_2$. The map $\Tcal_\theta$ of \eqref{eq:whitened} preserves $V$, is self-adjoint and positive on $V$, sends Hermitian elements of $V$ to Hermitian elements, and hence has a real symmetric matrix $(\Tcal_{kj})$ in the orthonormal basis $\{B_j\}$; if $\|\theta\|_2\le r$ then
\begin{equation}\label{eq:Cr}
\|\Tcal_\theta|_V\|\le C_r:=\frac{\sinh(2r)}{2r}\quad(C_0:=1).
\end{equation}
\end{lemma}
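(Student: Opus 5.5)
The idea is to pass to the rotated Majorana frame of Lemma~\ref{lem:normalform}, in which $K_\theta=\sum_j\lambda_jZ_j$ with $Z_j=i\tilde c_{2j-1}\tilde c_{2j}$. Since $O$ is orthogonal, $\tilde B_{ab}=i\tilde c_a\tilde c_b$ ($a<b$) is again a $\tau$-orthonormal basis of the same space $V$: each $\tilde B_{ab}$ is a real linear combination of the $B_{cd}$, and conversely. So it suffices to diagonalise $\ad K_\theta$ in the rotated basis.

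\textbf{Diagonalising $\ad K_\theta$.} Introduce the mode-wise complex combinations $\tilde a_j=(\tilde c_{2j-1}+i\tilde c_{2j})/2$ and $\tilde a_j^\dagger$. Equivalently, write each $\tilde c$ as $\tilde a+\tilde a^\dagger$ type combinations. Using $[Z_j,\tilde a_k]=-2\delta_{jk}\tilde a_k$ and $[Z_j,\tilde a_k^\dagger]=2\delta_{jk}\tilde a_k^\dagger$ (a direct anticommutation computation), one gets $[K_\theta,\tilde a_k^{\#}]=\pm2\lambda_k\tilde a_k^{\#}$. Then for quadratic products, $\ad K$ acts as a derivation, so it is additive on eigenvalues:
\begin{itemize}
\item $\tilde a_j\tilde a_k$ and $\tilde a_j^\dagger\tilde a_k^\dagger$ ($j<k$) are eigenvectors with eigenvalues $\mp2(\lambda_j+\lambda_k)$;
\item $\tilde a_j^\dagger\tilde a_k$ and $\tilde a_k^\dagger\tilde a_j$ ($j<k$) have eigenvalues $\pm2(\lambda_j-\lambda_k)$;
\item $Z_j$ commutes with $K_\theta$, giving eigenvalue $0$.
\end{itemize}
These $4\binom m2+m=p$ operators lie in $V$ and span it, since each $\tilde B_{ab}$ is a combination of them. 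Hence $\ad K_\theta$ maps $V$ to $V$ with the stated spectrum.

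\textbf{Self-adjointness and norm bound.} Self-adjointness holds on the whole operator space: $\langle X,[K,Y]\rangle_\tau=\tau(X^\dagger KY-X^\dagger YK)=\tau([K,X]^\dagger Y)$ by cyclicity, using $K^\dagger=K$. The eigenvectors above are therefore mutually orthogonal, or can be chosen so. The norm is the largest eigenvalue modulus, $2(\lambda_j+\lambda_k)\le2\sqrt2\sqrt{\lambda_j^2+\lambda_k^2}\le2\sqrt2\|\theta\|_2$ by Cauchy--Schwarz together with $\sum\lambda^2=\|\theta\|_2^2$.

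\textbf{Properties of $\Tcal_\theta$.} $\Tcal_\theta=g(\ad K_\theta)$ with $g(x)=\sinh(x/2)/(x/2)$, which is even, positive and increasing in $|x|$. It preserves $V$, is self-adjoint there, and has eigenvalues $g(\mu)>0$, so it is positive. Hermiticity preservation follows from the integral form: $(e^{sK}Xe^{-sK})^\dagger=e^{-sK}Xe^{sK}$ for Hermitian $X$, and the symmetric integral over $s\in[-1/2,1/2]$ is invariant under $s\mapsto-s$. The Hermitian elements of $V$ are exactly the real span of the $B_j$, since each $B_j$ is Hermitian and the $B_j$ are orthonormal. So $\Tcal_\theta B_j=\sum_k\Tcal_{kj}B_k$ with real $\Tcal_{kj}$, and the matrix is symmetric because the operator is self-adjoint in an orthonormal basis.

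\textbf{Bound \eqref{eq:Cr}.} We have $\|\Tcal_\theta|_V\|=\max g(\mu)\le g(4r)=\sinh(2r)/(2r)$, using $|\mu|\le 2\sqrt2 r\le4r$ and the monotonicity of $g$. The only delicate point in the whole argument is the spanning/eigenvector bookkeeping in the first step: the ladder commutators and their signs need care. Everything after that is routine.
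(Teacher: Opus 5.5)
Your proof is correct and follows essentially the paper's argument. Both work in the normal-form frame $K_\theta=\sum_j\lambda_jZ_j$, obtain the spectrum $0$, $\pm2\lambda_j\pm2\lambda_k$ on $V$, bound the norm by $2\max_{j<k}(\lambda_j+\lambda_k)\le2\sqrt2\|\theta\|_2$, and handle $\Tcal_\theta=g(\ad K_\theta)$ through the evenness and monotonicity of $g$ and the $s\mapsto-s$ symmetry of the integral. The only difference is cosmetic: you write down an explicit eigenbasis of ladder products $\tilde a_j^{\#}\tilde a_k^{\#}$ together with the $Z_j$, which also gives $V$-invariance directly, whereas the paper gets invariance from the $\mathfrak{so}(2m)$ closure of quadratics and the eigenvalues from the commuting actions of the $\ad Z_j$ on the four-dimensional blocks spanned by $\tilde c_a\tilde c_b$, $a$ in pair $j$, $b$ in pair $k$.
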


\begin{proof}
The commutator of two quadratic Majorana operators is quadratic (the quadratics form the Lie algebra $\mathfrak{so}(2m)$), so $V$ is invariant. Self-adjointness follows from $\Tr((KX-XK)^\dagger Y)=\Tr(X^\dagger[K,Y])$ for Hermitian $K$. Work in the frame of Lemma~\ref{lem:normalform}, $K_\theta=\sum_j\lambda_jZ_j$. A quadratic $\tilde c_a\tilde c_b$ with $a$ in pair $j$ and $b$ in pair $k\ne j$ satisfies $[Z_j,\tilde c_{2j-1}\tilde c_b]=-2i\tilde c_{2j}\tilde c_b$ and $[Z_j,\tilde c_{2j}\tilde c_b]=2i\tilde c_{2j-1}\tilde c_b$, so $\ad Z_j$ acts on the span of these two operators with eigenvalues $\pm2$; $\ad Z_l$ for $l\notin\{j,k\}$ annihilates them; and the $\ad Z_j$ commute. Hence on the four-dimensional invariant space spanned by $\tilde c_a\tilde c_b$, $a\in$ pair $j$, $b\in$ pair $k$, the eigenvalues of $\ad K_\theta$ are $\pm2\lambda_j\pm2\lambda_k$. The quadratics $Z_j$ themselves are annihilated. Therefore for $m\ge2$, $\|\ad K_\theta|_V\|=2\max_{j<k}(\lambda_j+\lambda_k)\le2\sqrt2\|\theta\|_2$; for $m=1$ the restriction is zero. The function $g(x)=\sinh(x/2)/(x/2)$ is even, at least $1$, and increasing in $|x|$; so $\Tcal_\theta=g(\ad K_\theta)$ is self-adjoint and positive on $V$ with $\|\Tcal_\theta|_V\|\le g(4r)=C_r$. It maps Hermitian to Hermitian because the integral in \eqref{eq:whitened} is symmetric in $s$. The Hermitian part of $V$ is the real span of the $B_j$, so the matrix of $\Tcal_\theta$ is real, and symmetric by self-adjointness.
\end{proof}

\begin{proposition}[uniform single-copy Fisher bound]\label{prop:fisher}
For every POVM $\{M_x\}$ and every $\theta$ with $\|\theta\|_2\le r$,
\begin{equation}\label{eq:fisher}
\tr I_M(\theta)\le C_r^2\,m .
\end{equation}
In particular $\tr I_M(\theta)\le2m$ whenever $\|\theta\|_2\le r_0:=1/(10\sqrt3)$. At $\theta=0$ the bound $\tr I_M(0)\le m$ holds and is attained by the Fock-basis measurement.
\end{proposition}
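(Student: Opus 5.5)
The plan is to rewrite the classical Fisher information as the variance of a vector of quadratic Majorana expectations in a family of outcome-conditioned states, and then apply Lemma~\ref{lem:cov}. Fix $\theta$ with $\|\theta\|_2\le r$ and a POVM $\{M_x\}$, with ancillas already absorbed. For each outcome with $q_\theta(x)>0$ define the conditional state
\[
\sigma_x=\rho_\theta^{1/2}M_x\rho_\theta^{1/2}/q_\theta(x).
\]
It is a density matrix, and $\sum_xq_\theta(x)\sigma_x=\rho_\theta$. By cyclicity and Lemma~\ref{lem:whitened},
\[
\Tr(M_x\partial_j\rho_\theta)=\Tr(\rho_\theta^{1/2}M_x\rho_\theta^{1/2}R_j)=q_\theta(x)\Tr(\sigma_xR_j).
\]
Therefore
\[
\tr I_M(\theta)=\sum_xq_\theta(x)\sum_j\Tr(\sigma_xR_j)^2 .
\]

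Next I expand $R_j$ in the orthonormal basis. By Lemma~\ref{lem:adK}, $\Tcal_\theta(B_j)=\sum_k\Tcal_{kj}B_k$ with $(\Tcal_{kj})$ real symmetric. Put $b(x)\in\R^p$ with $b_k(x)=\Tr(\sigma_xB_k)$. Then
\[
\Tr(\sigma_xR_j)=(\Tcal b(x))_j-\mu_j .
\]
Averaging over $x$ gives $\E_x b(x)=\mu$. Also $\sum_x\Tr(M_x\partial_j\rho_\theta)=\partial_j\Tr\rho_\theta=0$, which forces $\Tcal\mu=\mu$. Hence $\Tr(\sigma_xR_j)=(\Tcal(b(x)-\mu))_j$, and so
\[
\tr I_M(\theta)=\E_x\|\Tcal(b(x)-\mu)\|_2^2\le C_r^2\,\E_x\|b(x)-\mu\|_2^2\le C_r^2\,\E_x\|b(x)\|_2^2 .
\]
The first inequality uses the operator-norm bound \eqref{eq:Cr}. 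The second holds because a variance is at most the second moment. Lemma~\ref{lem:cov}, applied to each $\sigma_x$ (an arbitrary, generally non-Gaussian state, which is exactly why that lemma was stated for all density matrices), gives $\|b(x)\|_2^2\le m$. This proves \eqref{eq:fisher}.

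For the numerical claim, take $r_0=1/(10\sqrt3)$. Then $2r_0\approx0.115$ and $C_{r_0}=\sinh(2r_0)/(2r_0)\approx1+(2r_0)^2/6<1.003$, so $C_{r_0}^2<2$. For $r\le r_0$ the bound $C_r\le C_{r_0}$ holds because $C_r$ is increasing in $r$.

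At $\theta=0$ we have $\Tcal_0=\mathrm{id}$ and $C_0=1$, so the bound reads $\tr I_M(0)\le m$. For attainment, the Fock-basis measurement at $\rho_0=I/d$ gives $\sigma_x=|x\rangle\langle x|$, a Fock basis state, and $\mu=0$. Each such state has $\Tr(\sigma_xic_{2j-1}c_{2j})=\pm1$ and all other pair expectations zero, which is the equality case of Lemma~\ref{lem:cov}. Hence every $\|b(x)\|_2^2=m$, and all inequalities above are equalities.

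The only delicate step is checking that the centring is consistent, namely that $\Tcal\mu=\mu$, so that the $\Tcal$-factor can be pulled out in front of the centred vector $b(x)-\mu$. I would verify this through $\sum_x\partial_jq_\theta(x)=0$ as above. It can also be seen directly from $\mu\propto\tau(B_je^{K})$ and $\ad K_\theta(K_\theta)=0$, since $\mu$ lies in the span of the $Z_j$, which $\ad K_\theta$ annihilates.
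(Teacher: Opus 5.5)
Your proof is correct and follows essentially the paper's own argument: the same conditional states $\sigma_x$, the same per-outcome score $\Tcal b(x)-\mu$, the identity $\Tcal\mu=\mu$, and then \eqref{eq:Cr} and Lemma~\ref{lem:cov} applied to each $\sigma_x$. The differences are minor: you derive $\Tcal\mu=\mu$ from $\Tr\partial_j\rho_\theta=0$ rather than from $[\rho_\theta,K_\theta]=0$, and you centre before applying the norm bound, which gives the slightly sharper intermediate bound $C_r^2(\E_x\|b(x)\|_2^2-\|\mu\|_2^2)$.
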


\begin{proof}
Put $\sigma_x=\rho_\theta^{1/2}M_x\rho_\theta^{1/2}/q_\theta(x)$, a density matrix, and $b(x)_k=\Tr(\sigma_xB_k)$. Then $\Tr(M_x\partial_j\rho_\theta)=\Tr(\rho_\theta^{1/2}R_j\rho_\theta^{1/2}M_x)=q_\theta(x)\Tr(R_j\sigma_x)$, and by Lemmas~\ref{lem:whitened} and~\ref{lem:adK}, $\Tr(R_j\sigma_x)=\sum_k\Tcal_{kj}b(x)_k-\mu_j=(\Tcal^Tb(x)-\mu)_j$. Hence the per-outcome score vector is $\Tcal^Tb(x)-\mu$ and
\[
\tr I_M(\theta)=\sum_xq_\theta(x)\,\|\Tcal^Tb(x)-\mu\|_2^2 .
\]
Since $\sum_xq_\theta(x)\sigma_x=\rho_\theta$, we have $\sum_xq_\theta(x)b(x)=\mu$. Moreover $\Tcal^T\mu=\mu$: indeed $(\Tcal^T\mu)_j=\Tr(\rho_\theta\,\Tcal_\theta(B_j))=\int_{-1/2}^{1/2}\Tr(\rho_\theta e^{sK}B_je^{-sK})ds=\mu_j$ because $\rho_\theta$ commutes with $K_\theta$. Expanding the square therefore gives
\[
\tr I_M(\theta)=\sum_xq_\theta(x)\|\Tcal^Tb(x)\|_2^2-\|\mu\|_2^2\le C_r^2\sum_xq_\theta(x)\|b(x)\|_2^2\le C_r^2\,m ,
\]
using \eqref{eq:Cr} and Lemma~\ref{lem:cov} applied to each $\sigma_x$. For $r\le r_0$, $\sinh(x)/x\le\cosh x\le e^{x^2/2}$ gives $C_r^2\le e^{4r_0^2}=e^{1/75}<2$. At $\theta=0$, $\Tcal_0=\mathrm{id}$ and $\mu=0$, so $\tr I_M(0)=\sum_xq_0(x)\|b(x)\|_2^2\le m$; the Fock-basis measurement has $\sigma_x$ equal to Fock projectors, each with $\|b(x)\|_2^2=m$ by the equality case of Lemma~\ref{lem:cov}.
\end{proof}

\begin{remark}
The bound is dimension-free because the state enters only through the $p$-vector $b(x)$ of quadratic expectations of the (not necessarily Gaussian) post-measurement state $\sigma_x$, and Lemma~\ref{lem:cov} concerns $2m\times2m$ matrices. The quantum Fisher information \cite{Hel76,Hol11,BC94} at $\theta=0$ has trace $p=m(2m-1)$, the symmetric logarithmic derivatives being the orthonormal $B_j$ (their quadratic structure for Gaussian states is prior art \cite{CSV19}; our score is not the SLD); a single-copy measurement extracts a fraction of order $1/m$ of it, which is the origin of the factor $m$ in \eqref{eq:main}. Proposition~\ref{prop:fisher} is thus a budget of the Gill--Massar type \cite{GM00}: their general inequality $\Tr(F_Q^{-1}I_M)\le d-1$ per copy for separable measurements on a $d$-dimensional system would give only $2^m-1$ on the Fock space, while the Gaussian parametrisation gives $2m$.
\end{remark}

\section{Adaptive protocols}\label{sec:adaptive}

An adaptive protocol chooses each measurement in the light of the earlier outcomes, so its later measurements might seem to extract more than a fixed measurement could. The lemma below shows that the Fisher information of the whole record of $n$ single-copy measurements is nevertheless at most $n$ times the single-copy budget of Proposition~\ref{prop:fisher}: that proposition holds uniformly over all POVMs, hence for whichever POVM the history selects. The assumption that no quantum system is carried from one step to the next ensures that the system measured at step $t$ is $\rho_\theta\otimes\eta$ with $\eta$ independent of $\theta$, which is what Section~\ref{sec:fisher} requires; protocols with a quantum memory across copies are outside the lemma (Section~\ref{sec:discussion}).

\begin{lemma}[adaptivity does not enlarge the budget]\label{lem:adaptive}
Consider a protocol that measures $n$ copies of $\rho_\theta$ one at a time, where the POVM $\{M^{(h)}_y\}$ applied to copy $t$ may depend on the classical history $h=(y_1,\dots,y_{t-1},\text{seeds})$ but not on $\theta$; the seeds are the protocol's own random bits, used to choose the measurements, and the protocol does not know $\theta$. At step $t$ the copy may be measured jointly with a fresh ancilla whose state depends at most on $h$, not on $\theta$, and is in tensor product with every other system (in particular the ancillas of different steps are not entangled with one another, and no quantum system is carried from one step to the next); such an ancilla is absorbed into $M^{(h)}$ as in Section~\ref{sec:fisher}. Let $Y$ be the transcript, i.e.\ the full record $(Y_1,\dots,Y_n)$ of outcomes together with the seeds, let $q_\theta(Y)$ be its law, and let
\[
I_Y(\theta):=\E_\theta[SS^T],\qquad S=\nabla_\theta\log q_\theta(Y),
\]
be the Fisher information matrix of the transcript; for a single copy and no seeds this is $I_M(\theta)$ of Section~\ref{sec:fisher}. Then $\E_\theta[S]=0$, and for $\|\theta\|_2\le r_0$,
\begin{equation}\label{eq:adaptive}
\tr I_Y(\theta)\le2mn .
\end{equation}
More generally, $\tr I_Y(\theta)\le C_r^2\,mn$ for $\|\theta\|_2\le r$, with $C_r$ as in \eqref{eq:Cr}; the constant $2$ in \eqref{eq:adaptive} is the bound $C_{r_0}^2<2$ of Proposition~\ref{prop:fisher}. The number $n$ is arbitrary; the proof of Theorem~\ref{thm:main} uses $n=99N$.
\end{lemma}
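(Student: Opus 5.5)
The plan is to use the chain rule for the transcript likelihood and then apply Proposition~\ref{prop:fisher} one step at a time, conditioning on the history.

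\textbf{Factorisation of the likelihood.} Write the law of the transcript as
\[
q_\theta(Y)=\pi(\text{seeds})\prod_{t=1}^n q^{(h_t)}_\theta(y_t),\qquad q^{(h)}_\theta(y)=\Tr\big(\rho_\theta M^{(h)}_y\big),
\]
where $h_t$ is the history before step $t$. The fresh ancilla has already been absorbed into $M^{(h)}$. This factorisation uses the assumption that no quantum system is carried between steps: the state measured at step $t$ is $\rho_\theta\otimes\eta_{h_t}$, independent of earlier outcomes given $h_t$. The seed law $\pi$ does not depend on $\theta$. Hence
\[
S=\sum_t s_t,\qquad s_t=\nabla_\theta\log q^{(h_t)}_\theta(y_t).
\]
Because $\rho_\theta$ has full rank, every retained outcome has positive probability, and $\theta\mapsto\rho_\theta$ is smooth. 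Differentiation under the finite sum (or integral, for continuous outcomes) is therefore legitimate.

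\textbf{Martingale structure.} Conditionally on $h_t$ we have
\[
\E_\theta[s_t\mid h_t]=\sum_y\nabla_\theta q^{(h_t)}_\theta(y)=\nabla_\theta 1=0.
\]
Summing over $t$ gives $\E_\theta[S]=0$.

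For $t<t'$, $s_t$ is a function of $h_{t'}$. Therefore
\[
\E_\theta[s_t s_{t'}^T]=\E_\theta\big[s_t\,\E_\theta[s_{t'}\mid h_{t'}]^T\big]=0,
\]
so the cross terms vanish and
\[
\tr I_Y(\theta)=\sum_t\E_\theta\|s_t\|_2^2=\sum_t\E_\theta\big[\tr I_{M^{(h_t)}}(\theta)\big].
\]

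\textbf{Applying the single-copy budget.} For each fixed history $h$, $M^{(h)}$ is a POVM on one copy. Proposition~\ref{prop:fisher} then gives $\tr I_{M^{(h)}}(\theta)\le C_r^2 m$ for $\|\theta\|_2\le r$, uniformly in $h$. Averaging over $h_t$ and summing over $t$ yields $\tr I_Y(\theta)\le C_r^2 mn$. For $r=r_0$ we have $C_{r_0}^2<2$, which gives \eqref{eq:adaptive}.

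\textbf{Main obstacle.} The only delicate point is justifying that the conditional step is exactly the single-copy situation of Section~\ref{sec:fisher}, with a $\theta$-independent POVM chosen from classical data. Here the no-memory hypothesis and the ancilla absorption are essential, and I would state them explicitly. Infinite or continuous outcome spaces need care with conditional expectations. I would handle them by working with densities with respect to the POVM measure, where every step is linear in $M$ exactly as in Section~\ref{sec:fisher}.
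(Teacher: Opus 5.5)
Your proposal is correct and follows the paper's proof essentially step for step. The steps match: the chain-rule factorisation with a $\theta$-independent seed factor, conditional mean-zero scores, cross terms vanishing by the tower property, and the uniform single-copy bound of Proposition~\ref{prop:fisher} applied to whichever POVM the history selects. The paper additionally pads early-stopping protocols with the trivial POVM $\{I\}$. It also records that each score is bounded by $\|R_j\|_{\rm op}$, so the second moments used in the tower-property step are finite, and it uses the domination $|\partial_j q_\theta(y\mid h)|\le\|\partial_j\rho_\theta\|_{\rm op}\Tr M^{(h)}_y$ to justify differentiating under the integral for continuous outcomes.
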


\begin{proof}
Write $\varsigma$ for the seeds, which may be taken to be drawn before the first measurement, and $h_t=(y_1,\dots,y_{t-1},\varsigma)$. Conditional on $h_t$, copy $t$ has not been touched and is unentangled with everything measured so far, so it is in the state $\rho_\theta$ and the outcome $Y_t$ has the law $q_\theta(y\mid h_t)=\Tr(\rho_\theta M^{(h_t)}_y)$. By the chain rule of probability the law of the transcript is
\[
q_\theta(y_1,\dots,y_n,\varsigma)=\pi(\varsigma)\prod_{t=1}^nq_\theta(y_t\mid h_t),
\]
where $\pi$, the law of the seeds, does not depend on $\theta$ (we reuse the letter $q_\theta$ for the joint and the conditional laws). A protocol that stops early, the stopping decision being a function of the history, is padded with the trivial POVM $\{I\}$ for its remaining steps; the single outcome of $\{I\}$ has probability $\Tr(\rho_\theta I)=1$ for every $\theta$, so it contributes $\nabla_\theta\log1=0$ to the score. Taking logarithms and gradients, $S=\nabla_\theta\log q_\theta(Y)=\sum_{t=1}^nS_t$ with $S_t=\nabla_\theta\log q_\theta(Y_t\mid h_t)$, the seed factor contributing nothing.

Each $S_t$ has conditional mean zero. Since $\rho_\theta$ has full rank, $q_\theta(y\mid h_t)>0$ for every retained outcome, so $\log q_\theta(y\mid h_t)$ is differentiable, and
\begin{align*}
\E[S_t\mid h_t]&=\sum_yq_\theta(y\mid h_t)\,\nabla_\theta\log q_\theta(y\mid h_t)=\sum_y\nabla_\theta q_\theta(y\mid h_t)\\
&=\nabla_\theta\Tr\Big(\rho_\theta\sum_yM^{(h_t)}_y\Big)=\nabla_\theta\Tr(\rho_\theta)=\nabla_\theta1=0,
\end{align*}
because the effects $M^{(h_t)}_y$ do not depend on $\theta$ and sum to $I$. Averaging over $h_t$ gives $\E_\theta[S_t]=0$, hence $\E_\theta[S]=0$, the first claim. Next expand $\E[SS^T]=\sum_{s,t}\E[S_sS_t^T]$. For $s<t$ the vector $S_s$ is a function of $(Y_s,h_s)$, hence of $h_t$, so by the tower property of conditional expectation $\E[S_sS_t^T]=\E\big[S_s\,\E[S_t^T\mid h_t]\big]=0$, and the case $s>t$ is the transpose; only the diagonal terms $s=t$ survive. Conditional on $h_t$, $S_t$ is the score of the fixed POVM $M^{(h_t)}$ applied to $\rho_\theta$, and since $\nabla_\theta q_\theta(y\mid h)=(\Tr(M^{(h)}_y\partial_j\rho_\theta))_j$, the definition in Section~\ref{sec:fisher} gives
\[
\E[S_tS_t^T\mid h_t]=\sum_yq_\theta(y\mid h_t)^{-1}\,\nabla_\theta q_\theta(y\mid h_t)\,\nabla_\theta q_\theta(y\mid h_t)^T=I_{M^{(h_t)}}(\theta).
\]
Writing $\E_{h_t}$ for the expectation over the history $h_t$ under the protocol at $\theta$, therefore
\[
I_Y(\theta)=\E[SS^T]=\sum_{t=1}^n\E_{h_t}\big[I_{M^{(h_t)}}(\theta)\big].
\]
Now take traces. Proposition~\ref{prop:fisher} holds for every POVM at every $\theta$ with $\|\theta\|_2\le r$, so it applies to whichever POVM the history selects; this uniformity is why adaptivity cannot enlarge the budget. Hence, for $\|\theta\|_2\le r$,
\[
\tr I_Y(\theta)=\sum_{t=1}^n\E_{h_t}\big[\tr I_{M^{(h_t)}}(\theta)\big]\le n\sup_M\tr I_M(\theta)\le C_r^2\,mn,
\]
and for $r\le r_0$ Proposition~\ref{prop:fisher} gives $C_r^2<2$, which is \eqref{eq:adaptive}.

Two points of rigour. First, every score is bounded: by the proof of Proposition~\ref{prop:fisher}, $\partial_j\log q_\theta(y\mid h)=\Tr(R_j\sigma_y)$ with $\sigma_y$ a density matrix, so $|\partial_j\log q_\theta(y\mid h)|\le\|R_j\|_{\rm op}<\infty$, and all second moments above are finite. Second, when an outcome set is finite every sum above is finite and nothing needs justifying; when it is infinite (the continuous outcomes admitted in Section~\ref{sec:fisher}, sums becoming integrals against the POVM measure), the interchange of $\nabla_\theta$ with the sum over $y$ in the computation of $\E[S_t\mid h_t]$ is justified by dominated convergence, since $|\partial_jq_\theta(y\mid h)|=|\Tr(M^{(h)}_y\partial_j\rho_\theta)|\le\|\partial_j\rho_\theta\|_{\rm op}\Tr M^{(h)}_y$, where $\theta\mapsto\partial_j\rho_\theta$ is continuous by \eqref{eq:family} and so bounded in operator norm on the closed ball, while $\sum_y\Tr M^{(h)}_y=\Tr I=2^m$.
\end{proof}

\section{Comparing parameter distance with trace distance}\label{sec:metric}

A local Fisher metric is not enough for a finite-error statement; we need a comparison between trace distance and parameter distance valid for arbitrarily close pairs, including the nonlinear remainder, and uniform in $m$.

Two conventions are used throughout the section. For a Hermitian operator $X$ on the Fock space with eigenvalues $x_1,\dots,x_d$ (repeated by multiplicity), the normalised Schatten norms of Section~\ref{sec:setting} read $\|X\|_{q,\tau}=(2^{-m}\sum_i|x_i|^q)^{1/q}$, so that $\|X\|_{1,\tau}=\tau|X|$ and $\|X\|_{2,\tau}^2=\tau X^2$; being $L^q$ norms for the probability measure $2^{-m}\sum_i\delta_{x_i}$, they are nondecreasing in $q$, which is the reverse of the order $\|X\|_1\ge\|X\|_2$ of the unnormalised Schatten norms. Second, for $\theta,v\in\R^p$ and a function $g$ of $\theta$ we write $D_vg(\theta):=\frac{d}{dt}g(\theta+tv)\big|_{t=0}$ for the directional derivative at $\theta$ in the direction $v$. In Lemmas~\ref{lem:quadratic}--\ref{lem:expK} the normal-form values $\lambda_k$ of Lemma~\ref{lem:normalform} are those of the vector named in the lemma ($v$ in Lemma~\ref{lem:quadratic}, $\theta$ in Lemma~\ref{lem:expK}).

\begin{lemma}[quadratics]\label{lem:quadratic}
For $v\in\R^p$ and $H=K_v$,
\begin{equation}\label{eq:H-moments}
\tau H^2=\|v\|_2^2,\qquad \tau H^4\le3\|v\|_2^4,\qquad \|H\|_{1,\tau}\ge\|v\|_2/\sqrt3 .
\end{equation}
\end{lemma}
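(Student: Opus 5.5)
Apply Lemma~\ref{lem:normalform} to $v$: $H=K_v=\sum_k\lambda_kZ_k$ with $\sum_k\lambda_k^2=\|v\|_2^2$. By \eqref{eq:rademacher}, $\tau h(H)=\E_s\,h(X)$ for every function $h$, where $X=\sum_ks_k\lambda_k$ is a Rademacher sum. All three claims then become statements about the scalar random variable $X$.

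For the second moment, $\E X^2=\sum_{k,l}\lambda_k\lambda_l\E[s_ks_l]=\sum_k\lambda_k^2=\|v\|_2^2$. This also follows directly from the orthonormality of the $B_{ab}$ under $\tau$, as noted after \eqref{eq:family}. For the fourth moment, expand $\E X^4=\sum\lambda_{k_1}\cdots\lambda_{k_4}\E[s_{k_1}\cdots s_{k_4}]$. Only the index patterns in which every index appears an even number of times survive, which gives
\[
\E X^4=\sum_k\lambda_k^4+3\sum_{k\ne l}\lambda_k^2\lambda_l^2=3\Big(\sum_k\lambda_k^2\Big)^2-2\sum_k\lambda_k^4\le3\|v\|_2^4 .
\]
This is the standard Khintchine-type fourth-moment bound and needs no further input.

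For the first absolute moment, the step that carries the content, I would use the interpolation (log-convexity) inequality for normalised norms recorded in Section~\ref{sec:setting}. Take $q=2$ between $q_0=1$ and $q_1=4$: solving $1/2=(1-t)+t/4$ gives $t=2/3$, so
\[
\|H\|_{2,\tau}\le\|H\|_{1,\tau}^{1/3}\|H\|_{4,\tau}^{2/3},
\]
equivalently $\tau H^2\le(\tau|H|)^{2/3}(\tau H^4)^{1/3}$, which is just H\"older's inequality applied to $|X|^2=|X|^{2/3}|X|^{4/3}$. Cubing gives $(\tau H^2)^3\le(\tau|H|)^2\,\tau H^4$, hence
\[
(\tau|H|)^2\ge\frac{\|v\|_2^6}{3\|v\|_2^4}=\frac{\|v\|_2^2}{3},
\]
that is, $\|H\|_{1,\tau}\ge\|v\|_2/\sqrt3$. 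The case $v=0$ is trivial.

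There is no real obstacle. The only point to check is that $\tau|H|=\E|X|$, which holds because \eqref{eq:rademacher} applies to the function $h=|\cdot|$ and the spectral decomposition there is exact. Nothing in the argument involves the dimension $2^m$, so the constants are uniform in $m$.
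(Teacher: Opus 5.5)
Your proposal is correct and follows the paper's proof essentially step for step. Like the paper, you reduce via the normal form and \eqref{eq:rademacher} to a Rademacher sum, compute the second and fourth moments by even-index counting, and obtain the $L^1$ bound from H\"older/interpolation with $\|H\|_{2,\tau}\le\|H\|_{1,\tau}^{1/3}\|H\|_{4,\tau}^{2/3}$.
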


\begin{proof}
Apply Lemma~\ref{lem:normalform} to $v$: $H=\sum_k\lambda_kZ_k$ with $\lambda_k\ge0$ and $\sum_k\lambda_k^2=\|v\|_2^2$, and by \eqref{eq:rademacher} every $\tau\,h(H)$ is the expectation of $h(\sum_ks_k\lambda_k)$ over independent uniform signs $s_k$ (a Rademacher sum). Since $\E[s_ks_l]=\delta_{kl}$, $\tau H^2=\E(\sum_ks_k\lambda_k)^2=\sum_k\lambda_k^2=\|v\|_2^2$. In $\E(\sum_ks_k\lambda_k)^4$ only the monomials in which every index occurs an even number of times survive: $\lambda_k^4$ once for each $k$, and $\lambda_k^2\lambda_l^2$ with $k<l$ from $\binom42=6$ orderings of the indices, so $\tau H^4=\sum_k\lambda_k^4+6\sum_{k<l}\lambda_k^2\lambda_l^2=3(\sum_k\lambda_k^2)^2-2\sum_k\lambda_k^4\le3\|v\|_2^4$. For the last claim, H\"older's inequality of Section~\ref{sec:setting} with exponents $\tfrac32$ and $3$, applied to $|H|^2=|H|^{2/3}\,|H|^{4/3}$, gives
\[
\tau|H|^2\le\big(\tau|H|\big)^{2/3}\big(\tau|H|^4\big)^{1/3},\qquad\text{i.e.}\qquad\|H\|_{2,\tau}\le\|H\|_{1,\tau}^{1/3}\|H\|_{4,\tau}^{2/3}
\]
(the exponent identity $\tfrac12=\tfrac13\cdot1+\tfrac23\cdot\tfrac14$), so $\|H\|_{1,\tau}\ge\|H\|_{2,\tau}^3/\|H\|_{4,\tau}^2\ge\|v\|_2^3/(\sqrt3\|v\|_2^2)$. In the language of \eqref{eq:rademacher} the last claim says that a Rademacher sum satisfies $\E|\sum_ks_k\lambda_k|\ge(\sum_k\lambda_k^2)^{1/2}/\sqrt3$; the sharp constant is $1/\sqrt2$ (Khintchine's inequality with the best $L^1$ constant, Szarek \cite{Sza76}; see also Haagerup \cite{Haa81}), which we do not need.
\end{proof}

\begin{lemma}\label{lem:expK}
Let $\|\theta\|_2\le r$, $K=K_\theta$ and $0\le u\le1$. Then $\|e^{uK}\|_{4,\tau}\le e^{2u^2r^2}$ and
\begin{equation}\label{eq:expK-I}
\|e^{uK}-I\|_{4,\tau}\le2ur\,e^{4u^2r^2}.
\end{equation}
\end{lemma}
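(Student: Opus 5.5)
The plan is to reduce everything to Rademacher sums through the spectral identity \eqref{eq:rademacher} of Lemma~\ref{lem:normalform}, applied to $\theta$ itself. Write $K=\sum_k\lambda_kZ_k$ with $\sum_k\lambda_k^2=\|\theta\|_2^2\le r^2$. Then for any function $h$ we have $\tau\,h(uK)=\E_s\,h(u\sum_ks_k\lambda_k)$, with the $s_k$ independent uniform signs.

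For the first bound, take $h(x)=e^{4x}$:
\[
\|e^{uK}\|_{4,\tau}^4=\tau e^{4uK}=\prod_k\cosh(4u\lambda_k)\le e^{8u^2r^2}
\]
by \eqref{eq:cosh}. Taking fourth roots gives $\|e^{uK}\|_{4,\tau}\le e^{2u^2r^2}$. This step is immediate.

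For \eqref{eq:expK-I}, write $X=\sum_ks_k\lambda_k$. We need
\[
\|e^{uK}-I\|_{4,\tau}^4=\E\,(e^{uX}-1)^4 .
\]
The route is the elementary inequality $|e^{y}-1|\le|y|\,\max(1,e^{y})\le|y|\,e^{|y|}$. A cleaner variant, which gives only the desired form, is $|e^y-1|\le|y|\,(e^{y}+1)/2\cdot$const. The most direct choice is
\[
|e^y-1|=\Big|\int_0^1 y\,e^{ty}\,dt\Big|\le|y|\int_0^1e^{ty}\,dt .
\]
It is easier still to work at the operator level. Write
\[
e^{uK}-I=\int_0^1 uK\,e^{tuK}\,dt .
\]
By the triangle inequality for $\|\cdot\|_{4,\tau}$ and commutativity (the factors commute, so the fourth moment is again a Rademacher expectation), we get
\[
\|e^{uK}-I\|_{4,\tau}\le u\int_0^1\|K e^{tuK}\|_{4,\tau}\,dt .
\]
Then apply Cauchy--Schwarz in the form $\|AB\|_{4,\tau}\le\|A\|_{8,\tau}\|B\|_{8,\tau}$, using the normalised H\"older inequality of Section~\ref{sec:setting} with $1/4=1/8+1/8$.

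The two factors are bounded separately. First, $\|e^{tuK}\|_{8,\tau}^8=\prod_k\cosh(8tu\lambda_k)\le e^{32t^2u^2r^2}$, so $\|e^{tuK}\|_{8,\tau}\le e^{4t^2u^2r^2}\le e^{4u^2r^2}$. Second, the Rademacher moment $\E X^8=\tau K^8$ is needed. Rather than expanding it, I would bound it by hypercontractivity, or by the standard sub-Gaussian moment bound. A self-contained alternative is to avoid the $L^8$ norm entirely: Hölder with $1/4=1/4+0$ fails, so some higher moment is unavoidable.

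This moment bound is the main obstacle, because the constant must come out as exactly $2$. The first thing to try is the explicit Rademacher bound $\E X^{2k}\le(2k-1)!!\,(\sum\lambda_k^2)^k$. It holds because the Rademacher moments are dominated termwise by Gaussian moments: each surviving monomial has coefficient at most the Gaussian one. This gives $\E X^8\le105\,r^8$, hence $\|K\|_{8,\tau}\le105^{1/8}r<2r$. Combining,
\[
\|e^{uK}-I\|_{4,\tau}\le u\cdot 2r\cdot e^{4u^2r^2},
\]
which is \eqref{eq:expK-I}. If the Gaussian-domination step needs justification, I would prove it by expanding $\E X^8$ as a sum over monomials $\prod\lambda_k^{2a_k}$ and comparing multinomial coefficients with the Gaussian counts $\prod(2a_k-1)!!$. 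Alternatively, I would use $\E e^{\beta X}\le e^{\beta^2r^2/2}$ together with $x^8\le(8/\beta e)^8\cosh(\beta x)$ and optimise $\beta$. The latter gives a constant near $2$ as well and serves as a fallback.
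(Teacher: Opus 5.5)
Your proof is correct and follows the paper's route. It reduces to Rademacher sums via \eqref{eq:rademacher}, splits $e^{uK}-I$ into a factor $K$ and an exponential, bounds $\tau K^8\le105\,r^8$ by termwise Gaussian domination, and bounds the exponential moment by \eqref{eq:cosh}. The one difference is the splitting step: the paper uses the pointwise bound $|e^{ux}-1|\le u|x|e^{u|x|}$ with Cauchy--Schwarz, which forces $e^{8u|K|}\le e^{8uK}+e^{-8uK}$ and gives the constant $210^{1/8}$, whereas your Minkowski bound on $\int_0^1uK\,e^{tuK}\,dt$ with H\"older exponents $(8,8)$ keeps the signed exponential and gives the slightly better constant $105^{1/8}$.
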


\begin{proof}
Since $e^{uK}>0$, $|e^{uK}|=e^{uK}$, and the first claim is $(\tau e^{4uK})^{1/4}\le(e^{8u^2r^2})^{1/4}=e^{2u^2r^2}$ by \eqref{eq:cosh} with $t=4u$. For the second, $|e^{ux}-1|\le u|x|e^{u|x|}$ for every real $x$ (for $y\ge0$, $e^y-1=\int_0^ye^w\,dw\le ye^y$ and $1-e^{-y}\le y\le ye^y$). Since $K$ is Hermitian, $|e^{uK}-I|^4$ and $u^4K^4e^{4u|K|}$ are the functions $|e^{ux}-1|^4$ and $u^4x^4e^{4u|x|}$ of $K$; the pointwise inequality holds eigenvalue by eigenvalue, and $\tau$ preserves it. Hence
\[
\tau|e^{uK}-I|^4\le u^4\,\tau\big(K^4e^{4u|K|}\big)\le u^4\,(\tau K^8)^{1/2}\,(\tau e^{8u|K|})^{1/2},
\]
the last step being the Cauchy--Schwarz inequality for $\langle X,Y\rangle_\tau=\tau(X^\dagger Y)$ with $X=K^4$ and $Y=e^{4u|K|}$. By \eqref{eq:rademacher}, $\tau K^8=\E(\sum_ks_k\lambda_k)^8$ with $\lambda_k\ge0$ the normal-form values of $\theta$. Expanding the eighth power, a monomial $\prod_k\lambda_k^{2a_k}$ ($\sum_ka_k=4$) survives only when every index occurs an even number of times, and its coefficient is then $\E\prod_ks_k^{2a_k}=1$. For independent standard Gaussians $g_k$ the same expansion has coefficients $\E\prod_kg_k^{2a_k}=\prod_k(2a_k-1)!!\ge1$, where $(2a-1)!!=1\cdot3\cdots(2a-1)=\E g^{2a}$; since every surviving monomial is nonnegative, termwise comparison gives $\E(\sum_ks_k\lambda_k)^8\le\E(\sum_kg_k\lambda_k)^8=7!!\,(\sum_k\lambda_k^2)^4=105\,(\sum_k\lambda_k^2)^4\le105\,r^8$, because $\sum_kg_k\lambda_k\sim N(0,\sum_k\lambda_k^2)$. Also $e^{8u|K|}\le e^{8uK}+e^{-8uK}$ eigenvalue by eigenvalue, so $\tau e^{8u|K|}\le2\prod_k\cosh(8u\lambda_k)\le2e^{32u^2r^2}$ by \eqref{eq:cosh}. Altogether $\tau|e^{uK}-I|^4\le u^4(105r^8)^{1/2}(2e^{32u^2r^2})^{1/2}=210^{1/2}\,u^4r^4e^{16u^2r^2}$; taking the fourth root gives $\|e^{uK}-I\|_{4,\tau}\le210^{1/8}\,ur\,e^{4u^2r^2}$, and $210^{1/8}<2$ since $2^8=256$.
\end{proof}

\begin{lemma}[derivative of the normalised exponential]\label{lem:Df}
Let $\|\theta\|_2\le r\le0.1$, $K=K_\theta$, $v\in\R^p$, $H=K_v$, and let $D_v$ be the directional derivative at $\theta$ in the direction $v$ defined above. Then
\begin{align}
\|D_ve^K-H\|_{1,\tau}&\le2r\,e^{6r^2}\|v\|_2,\label{eq:DeK}\\
|D_vz_\theta|&\le2r\,e^{4r^2}\|v\|_2,\label{eq:Dz}\\
\|D_vf_\theta-H\|_{1,\tau}&\le\big[2re^{6r^2}+(e^{r^2/2}-1)+2re^{4r^2}\big]\|v\|_2\le5r\|v\|_2 .\label{eq:Df}
\end{align}
\end{lemma}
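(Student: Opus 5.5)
The plan is to start from Duhamel's formula, as in Lemma~\ref{lem:whitened}:
\[
D_ve^K=\int_0^1e^{sK}He^{(1-s)K}\,ds,
\]
so that
\[
D_ve^K-H=\int_0^1\big(e^{sK}He^{(1-s)K}-H\big)\,ds .
\]
I would split the integrand telescopically as
\[
e^{sK}He^{(1-s)K}-H=(e^{sK}-I)\,H\,e^{(1-s)K}+H\,(e^{(1-s)K}-I).
\]
The first term is bounded with the three-factor H\"older inequality $\|XYW\|_{1,\tau}\le\|X\|_{4,\tau}\|Y\|_{2,\tau}\|W\|_{4,\tau}$ of Section~\ref{sec:setting}, using $\|H\|_{2,\tau}=\|v\|_2$ (Lemma~\ref{lem:quadratic}) together with Lemma~\ref{lem:expK} for both exponential factors. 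For the second term I would use $\|HW\|_{1,\tau}\le\|H\|_{2,\tau}\|W\|_{2,\tau}\le\|H\|_{2,\tau}\|W\|_{4,\tau}$, by monotonicity in $q$, and again Lemma~\ref{lem:expK}.

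This gives the integrand bound
\[
\|v\|_2\big[2sr\,e^{4s^2r^2}e^{2(1-s)^2r^2}+2(1-s)r\,e^{4(1-s)^2r^2}\big].
\]
Bounding each exponential crudely by $e^{6r^2}$ (since $4s^2+2(1-s)^2\le6$) and integrating, $\int_0^1 2sr\,ds+\int_0^1 2(1-s)r\,ds=2r$, yields \eqref{eq:DeK}. To get the constant $2re^{6r^2}$ cleanly I may instead bound by $\max$ over $s$; the bookkeeping here is routine.

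For \eqref{eq:Dz}, note that $D_vz_\theta=\tau(D_ve^K)=\tau(He^K)$ by cyclicity. Since $\tau H=0$, this equals $\tau\big(H(e^K-I)\big)$, and Cauchy--Schwarz with monotonicity gives
\[
|D_vz_\theta|\le\|v\|_2\,\|e^K-I\|_{4,\tau}\le2re^{4r^2}\|v\|_2
\]
by \eqref{eq:expK-I} with $u=1$. Then for \eqref{eq:Df} I would write $f_\theta=e^K/z_\theta$, so that
\[
D_vf_\theta=\frac{D_ve^K}{z_\theta}-\frac{e^K\,D_vz_\theta}{z_\theta^2},
\]
and decompose
\[
D_vf_\theta-H=\frac{D_ve^K-H}{z_\theta}+H\Big(\frac1{z_\theta}-1\Big)-f_\theta\,\frac{D_vz_\theta}{z_\theta}.
\]
Using $z_\theta\ge1$ from \eqref{eq:cosh}, the first term is at most \eqref{eq:DeK}. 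For the second, $1-1/z_\theta\le z_\theta-1\le e^{r^2/2}-1$ by \eqref{eq:cosh}, and $\|H\|_{1,\tau}\le\|H\|_{2,\tau}=\|v\|_2$. For the third, $\|f_\theta\|_{1,\tau}=1$ and $|D_vz_\theta|/z_\theta\le$ \eqref{eq:Dz}. Summing gives the bracket. The final numerical inequality at $r\le0.1$ follows from $e^{r^2/2}-1\le r^2\le0.1r$, $e^{6r^2}\le e^{0.06}$ and $e^{4r^2}\le e^{0.04}$, so the bracket is at most about $(2.13+0.1+2.09)r<5r$.

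The main obstacle is the first H\"older step: it needs the $4$-norms of $e^{sK}-I$ and $e^{(1-s)K}$ to be dimension-free. Lemma~\ref{lem:expK} supplies exactly that, so the remaining work is careful exponent bookkeeping to land on $e^{6r^2}$.
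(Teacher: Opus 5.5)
Your proposal is correct and follows the paper's proof essentially step for step. Both use the same Duhamel formula and telescoping split, the $(4,2,4)$ H\"older bound with Lemma~\ref{lem:expK} at $s$ and $1-s$ together with the crude exponent bound $6r^2$, cyclicity plus $\tau H=0$ for $D_vz_\theta$, and the same three-term quotient-rule decomposition using $z_\theta\ge1$. The only difference is cosmetic: you bound $e^{r^2/2}-1$ by $r^2\le0.1r$ instead of by $(r^2/2)e^{r^2/2}$, which still leaves the bracket below $4.4r<5r$.
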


\begin{proof}
Since $K_{\theta+tv}=K+tH$, the Duhamel formula \cite{Wil67,Bha97} gives
\[
D_ve^K=\int_0^1e^{uK}He^{(1-u)K}\,du ;
\]
indeed $e^{K+tH}-e^K=\int_0^1\frac{d}{du}\big[e^{u(K+tH)}e^{(1-u)K}\big]du=t\int_0^1e^{u(K+tH)}He^{(1-u)K}du$, and dividing by $t$ and letting $t\to0$ gives the formula (the same identity, coordinatewise, was used in the proof of Lemma~\ref{lem:whitened}). Subtracting $H=\int_0^1H\,du$,
\[
e^{uK}He^{(1-u)K}-H=(e^{uK}-I)He^{(1-u)K}+H(e^{(1-u)K}-I).
\]
The H\"older inequality of Section~\ref{sec:setting}, applied twice, gives the three-factor form $\|ABC\|_{1,\tau}\le\|A\|_{4,\tau}\|BC\|_{4/3,\tau}\le\|A\|_{4,\tau}\|B\|_{2,\tau}\|C\|_{4,\tau}$ (exponents $(4,2,4)$, since $\tfrac34=\tfrac12+\tfrac14$). With $\|H\|_{2,\tau}=\|v\|_2$ (Lemma~\ref{lem:quadratic}), Lemma~\ref{lem:expK} at $u$ and at $1-u$, and the monotonicity $\|\cdot\|_{2,\tau}\le\|\cdot\|_{4,\tau}$,
\begin{gather*}
\|(e^{uK}-I)He^{(1-u)K}\|_{1,\tau}\le2ur\,e^{4u^2r^2}\,\|v\|_2\,e^{2(1-u)^2r^2},\\
\|H(e^{(1-u)K}-I)\|_{1,\tau}\le\|v\|_2\,\|e^{(1-u)K}-I\|_{4,\tau}\le2(1-u)r\,e^{4(1-u)^2r^2}\|v\|_2 .
\end{gather*}
Since $u,1-u\le1$, both exponents are at most $6r^2$, so the integrand is bounded in $\|\cdot\|_{1,\tau}$ by $2r\,e^{6r^2}\|v\|_2\,[u+(1-u)]=2re^{6r^2}\|v\|_2$; integrating over $u\in[0,1]$ gives \eqref{eq:DeK}. For \eqref{eq:Dz}, $D_vz_\theta=\tau(D_ve^K)=\int_0^1\tau(e^{uK}He^{(1-u)K})\,du=\tau(He^K)$ by cyclicity, $=\tau(H(e^K-I))$ since $\tau H=0$ ($H$ is a combination of the traceless $B_{ab}$), and $|\tau(H(e^K-I))|\le\|H\|_{2,\tau}\|e^K-I\|_{2,\tau}\le\|v\|_2\|e^K-I\|_{4,\tau}\le2re^{4r^2}\|v\|_2$ by Cauchy--Schwarz for $\langle\cdot,\cdot\rangle_\tau$ and \eqref{eq:expK-I} at $u=1$. For \eqref{eq:Df}, $f_\theta=e^K/z_\theta$ and the quotient rule give
\[
D_vf_\theta-H=\frac{D_ve^K-H}{z_\theta}+H\Big(\frac1{z_\theta}-1\Big)-\frac{e^K\,D_vz_\theta}{z_\theta^2}.
\]
Bound the three terms in $\|\cdot\|_{1,\tau}$ using $z_\theta\ge1$; $|1/z_\theta-1|=(z_\theta-1)/z_\theta\le z_\theta-1\le e^{r^2/2}-1$ by \eqref{eq:cosh} with $t=1$; $\|H\|_{1,\tau}\le\|H\|_{2,\tau}=\|v\|_2$; and $\|e^K\|_{1,\tau}=\tau e^K=z_\theta$ because $e^K>0$. This gives the bracket in \eqref{eq:Df}. The scalar estimate for $r\le0.1$: dividing the bracket by $r$ gives $2e^{6r^2}+2e^{4r^2}+(e^{r^2/2}-1)/r\le2e^{0.06}+2e^{0.04}+(e^{r^2/2}-1)/r$, and $(e^{r^2/2}-1)/r\le(r/2)e^{r^2/2}\le0.05\,e^{0.005}$ by $e^x-1\le xe^x$ with $x=r^2/2$; since $e^{0.06}<1.062$ and $e^{0.04}<1.041$, the sum is below $2.124+2.082+0.051<5$. (Without a calculator, $e^x\le1/(1-x)$ for $0\le x<1$ gives $2/0.94+2/0.96+0.05/0.995<2.13+2.09+0.051<4.3$.)
\end{proof}

\begin{proposition}[uniform trace-distance comparison]\label{prop:metric}
For $\|\theta\|_2,\|\phi\|_2\le r\le0.1$,
\[
T(\rho_\theta,\rho_\phi)\ge\tfrac12\big(1/\sqrt3-5r\big)\|\theta-\phi\|_2 .
\]
In particular, with $r_0=1/(10\sqrt3)$ and $\kappa:=1/(4\sqrt3)$,
\begin{equation}\label{eq:metric}
T(\rho_\theta,\rho_\phi)\ge\kappa\,\|\theta-\phi\|_2\qquad\text{whenever }\|\theta\|_2,\|\phi\|_2\le r_0 .
\end{equation}
\end{proposition}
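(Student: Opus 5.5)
Since $\|\cdot\|_{1,\tau}$ is a norm, we will integrate along the segment between the two parameters. Put $v=\phi-\theta$, $\gamma(t)=\theta+tv$ for $t\in[0,1]$, and $H=K_v$. Every point $\gamma(t)$ lies in the Euclidean ball of radius $r$, because the ball is convex. The map $t\mapsto f_{\gamma(t)}$ is smooth; it is a finite-dimensional matrix function of $t$ through the exponential. By \eqref{eq:T-normalised} and the fundamental theorem of calculus,
\[
2T(\rho_\theta,\rho_\phi)=\|f_\phi-f_\theta\|_{1,\tau}=\Big\|\int_0^1 D_vf_{\gamma(t)}\,dt\Big\|_{1,\tau},
\]
where $D_vf_{\gamma(t)}$ denotes the directional derivative at the point $\gamma(t)$.

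Next we split each derivative as $D_vf_{\gamma(t)}=H+E(t)$, with $E(t)=D_vf_{\gamma(t)}-H$. The decisive observation is that the leading term $H=K_v$ does not depend on $t$. The reverse triangle inequality and the triangle inequality for the Bochner-type integral (applied to a Riemann sum, which suffices in finite dimension) then give
\[
2T\ge\|H\|_{1,\tau}-\int_0^1\|E(t)\|_{1,\tau}\,dt .
\]
We bound the first term below using Lemma~\ref{lem:quadratic}: $\|H\|_{1,\tau}\ge\|v\|_2/\sqrt3$. We bound the integrand above using \eqref{eq:Df} of Lemma~\ref{lem:Df}, applied at the base point $\gamma(t)$, which satisfies $\|\gamma(t)\|_2\le r\le0.1$; this gives $\|E(t)\|_{1,\tau}\le5r\|v\|_2$ for every $t$. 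Combining the two bounds yields $2T\ge(1/\sqrt3-5r)\|\theta-\phi\|_2$, which is the first claim.

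For \eqref{eq:metric} we substitute $r=r_0=1/(10\sqrt3)$, after checking that $r_0\le0.1$. Then $5r_0=1/(2\sqrt3)$, so
\[
\tfrac12\big(1/\sqrt3-5r_0\big)=\tfrac12\cdot\tfrac1{2\sqrt3}=\kappa .
\]

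The step that needs care is the passage to the integral form. It requires that Lemma~\ref{lem:Df} holds uniformly at every point of the segment, not only at $\theta$, and this is exactly what the convexity of the ball provides. It also requires that the derivative is taken in the fixed direction $v$, so that $H$ is the same operator for all $t$. With these points settled, no further obstacle remains: the nonlinear remainder has already been controlled in normalised norms by Lemmas~\ref{lem:expK} and~\ref{lem:Df}, so the resulting constant does not depend on $m$.
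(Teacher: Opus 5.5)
Your proposal is correct and follows the paper's proof essentially step for step. Both integrate $D_vf_{\gamma(t)}$ along the segment, which stays in the ball by convexity, and split off the $t$-independent leading term $H=K_v$. Both then bound $\|H\|_{1,\tau}\ge\|v\|_2/\sqrt3$ by Lemma~\ref{lem:quadratic} and the remainder by $5r\|v\|_2$ via \eqref{eq:Df} at each $\gamma(t)$, and finally substitute $r=r_0$. The orientation $v=\phi-\theta$ instead of $\theta-\phi$ is immaterial.
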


\begin{proof}
Let $v=\theta-\phi$, $H=K_v$, and $\gamma(t)=\phi+tv$, $0\le t\le1$, which stays in the ball of radius $r$ by convexity. By the fundamental theorem of calculus along the segment, $f_\theta-f_\phi=\int_0^1\frac{d}{dt}f_{\gamma(t)}\,dt=\int_0^1D_vf_{\gamma(t)}\,dt$, the integrand being the directional derivative of Lemma~\ref{lem:Df} at the point $\gamma(t)$ in the direction $v$; hence $f_\theta-f_\phi-H=\int_0^1(D_vf_{\gamma(t)}-H)\,dt$. By the triangle inequality, then \eqref{eq:Df} at each $\gamma(t)$ (legitimate since $\|\gamma(t)\|_2\le r\le0.1$), and then Lemma~\ref{lem:quadratic},
\[
\|f_\theta-f_\phi\|_{1,\tau}\ge\|H\|_{1,\tau}-\int_0^1\|D_vf_{\gamma(t)}-H\|_{1,\tau}\,dt\ge\|H\|_{1,\tau}-5r\|v\|_2\ge(1/\sqrt3-5r)\|v\|_2 ,
\]
and \eqref{eq:T-normalised} finishes the first claim. At $r=r_0$, $5r_0=1/(2\sqrt3)$ and $\tfrac12(1/\sqrt3-1/(2\sqrt3))=1/(4\sqrt3)$. For $r\le0.1$ the factor $1/\sqrt3-5r$ is at least $1/\sqrt3-1/2>0.077$, so the bound is nontrivial on the whole range; at $r_0$ it equals $1/(2\sqrt3)\approx0.29$.
\end{proof}

\begin{remark}
The same lines, with $\|H\|_{1,\tau}\le\|H\|_{2,\tau}=\|v\|_2$ in place of Lemma~\ref{lem:quadratic}, give $\|f_\theta-f_\phi\|_{1,\tau}\le\|H\|_{1,\tau}+5r\|v\|_2$, i.e. $T(\rho_\theta,\rho_\phi)\le\tfrac12(1+5r)\|\theta-\phi\|_2$, so on the ball trace distance and Euclidean parameter distance are equivalent, which is why the $\ell_2$ risk is the right quantity in Section~\ref{sec:vantrees}. For $m=1$ everything is explicit: $p=1$, $K_\theta=\theta Z$ with $Z=ic_1c_2$, so $e^{K_\theta}=\cosh\theta\,I+\sinh\theta\,Z$, $f_\theta=I+\tanh\theta\,Z$ and $T(\rho_\theta,\rho_\phi)=\tfrac12|\tanh\theta-\tanh\phi|$, which for $|\theta|,|\phi|\le r_0$ lies between $\tfrac12\operatorname{sech}^2(r_0)\,|\theta-\phi|\approx0.498\,|\theta-\phi|$ and $\tfrac12|\theta-\phi|$ (mean value theorem), inside the interval $[\kappa,\tfrac12(1+5r_0)]\approx[0.144,0.644]$ allowed by the two bounds. The comparison concerns the full-rank family \eqref{eq:family}, which does not reach pure states; Proposition~\ref{prop:necessity} shows that near pure states no bound of the trace distance by a dimension-free multiple of the Frobenius distance of covariance matrices can hold. The sharp constants $1/\sqrt2$ (Khintchine \cite{Sza76}) in Lemma~\ref{lem:quadratic}, and $2\sqrt2$ in place of $4$ in $C_r=g(4r)$ of \eqref{eq:Cr} (Lemma~\ref{lem:adK} already proves $\|\ad K_\theta|_V\|\le2\sqrt2\|\theta\|_2$), would not change the exponent.
\end{remark}

\section{The van Trees step and proof of the theorem}\label{sec:vantrees}

The proof of Theorem~\ref{thm:main} is by contradiction and has three steps. First, a prior is placed on a cube of parameters that lies inside the ball of radius $r_0$, where Lemma~\ref{lem:adaptive} and \eqref{eq:metric} hold. Second, a Bayesian Cram\'er--Rao (van Trees) inequality shows that any estimator built from fewer than $p^2/(2mr^2)$ copies has mean-square parameter error larger than $0.09\,r^2$ on average over that prior. Third, a protocol that is $\epsilon$-accurate in trace distance with probability $2/3$ is amplified by repetition and projected onto the Gaussian family, giving an estimator with mean-square error at most $0.08\,r^2$ once $r$ is chosen as a fixed multiple of $\epsilon$. The two bounds are incompatible, so the protocol must use more than $p^2/(198\,m\,r^2)$ copies.

\begin{lemma}[prior]\label{lem:prior}
Fix $0<r\le r_0$ and $a=r/\sqrt p$. The density $\varpi_a(u)=a^{-1}\cos^2(\pi u/(2a))$ on $[-a,a]$ (here and below $\pi=3.14159\ldots$ is the constant and $\varpi$ denotes the prior) integrates to one, vanishes at $\pm a$, and has Fisher information $J(\varpi_a):=\int_{-a}^a\varpi_a(\partial_u\log\varpi_a)^2\,du=\pi^2/a^2$. The product prior $\varpi(\theta)=\prod_{j=1}^p\varpi_a(\theta_j)$ on the cube $[-a,a]^p$ has Fisher-information trace $J_\varpi:=\E_\varpi\|\nabla_\theta\log\varpi\|_2^2=p\,J(\varpi_a)=p\pi^2/a^2=\pi^2p^2/r^2$, and every point of the cube has $\|\theta\|_2\le\sqrt p\,a=r$ ($p$ coordinates, each of absolute value at most $a$). The restriction $r\le r_0$ places the cube inside the ball on which Lemma~\ref{lem:adaptive} and \eqref{eq:metric} apply.
\end{lemma}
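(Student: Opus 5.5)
The lemma is elementary, so the plan is to verify each claim by direct one-dimensional calculus and then pass to the product by independence.

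\emph{Normalisation and boundary values.} Substitute $x=\pi u/(2a)$, so that $u\in[-a,a]$ maps to $x\in[-\pi/2,\pi/2]$ and $du=(2a/\pi)\,dx$. Using $\int_{-\pi/2}^{\pi/2}\cos^2x\,dx=\pi/2$, the integral of $\varpi_a$ is $a^{-1}(2a/\pi)(\pi/2)=1$. The vanishing at $\pm a$ is immediate from $\cos(\pm\pi/2)=0$.

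\emph{Fisher information of $\varpi_a$.} On the open interval, $\partial_u\log\varpi_a=-(\pi/a)\tan(\pi u/(2a))$. Hence
\[
\varpi_a(\partial_u\log\varpi_a)^2=a^{-1}(\pi/a)^2\sin^2(\pi u/(2a)),
\]
which is bounded and continuous on $[-a,a]$, so there is no integrability issue at the endpoints. The same substitution gives $\int_{-a}^a\sin^2(\pi u/(2a))\,du=(2a/\pi)(\pi/2)=a$, and therefore $J(\varpi_a)=\pi^2/a^2$. Equivalently, one can write $\varpi_a=\psi^2$ with $\psi=a^{-1/2}\cos(\pi u/(2a))$ and use $J=4\int(\psi')^2$.

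\emph{Product prior.} For the product, $\nabla_\theta\log\varpi$ has coordinates $\partial_{\theta_j}\log\varpi_a(\theta_j)$. Because the coordinates are independent under $\varpi$,
\[
\E_\varpi\|\nabla\log\varpi\|_2^2=\sum_j\E_{\varpi_a}\big[(\partial\log\varpi_a)^2\big]=pJ(\varpi_a)=p\pi^2/a^2 .
\]
Substituting $a=r/\sqrt p$ turns this into $\pi^2p^2/r^2$.

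\emph{Geometry of the cube.} Each of the $p$ coordinates of a point of the cube has absolute value at most $a$, so $\|\theta\|_2^2\le pa^2=r^2$. Since $r\le r_0$, the cube lies inside the ball of radius $r_0$. That is exactly the hypothesis needed for Lemma~\ref{lem:adaptive}, which requires $\|\theta\|_2\le r_0$, and for \eqref{eq:metric}, which requires both points to have norm at most $r_0\le0.1$.

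The only step requiring any care is the endpoint behaviour of the score, where $\tan$ blows up while $\varpi_a$ vanishes. Since the product $\varpi_a(\partial_u\log\varpi_a)^2$ is bounded, this causes no difficulty. The same vanishing at $\pm a$ is what the van Trees integration by parts will later use, to ensure there are no boundary terms.
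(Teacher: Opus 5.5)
Your proof is correct and takes essentially the same approach as the paper. You use the substitution $x=\pi u/(2a)$ for the normalisation and for $J(\varpi_a)=(\pi^2/a^3)\int_{-a}^a\sin^2(\pi u/(2a))\,du=\pi^2/a^2$, and then additivity over coordinates because the $j$th score component depends only on $\theta_j$, whose marginal is $\varpi_a$. Your observation that $\varpi_a(\partial_u\log\varpi_a)^2$ stays bounded at $\pm a$ is a small point of rigour that the paper leaves implicit.
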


\begin{proof}
$\cos^2$ averages to $\tfrac12$ over the angle range $[-\pi/2,\pi/2]$, so $\int\varpi_a=a^{-1}\cdot\tfrac12\cdot2a=1$. Also $\partial_u\log\varpi_a=-(\pi/a)\tan(\pi u/(2a))$, so $J(\varpi_a)=(\pi^2/a^3)\int_{-a}^a\sin^2(\pi u/(2a))\,du$; substituting $x=\pi u/(2a)$, $du=(2a/\pi)\,dx$, this is $(\pi^2/a^3)(2a/\pi)\int_{-\pi/2}^{\pi/2}\sin^2x\,dx=(\pi^2/a^3)(2a/\pi)(\pi/2)=\pi^2/a^2$. Since $\log\varpi(\theta)=\sum_j\log\varpi_a(\theta_j)$, the $j$th component of $\nabla\log\varpi$ depends on $\theta_j$ alone, and $\theta_j$ has marginal density $\varpi_a$; hence $J_\varpi=\sum_{j=1}^p\E[(\partial_u\log\varpi_a)(\theta_j)^2]=p\,J(\varpi_a)$.
\end{proof}

The next lemma is a Bayesian Cram\'er--Rao bound of van Trees type \cite{vT68}, in the multivariate form of Gill and Levit \cite[Thm.~1]{GL95}: when $\theta$ is drawn from a smooth prior that vanishes on the boundary of its support, the mean-square error of \emph{any} estimator is at least $p^2$ divided by the sum of the prior's Fisher-information trace and the expected Fisher-information trace of the data, with no unbiasedness or large-sample assumption. The boundary condition is what makes the integration by parts in the proof work, and it is the reason for the $\cos^2$ prior.

\begin{lemma}[vector van Trees inequality]\label{lem:vantrees}
Let $\theta$ be drawn from the prior $\varpi$ of Lemma~\ref{lem:prior}, let $Y$ be the transcript of any protocol as in Lemma~\ref{lem:adaptive} with $n$ copies, and let $\hat\theta(Y)$ be any estimator; randomised estimators are allowed by folding their randomness into $Y$, since, like the protocol's seeds in the proof of Lemma~\ref{lem:adaptive}, it has a $\theta$-independent law and contributes nothing to $\tr I_Y(\theta)$. Below, $\E$ is the joint expectation over $\theta\sim\varpi$ and $Y\mid\theta$. Then
\begin{equation}\label{eq:vantrees}
\E\|\hat\theta-\theta\|_2^2\ \ge\ \frac{p^2}{2mn+\pi^2p^2/r^2}.
\end{equation}
If moreover $n\le p^2/(2mr^2)$, then
\begin{equation}\label{eq:risk-lower}
\E\|\hat\theta-\theta\|_2^2\ \ge\ \frac{r^2}{\pi^2+1}\ >\ 0.09\,r^2 .
\end{equation}
\end{lemma}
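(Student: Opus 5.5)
The plan is the standard van Trees argument in its multivariate Gill--Levit form, using the prior of Lemma~\ref{lem:prior} together with the adaptive Fisher budget of Lemma~\ref{lem:adaptive}. Write $q_\theta(y)$ for the law of the transcript and $\pi(\theta,y)=\varpi(\theta)q_\theta(y)$ for the joint density. For each coordinate $j$, consider
\[
\E\big[(\hat\theta_j-\theta_j)\,\partial_j\log\pi(\theta,Y)\big]=\int\!\!\int(\hat\theta_j(y)-\theta_j)\,\partial_j\big(\varpi(\theta)q_\theta(y)\big)\,d\theta\,dy .
\]
Integrate by parts in $\theta_j$ over $[-a,a]$. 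The boundary term vanishes because $\varpi_a(\pm a)=0$. The estimator term vanishes because $\hat\theta_j(y)$ does not depend on $\theta$. The remaining contribution is $+\int\!\!\int\pi=1$. Summing over $j$ gives
\[
\sum_j\E\big[(\hat\theta_j-\theta_j)\,\partial_j\log\pi\big]=p .
\]

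Next apply Cauchy--Schwarz to this sum, viewed as an inner product of the vectors $\hat\theta-\theta$ and $\nabla_\theta\log\pi$ in $L^2(\pi)\otimes\R^p$:
\[
p^2\le\E\|\hat\theta-\theta\|_2^2\cdot\E\|\nabla_\theta\log\pi\|_2^2 .
\]
Now expand $\nabla_\theta\log\pi=\nabla\log\varpi(\theta)+S(\theta,Y)$, where $S$ is the transcript score. The cross term vanishes: condition on $\theta$ and use $\E_\theta[S]=0$ from Lemma~\ref{lem:adaptive}. Hence
\[
\E\|\nabla_\theta\log\pi\|_2^2=J_\varpi+\E_\varpi\tr I_Y(\theta).
\]
The support cube lies in the ball of radius $r\le r_0$, so \eqref{eq:adaptive} bounds the second term by $2mn$. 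Lemma~\ref{lem:prior} gives $J_\varpi=\pi^2p^2/r^2$. Together these yield \eqref{eq:vantrees}.

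For \eqref{eq:risk-lower}, substitute $2mn\le p^2/r^2$. The denominator becomes at most $(1+\pi^2)p^2/r^2$, so the bound is $r^2/(\pi^2+1)$. Since $\pi^2+1<10.87$, this exceeds $0.09\,r^2$.

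The main obstacle is justifying the integration by parts and the exchange of integrals, particularly for continuous outcome spaces and estimators that may be unbounded. I would handle this as follows:
\begin{itemize}
\item If $\E\|\hat\theta-\theta\|_2^2=\infty$, there is nothing to prove, so assume it is finite.
\item The scores are bounded, as noted in the rigour paragraph of Lemma~\ref{lem:adaptive}, and $q_\theta$ is continuously differentiable in $\theta$.
\item $\varpi_a$ is $C^1$ on the closed interval and vanishes at its endpoints.
\end{itemize}
Under these conditions Fubini's theorem and the one-dimensional integration by parts on each $\theta_j$-slice are legitimate. The term $\int\hat\theta_j(y)\,\partial_j\big(\varpi q_\theta\big)\,d\theta_j$ integrates to zero by the fundamental theorem of calculus, again because of the boundary vanishing. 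Integrability follows from Cauchy--Schwarz combined with the finite second moments.
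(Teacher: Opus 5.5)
Your proposal is correct and follows essentially the same route as the paper. The shared steps are:
\begin{itemize}
\item coordinatewise integration by parts in $\theta_j$, with the boundary term killed by $\varpi_a(\pm a)=0$, giving $\E[(\hat\theta-\theta)\cdot S]=p$;
\item Cauchy--Schwarz, with the cross term removed by $\E_\theta[S]=0$;
\item the bounds $J_\varpi=\pi^2p^2/r^2$ and $\tr I_Y(\theta)\le 2mn$;
\item the reduction to finite risk, and integrability justified by Cauchy--Schwarz.
\end{itemize}
The only blemish is cosmetic: you use $\pi$ for the joint density $\varpi q_\theta$ while also using the constant $\pi$ in $\pi^2+1$, so the density should be renamed.
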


\begin{proof}
An estimator with infinite second moment satisfies the bound trivially, so assume $\E\|\hat\theta-\theta\|_2^2<\infty$. Let $S=\nabla_\theta\log(\varpi(\theta)q_\theta(Y))=\nabla\log\varpi(\theta)+\nabla\log q_\theta(Y)$, the sum of the prior score and the transcript score of Lemma~\ref{lem:adaptive}; every kept transcript has $q_\theta(Y)>0$ by the convention of Section~\ref{sec:fisher}, and $\varpi>0$ in the interior of the cube, so $S_j\,\varpi q_\theta=\partial_j(\varpi q_\theta)$ there. Fix a coordinate $j$. Since $\hat\theta$ depends on $Y$ only, $\partial_j(\hat\theta_j-\theta_j)=-1$, and integrating by parts in $\theta_j$ over $[-a,a]$ for fixed $Y$ and fixed remaining coordinates $\theta_{-j}$ gives
\[
\int_{-a}^{a}(\hat\theta_j-\theta_j)\,\partial_j(\varpi q_\theta)\,d\theta_j
=\Big[(\hat\theta_j-\theta_j)\,\varpi q_\theta\Big]_{\theta_j=-a}^{\theta_j=a}+\int_{-a}^{a}\varpi q_\theta\,d\theta_j
=\int_{-a}^{a}\varpi q_\theta\,d\theta_j ,
\]
the boundary term vanishing because $\varpi_a(\pm a)=0$, so $\varpi$ vanishes on the faces $\theta_j=\pm a$ of the cube. Summing over $Y$ and integrating over $\theta_{-j}$ yields $\E[(\hat\theta_j-\theta_j)S_j]=\sum_Y\int\varpi q_\theta\,d\theta=1$; the interchange of the sum and the integrals is legitimate because $\E|(\hat\theta_j-\theta_j)S_j|\le(\E(\hat\theta_j-\theta_j)^2)^{1/2}(\E S_j^2)^{1/2}<\infty$ by Cauchy--Schwarz and the bound on $\E\|S\|_2^2$ below. Summing over $j$, $\E[(\hat\theta-\theta)\cdot S]=p$. By Cauchy--Schwarz, $p^2\le\E\|\hat\theta-\theta\|_2^2\,\E\|S\|_2^2$. Expanding the square,
\[
\E\|S\|_2^2=J_\varpi+\E_\varpi\tr I_Y(\theta)+2\,\E_\varpi\big[\nabla\log\varpi\cdot\E_{Y\mid\theta}\nabla\log q_\theta(Y)\big],
\]
where $J_\varpi=\E_\varpi\|\nabla\log\varpi\|_2^2$ by Lemma~\ref{lem:prior}, $\E_{Y\mid\theta}\|\nabla\log q_\theta(Y)\|_2^2=\tr\E_{Y\mid\theta}\big[(\nabla\log q_\theta)(\nabla\log q_\theta)^{T}\big]=\tr I_Y(\theta)$, the transcript Fisher matrix of the proof of Lemma~\ref{lem:adaptive}, and the last term vanishes because the transcript score has conditional mean zero: $\E_{Y\mid\theta}\nabla\log q_\theta(Y)=\sum_Y\nabla q_\theta(Y)=\nabla1=0$ (this is $\E[S_t\mid h_t]=0$ in the proof of Lemma~\ref{lem:adaptive}, summed over $t$). By Lemmas~\ref{lem:adaptive} and~\ref{lem:prior}, $\E\|S\|_2^2\le2mn+\pi^2p^2/r^2$, which is \eqref{eq:vantrees}; no unbiasedness or asymptotics is used. If $2mn\le p^2/r^2$ the denominator of \eqref{eq:vantrees} is at most $(1+\pi^2)p^2/r^2$, so the right side is at least $r^2/(1+\pi^2)$; and $1+\pi^2<11$ while $1/11>0.09$.
\end{proof}

\begin{lemma}[amplification and projection]\label{lem:amplify}
Let $a>0$ with $a\sqrt p\le r_0$. Suppose a protocol uses $N$ copies and, for every state $\rho_\theta$ with $\theta\in[-a,a]^p$, returns a density matrix $\hat\rho$ with $T(\hat\rho,\rho_\theta)\le\epsilon$ with probability at least $2/3$. Run it independently $99$ times on fresh copies, retaining only the classical outputs, and select an output such that at least $50$ of the $99$ outputs (itself included) lie within trace distance $2\epsilon$ of it, if one exists, and otherwise any output (the majority-cluster, or ``median'', trick); then map the selected output $\hat\rho_{\rm sel}$ to $\rho_{\hat\theta}$ with $\hat\theta\in[-a,a]^p$ minimising $T(\rho_\phi,\hat\rho_{\rm sel})$ over $\phi$ in the cube. A minimiser exists because the cube is closed and bounded and $\phi\mapsto T(\rho_\phi,\hat\rho_{\rm sel})$ is continuous; among minimisers choose by a fixed rule, so that $\hat\theta$ is a function of the transcript alone (when the transcript takes countably many values nothing more is needed; in general a measurable choice exists by the Kuratowski--Ryll-Nardzewski selection theorem \cite{KRN65}). With probability at least $0.99$,
\[
T(\rho_{\hat\theta},\rho_\theta)\le6\epsilon\qquad\text{and hence}\qquad\|\hat\theta-\theta\|_2\le6\epsilon/\kappa ,
\]
the second by \eqref{eq:metric}, which applies because $\hat\theta$ and $\theta$ both lie in the cube, so $\|\hat\theta\|_2,\|\theta\|_2\le a\sqrt p\le r_0$. In every case $\|\hat\theta-\theta\|_2\le2a\sqrt p$, the diameter of the cube.
\end{lemma}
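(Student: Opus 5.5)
The plan is a standard majority-cluster argument followed by a projection step, with the metric comparison \eqref{eq:metric} converting trace distance into parameter distance.

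\textbf{Step 1: most runs are good.} Fix $\theta\in[-a,a]^p$ and call a run \emph{good} if its output satisfies $T(\hat\rho,\rho_\theta)\le\epsilon$. The $99$ runs use fresh copies and independent randomness, so the good indicators are independent, each with success probability at least $2/3$. By Hoeffding's inequality, or a direct binomial tail computation, the probability that at most $49$ runs are good is at most $\exp(-2\cdot 99\,(2/3-49/99)^2)$. Here $2/3-49/99=17/99$, so the exponent is $-2\cdot17^2/99\approx-5.84$ and the probability is below $0.003<0.01$. I will therefore work on the event $E$ that at least $50$ runs are good; it has probability at least $0.99$.

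\textbf{Step 2: the selected output is close to the truth.} On $E$, any good output $g$ satisfies $T(g,g')\le2\epsilon$ for every good $g'$, by the triangle inequality through $\rho_\theta$. So an output with at least $50$ outputs within $2\epsilon$ exists, and the rule selects some such $\hat\rho_{\rm sel}$, though not necessarily a good one. Its $50$-element cluster and the set of good outputs ($\ge50$ elements) are both subsets of the $99$ outputs, so by pigeonhole they share a common output $g$. Then
\[
T(\hat\rho_{\rm sel},\rho_\theta)\le T(\hat\rho_{\rm sel},g)+T(g,\rho_\theta)\le3\epsilon .
\]

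\textbf{Step 3: projection onto the cube.} Since $\theta$ itself lies in the cube, the minimiser satisfies $T(\rho_{\hat\theta},\hat\rho_{\rm sel})\le T(\rho_\theta,\hat\rho_{\rm sel})\le3\epsilon$. Hence $T(\rho_{\hat\theta},\rho_\theta)\le6\epsilon$. Both points lie in the cube, so their norms are at most $a\sqrt p\le r_0$, and \eqref{eq:metric} gives $\|\hat\theta-\theta\|_2\le6\epsilon/\kappa$. The diameter bound $\|\hat\theta-\theta\|_2\le2a\sqrt p$ holds trivially because both points are in the cube.

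Measurability of $\hat\theta$ is handled as in the statement, and the whole procedure is a single-copy adaptive protocol on $99N$ copies, so it fits Lemma~\ref{lem:adaptive}. The only real work is the tail bound in Step 1, which I expect to be the main obstacle. I would first try Hoeffding as above; if the constant were too tight, I would fall back to an exact binomial sum.
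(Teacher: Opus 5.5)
Your proposal is correct and follows the paper's proof essentially step for step. Both arguments use the same chain: a Hoeffding bound on the number of successful runs, then a pigeonhole intersection of the selected output's $2\epsilon$-cluster with the successful runs (giving $3\epsilon$), then the projection step using that $\theta$ itself is a candidate (giving $6\epsilon$), and finally \eqref{eq:metric} and the cube diameter. The only difference is cosmetic: you take the Hoeffding deviation $t=2/3-49/99=17/99$ and get $e^{-578/99}<0.003$, whereas the paper uses the cruder $t\ge1/6$ and gets $e^{-5.5}<0.01$.
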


\begin{proof}
Call a run successful if its output is within $\epsilon$ of $\rho_\theta$. Since $T$ is a metric, two successful outputs are within $2\epsilon$ of each other by the triangle inequality. If at least $50$ of the $99$ runs succeed, every successful output therefore has at least $50$ outputs (itself included) within $2\epsilon$, so a selectable output exists; and any selected output has at least $50$ outputs within $2\epsilon$, one of which must be successful since $50+50>99$, so by the triangle inequality the selected output is within $2\epsilon+\epsilon=3\epsilon$ of $\rho_\theta$. The runs use fresh copies and independent internal randomness, so for fixed $\theta$ the number $X$ of successes is binomial, $X\sim\mathrm{Bin}(99,q_\star)$ with $q_\star\ge2/3$. Hoeffding's inequality \cite{Hoe63} in the form $\Prob[X\le n(q_\star-t)]\le e^{-2nt^2}$ ($t>0$), with $n=99$ and $t=q_\star-\tfrac12\ge\tfrac16$, bounds the probability of at most $49\le99/2=n(q_\star-t)$ successes by $\exp(-2\cdot99\cdot t^2)\le\exp(-2\cdot99\cdot(1/6)^2)=e^{-5.5}$, and $e^{-5.5}<e^{-5}<2.7^{-5}<1/100$. On the complementary event $\rho_\theta$ is itself a candidate for the projection, so the minimiser satisfies $T(\rho_{\hat\theta},\hat\rho_{\rm sel})\le T(\rho_\theta,\hat\rho_{\rm sel})\le3\epsilon$, and the triangle inequality gives $T(\rho_{\hat\theta},\rho_\theta)\le3\epsilon+3\epsilon=6\epsilon$. The last claim holds because each coordinate of $\hat\theta-\theta$ has absolute value at most $2a$.
\end{proof}

\begin{proof}[Proof of Theorem~\ref{thm:main}]
Let a protocol as in the theorem use $N$ copies. Set $r=30\epsilon/\kappa=30\cdot4\sqrt3\,\epsilon=120\sqrt3\,\epsilon$ and $a=r/\sqrt p$; $r$ is chosen so that the good-event error $6\epsilon/\kappa=24\sqrt3\,\epsilon$ of Lemma~\ref{lem:amplify} equals $r/5$. The hypothesis $\epsilon\le1/3600$ is exactly $r\le r_0$, since $120\sqrt3\,\epsilon\le1/(10\sqrt3)$ if and only if $1200\cdot3\,\epsilon\le1$. Then $a\sqrt p=r\le r_0$, so Lemma~\ref{lem:prior} and Lemma~\ref{lem:amplify} apply, and the fallback error of Lemma~\ref{lem:amplify} is $2a\sqrt p=2r$. Draw $\theta$ from the prior $\varpi$ of Lemma~\ref{lem:prior}; all states in play are then Gaussian states of the family \eqref{eq:family}, so the protocol's worst-case guarantee applies to each of them, and a lower bound on the error averaged over $\varpi$ is therefore a lower bound for the protocol. The amplified and projected estimator $\hat\theta$ of Lemma~\ref{lem:amplify} is a function of the transcript of a single-copy adaptive protocol with $n=99N$ copies (the $99$ runs use fresh copies, and only classical data is carried between them), so Lemma~\ref{lem:vantrees} applies to it. For each fixed $\theta$ in the cube, Lemma~\ref{lem:amplify} gives $\|\hat\theta-\theta\|_2\le r/5$ on an event of probability at least $0.99$ and $\|\hat\theta-\theta\|_2\le2r$ otherwise, so its risk $\E_{Y\mid\theta}\|\hat\theta-\theta\|_2^2$ is at most $1\cdot(r/5)^2+0.01\cdot(2r)^2$; averaging over $\theta\sim\varpi$,
\begin{equation}\label{eq:risk-upper}
\E\|\hat\theta-\theta\|_2^2\le(r/5)^2+0.01\,(2r)^2=0.04\,r^2+0.04\,r^2=0.08\,r^2 .
\end{equation}
(The $99$-fold repetition serves only to make the failure term small: with the unamplified failure probability $1/3$ and the same fallback $2r$ the right side would be $(r/5)^2+\tfrac13(2r)^2>0.09\,r^2$, and no contradiction would follow.) If $99N\le p^2/(2mr^2)$, then \eqref{eq:risk-lower} gives $\E\|\hat\theta-\theta\|_2^2>0.09\,r^2$, contradicting \eqref{eq:risk-upper}. Hence $99N>p^2/(2mr^2)$, that is, using $r^2=(120\sqrt3)^2\epsilon^2=14{,}400\cdot3\,\epsilon^2=43{,}200\,\epsilon^2$ and $198\cdot43{,}200=8{,}640{,}000-86{,}400=8{,}553{,}600$,
\[
N>\frac{p^2}{198\,m\,r^2}=\frac{p^2}{198\cdot43{,}200\,m\,\epsilon^2}=\frac{p^2}{8{,}553{,}600\,m\,\epsilon^2},
\]
which is the first inequality in \eqref{eq:main}; and $p=m(2m-1)\ge m^2$ for $m\ge1$ gives $p^2/m\ge m^3$, the second.
\end{proof}

\section{Tightness on a promised class, and what remains open}\label{sec:tight}

In this section a Gaussian state is written $\rho_\Gamma$ in terms of its covariance matrix $\Gamma$, a real antisymmetric $2m\times2m$ matrix with $\|\Gamma\|_{\rm op}\le1$. By the real spectral theorem (as in Lemma~\ref{lem:normalform}; this is the real Schur, or Youla, normal form of an antisymmetric matrix \cite{You61}, and we call the $\lambda_k$ its normal-form values) $\Gamma=O\bigoplus_k\lambda_kJ\,O^T$ with $0\le\lambda_k\le1$, and in the rotated Majorana frame $\rho_\Gamma=\prod_k\tfrac12(I+\lambda_kZ_k)$, the Gaussian state with covariance $\Gamma$ (Section~\ref{sec:setting}); mode $k$ is occupied with probability $p_k=(1+\lambda_k)/2$ and empty with probability $q_k=(1-\lambda_k)/2$. Here $\|\cdot\|_1$ is the trace norm without the factor $\tfrac12$ and $\|\cdot\|_F$ the Frobenius norm of the full $2m\times2m$ matrix. For $c\in(0,1]$ let
\[
\Gc=\{\Gamma:\ \|\Gamma\|_{\rm op}\le1-c\},
\]
a closed convex set containing $0$; equivalently all $\lambda_k\le1-c$, or $I+\Gamma^2\ge c(2-c)I$. The family of Section~\ref{sec:setting} has normal-form values $\tanh\lambda_j$ with $\sum_j\lambda_j^2\le r_0^2$, so it lies in $\Gc$ with $c=1-\tanh r_0\approx0.94$, a constant independent of $m$: the hard instances of Theorem~\ref{thm:main} live inside $\Gc$.

\begin{proposition}[Frobenius continuity on $\Gc$]\label{prop:A}
If $\Gamma,\Gamma'\in\Gc$ then
\begin{equation}\label{eq:propA}
\|\rho_\Gamma-\rho_{\Gamma'}\|_1\le K_c\,\|\Gamma-\Gamma'\|_F,\qquad K_c:=\frac1{\sqrt{2c(2-c)}} .
\end{equation}
The constant cannot be improved below $1/\sqrt2=\lim_{c\to1}K_c$: for $m=1$ and any $\lambda,\lambda'\in[0,1-c]$, $\|\rho_\Gamma-\rho_{\Gamma'}\|_1=|\lambda-\lambda'|=\|\Gamma-\Gamma'\|_F/\sqrt2$. (For $c=1$ the class $\mathcal G_1=\{0\}$ and \eqref{eq:propA} is trivial.)
\end{proposition}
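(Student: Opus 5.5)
The plan is to integrate a pointwise derivative bound along the segment joining $\Gamma'$ to $\Gamma$. Put $\Gamma_t=\Gamma'+t(\Gamma-\Gamma')$ for $t\in[0,1]$. Since $\Gc$ is convex, $\Gamma_t\in\Gc$ for every $t$. For $c>0$ all normal-form values of $\Gamma_t$ are at most $1-c<1$, so $\rho_{\Gamma_t}$ has full rank; it is also a polynomial in $\Gamma_t$, hence smooth in $t$. The triangle inequality then gives $\|\rho_\Gamma-\rho_{\Gamma'}\|_1\le\int_0^1\|\dot\rho_t\|_1\,dt$. It therefore suffices to prove the pointwise bound $\|\dot\rho\|_1\le K_c\|\dot\Gamma\|_F$ at every full-rank Gaussian $\rho=\rho_\Gamma$ with $\Gamma\in\Gc$ and every real antisymmetric direction $\dot\Gamma$.

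For the pointwise bound I would pass through the symmetric logarithmic derivative. Write $\dot\rho=\tfrac12(\rho L+L\rho)$ with $L$ Hermitian; this $L$ exists and is unique because $\rho>0$. Then $\|\dot\rho\|_1\le\|\rho L\|_1\le\|\rho^{1/2}\|_2\,\|\rho^{1/2}L\|_2=(\Tr\rho L^2)^{1/2}$, which is the square root of the quantum Fisher information in the direction $\dot\Gamma$. The proof thus reduces to showing $\Tr\rho L^2\le\|\dot\Gamma\|_F^2/(2c(2-c))$.

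To compute $\Tr\rho L^2$ I would work in the rotated frame of Lemma~\ref{lem:normalform}, where $\rho=\prod_k\tfrac12(I+\lambda_kZ_k)$. I would make the ansatz that $L$ is an affine quadratic, $L=\ell_0I+\tfrac i2\sum_{ab}X_{ab}\tilde c_a\tilde c_b$, and fix the antisymmetric matrix $X$ by matching covariances.
\begin{itemize}
\item For the diagonal entries, $\dot\Gamma_{2k-1,2k}=\dot\lambda_k$ is handled by the single-mode computation $L=\dot\lambda_k(Z_k-\lambda_k)/(1-\lambda_k^2)$. Its contribution to $\Tr\rho L^2$ is $\dot\lambda_k^2/(1-\lambda_k^2)$.
\item For the $4$-dimensional block of quadratics $\tilde c_a\tilde c_b$ with $a$ in pair $j$ and $b$ in pair $k\ne j$, I would use the commutation relations from the proof of Lemma~\ref{lem:adK}. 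These diagonalise the block into eigen-quadratics on which the map $X\mapsto\Tr(\rho\{\cdot\})$ acts with factors $1\mp\lambda_j\lambda_k$. The contribution is then $\sum(\text{entry})^2/(1\mp\lambda_j\lambda_k)$, up to the normalisation fixed by $\|\dot\Gamma\|_F^2=2\sum_{a<b}\dot\Gamma_{ab}^2$.
\item Equivalently, the whole computation should amount to the known formula $\Tr\rho L^2=\tfrac12\,\mathrm{vec}(\dot\Gamma)^T(I-\Gamma\otimes\Gamma)^{-1}\mathrm{vec}(\dot\Gamma)$.
\end{itemize}
Every denominator is at least $1-(1-c)^2=c(2-c)$, because $|\lambda_j\lambda_k|\le(1-c)^2$ and $\lambda_k^2\le(1-c)^2$. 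This yields $\Tr\rho L^2\le\|\dot\Gamma\|_F^2/(2c(2-c))$, hence $\|\dot\rho\|_1\le K_c\|\dot\Gamma\|_F$, and integrating over $t$ gives \eqref{eq:propA}.

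The sharpness claim for $m=1$ is direct. There $\rho_\Gamma-\rho_{\Gamma'}=\tfrac12(\lambda-\lambda')Z$, whose trace norm is $|\lambda-\lambda'|$, while $\|\Gamma-\Gamma'\|_F=\sqrt2\,|\lambda-\lambda'|$. Since $K_c\to1/\sqrt2$ as $c\to1$, the constant cannot go below $1/\sqrt2$.

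The main obstacle is the exact SLD computation with the correct normalisation. One must verify that $L$ stays within the quadratic span, which holds because anticommutators of $\rho$ with quadratics in the product frame are again affine in quadratics times $\rho$. One must also track the cross-block factors $1\pm\lambda_j\lambda_k$ and the factor of $2$ relating the entries $\dot\Gamma_{ab}$, $a<b$, to $\|\dot\Gamma\|_F$. I would check the bookkeeping first on $m=1$, where the bound is an equality at $\lambda=0$, and then on the $m=2$ off-diagonal block before writing the general case.
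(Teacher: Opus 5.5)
Your proof is correct. It shares the paper's skeleton: integrate along the segment in $\Gc$, bound $\|D_X\rho_\Gamma\|_1$ by a Cauchy--Schwarz step against a monotone metric, evaluate that metric exactly in the normal-form frame, and bound every denominator below by $c(2-c)$. The difference is which metric you use.

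The paper uses Lemma~\ref{lem:chi2}, whose right side in the occupation basis is $\sum_{ij}|Y_{ij}|^2/\sqrt{\rho_i\rho_j}$. This gives the identity \eqref{eq:chi2}, with mode-pair denominators $\sqrt{(1-\lambda_k^2)(1-\lambda_l^2)}$ that treat hopping and pairing alike and factorise as $\tfrac12\|(I+\Gamma^2)^{-1/4}X(I+\Gamma^2)^{-1/4}\|_F^2$. You use the SLD, i.e.\ $\sum_{ij}2|Y_{ij}|^2/(\rho_i+\rho_j)$, which is pointwise no larger by AM--GM. It gives denominators $1-\lambda_k^2$ on diagonal blocks, $1-\lambda_k\lambda_l$ on the hopping part and $1+\lambda_k\lambda_l$ on the pairing part. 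All of these are at least $c(2-c)$ on $\Gc$, and the diagonal ones attain it at $\lambda_k=1-c$. So both routes give exactly $K_c$, and the extra sharpness of the SLD buys nothing for \eqref{eq:propA}.

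Your route does have one real advantage. The map $X\mapsto X+\Gamma X\Gamma$ is invertible whenever $\|\Gamma\|_{\rm op}<1$. You therefore need neither the paper's decomposition $X=X_{\rm diag}+[A,\Gamma]$, which requires distinct nonzero $\lambda_k$, nor its density-and-continuity step.

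The paper's route is self-contained: it reads off matrix elements of $D_X\rho_\Gamma$ and never has to identify a logarithmic derivative. In yours, matching covariances determines $L$ only once you know that $\tfrac12\{\rho,L\}$ is a Gaussian tangent vector. The short justification is as follows. Write $\rho=e^{K}/(dz)$ with $K$ quadratic, which is possible since all $\lambda_k<1$. Then both $\dot\rho=\big(\int_0^1e^{sK}\dot Ke^{-sK}ds-\dot z/z\big)\rho$ and $\tfrac12\{\rho,L\}=\tfrac12(e^{K}Le^{-K}+L)\rho$ lie in the space of operators $Q\rho$ with $Q$ affine quadratic and $\Tr(Q\rho)=0$. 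That space has dimension $p$ and contains the $p$-dimensional tangent space, so the two coincide. On it, $Y\mapsto(\Tr(Y\,ic_ac_b))_{a<b}$ is injective. Alternatively, cite the known fermionic Gaussian QFI formula you quote.
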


The proof is in Appendix~\ref{app:propA}. It integrates the derivative of the polynomial map $\Gamma\mapsto\rho_\Gamma$ along the segment joining $\Gamma$ and $\Gamma'$, which stays in $\Gc$ by convexity, and bounds the trace norm of the directional derivative $D_X\rho_\Gamma$ through the exact $\chi^2$-type identity
\begin{equation}\label{eq:chi2}
\Tr\big(\rho_\Gamma^{-1/2}D_X\rho_\Gamma\,\rho_\Gamma^{-1/2}D_X\rho_\Gamma\big)=\tfrac12\big\|(I+\Gamma^2)^{-1/4}X(I+\Gamma^2)^{-1/4}\big\|_F^2 ,
\end{equation}
which in the normal-form frame is a sum of single-mode terms and mode-pair terms and therefore carries no factor exponential in $m$.

\begin{corollary}[one-sided promise]\label{cor:onesided}
If $\Gamma\in\Gc$ and $\Gamma'$ is any valid covariance matrix ($\|\Gamma'\|_{\rm op}\le1$), then $\|\rho_\Gamma-\rho_{\Gamma'}\|_1\le\sqrt{2/c}\,\|\Gamma-\Gamma'\|_F$.
\end{corollary}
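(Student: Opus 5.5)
The plan is to integrate the derivative bound along the segment from $\Gamma$ to $\Gamma'$, as in the proof of Proposition~\ref{prop:A}, but with a promise parameter that degrades along the segment. The integrand will blow up only like $(1-t)^{-1/2}$ at the endpoint, so it is still integrable.

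\textbf{Step 1: the segment and its promise parameter.} Put $X=\Gamma'-\Gamma$ and $\Gamma_t=\Gamma+tX$ for $0\le t\le1$. Each $\Gamma_t$ is real antisymmetric. By convexity of the operator norm,
\[
\|\Gamma_t\|_{\rm op}\le(1-t)(1-c)+t=1-c(1-t),
\]
so $\Gamma_t\in\mathcal G_{c(1-t)}$ for every $t<1$. The promise degrades linearly and is lost only at $t=1$.

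\textbf{Step 2: the pointwise derivative bound.} I will use the pointwise estimate that the proof of Proposition~\ref{prop:A} in Appendix~\ref{app:propA} is built on: for $\Gamma\in\mathcal G_{c'}$,
\[
\|D_X\rho_\Gamma\|_1\le K_{c'}\|X\|_F .
\]
This follows from Cauchy--Schwarz, $\|Y\|_1\le(\Tr\rho)^{1/2}\big(\Tr\rho^{-1/2}Y\rho^{-1/2}Y\big)^{1/2}$, together with the identity \eqref{eq:chi2} and $(I+\Gamma^2)^{-1/2}\le(c'(2-c'))^{-1}I$. Applying it at $\Gamma_t$ with $c'=c(1-t)$, and using $2-c(1-t)\ge1$, gives
\[
\|D_X\rho_{\Gamma_t}\|_1\le\frac{\|X\|_F}{\sqrt{2c(1-t)(2-c(1-t))}}\le\frac{\|X\|_F}{\sqrt{2c}}\,(1-t)^{-1/2}.
\]

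\textbf{Step 3: integrate, with care at $t=1$.} The main obstacle is that $\rho_{\Gamma'}$ may be rank-deficient, so \eqref{eq:chi2} is unavailable at $t=1$ and the integral is improper. I handle this by integrating only over $[0,s]$ for $s<1$ and then letting $s\to1$. Since $\Gamma\mapsto\rho_\Gamma$ is a polynomial map, $t\mapsto\rho_{\Gamma_t}$ is smooth on $[0,1]$. The fundamental theorem of calculus and the triangle inequality then give
\[
\|\rho_{\Gamma_s}-\rho_\Gamma\|_1\le\frac{\|X\|_F}{\sqrt{2c}}\int_0^s(1-t)^{-1/2}\,dt\le\frac{2\|X\|_F}{\sqrt{2c}}=\sqrt{2/c}\,\|X\|_F .
\]
Letting $s\to1$ and using continuity of $s\mapsto\rho_{\Gamma_s}$ in trace norm yields $\|\rho_\Gamma-\rho_{\Gamma'}\|_1\le\sqrt{2/c}\,\|\Gamma-\Gamma'\|_F$, as claimed.

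The constant comes out exactly as stated because $\int_0^1(1-t)^{-1/2}\,dt=2$. As a sanity check, when $\Gamma'$ also lies in $\mathcal G_c$ the bound is weaker than Proposition~\ref{prop:A}, since $K_c\le\sqrt{2/c}$; this is consistent.
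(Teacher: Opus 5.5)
Your proposal is correct and is the paper's own argument: the segment lies in $\mathcal G_{c(1-t)}$, the appendix bound \eqref{eq:derivbound} together with $2-c(1-t)\ge1$ gives an integrand of at most $(1-t)^{-1/2}\|X\|_F/\sqrt{2c}$, and $\int_0^1(1-t)^{-1/2}\,dt=2$; your truncation at $s<1$ only makes explicit the improper integral that the paper passes over. One slip in your sketch of the cited bound: you need $(I+\Gamma^2)^{-1/2}\le(c'(2-c'))^{-1/2}I$ (equivalently $I+\Gamma^2\ge c'(2-c')I$), since the inequality you wrote, with $(c'(2-c'))^{-1}$ on the right, is too weak: it would only give an integrand of order $(1-t)^{-1}$, which is not integrable, although the bound you actually display is the correct one.
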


\begin{proof}
On the segment $\Gamma_t=(1-t)\Gamma+t\Gamma'$ we have $\|\Gamma_t\|_{\rm op}\le1-(1-t)c$, so $\Gamma_t\in\mathcal G_{(1-t)c}$ and the derivative bound of Appendix~\ref{app:propA} gives $\|D_{\Gamma'-\Gamma}\rho_{\Gamma_t}\|_1\le\|\Gamma'-\Gamma\|_F/\sqrt{2(1-t)c}$; since $\int_0^1(1-t)^{-1/2}dt=2$, the claim follows.
\end{proof}

\begin{corollary}[thermal-state and observable error certificates]\label{cor:thermal}
Let $\rho_\Gamma$ and $\rho_{\Gamma'}$ be Gibbs states of quadratic fermionic Hamiltonians
$H=\sum_{a<b}h_{ab}\,ic_ac_b$ and $H'=\sum_{a<b}h'_{ab}\,ic_ac_b$,
at inverse temperatures $\beta,\beta'\ge0$. Let $E_{\max}$ and $E'_{\max}$ be the largest normal-form values of the real antisymmetric coefficient matrices $h,h'$. If $\beta E_{\max},\beta'E'_{\max}\le b<\infty$, then
\begin{equation}\label{eq:thermal-certificate}
T(\rho_\Gamma,\rho_{\Gamma'})\le\frac{\cosh b}{2\sqrt2}\|\Gamma-\Gamma'\|_F.
\end{equation}
The same conclusion holds for any two Gaussian states with covariance operator norms at most $\tanh b$. If $\|\Gamma-\Gamma'\|_F\le\delta$, every bounded observable $O$ obeys
\[
|\Tr[O(\rho_\Gamma-\rho_{\Gamma'})]|\le
\frac{\cosh b}{\sqrt2}\,\|O\|_{\rm op}\,\delta .
\]
\end{corollary}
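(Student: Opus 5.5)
The plan is to reduce the statement to Proposition~\ref{prop:A} with $1-c=\tanh b$, and then to simplify the constant.

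First I would compute the covariance matrix of a quadratic Gibbs state. Write $\beta H=\sum_{a<b}\beta h_{ab}\,ic_ac_b=K_\theta$ with $\theta=\beta h$. Then $\rho=e^{-\beta H}/\Tr e^{-\beta H}$ is $\rho_{-\theta}$ in the notation of \eqref{eq:family}. By Lemma~\ref{lem:normalform} and \eqref{eq:covtheta}, its covariance matrix has normal-form values $\tanh(\beta\lambda_j(h))$. Here the normal-form values of $h$ are those of the antisymmetric matrix $\Theta$, and the sign flip $\theta\mapsto-\theta$ only changes the orientation of each block. Hence $\|\Gamma\|_{\rm op}=\tanh(\beta E_{\max})\le\tanh b$, and likewise $\|\Gamma'\|_{\rm op}\le\tanh b$.

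The case $\beta=0$ gives $\Gamma=0$. If the Hamiltonian has a zero normal-form value, the corresponding $\tanh$ is $0$. Both cases fit the same bound. This step also gives the clause about any two Gaussian states with covariance norm at most $\tanh b$, so from here on that is the only hypothesis used.

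Next, set $c=1-\tanh b\in(0,1]$, so that both $\Gamma$ and $\Gamma'$ lie in $\Gc$, and apply Proposition~\ref{prop:A}. The key identity is
\[
c(2-c)=(1-\tanh b)(1+\tanh b)=1-\tanh^2b=\operatorname{sech}^2b,
\]
which gives $K_c=\cosh b/\sqrt2$. Since $T=\tfrac12\|\cdot\|_1$, this yields \eqref{eq:thermal-certificate}.

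If $b=0$, then $\mathcal G_1=\{0\}$, both states are maximally mixed, and the bound is trivial.

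For the observable bound, I would use $|\Tr[O(\rho-\rho')]|\le\|O\|_{\rm op}\|\rho-\rho'\|_1$ together with $\|\rho-\rho'\|_1\le(\cosh b/\sqrt2)\,\delta$.

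The only delicate point is step one: correctly identifying the normal-form values of $\Gamma(\rho)$ as $\tanh(\beta\cdot)$ of those of $h$, including the sign and the factor conventions. Here $\theta_{ab}=-\beta h_{ab}$ and the map $\theta\mapsto\Theta$ is the one used in Lemma~\ref{lem:normalform}. I would check this on $m=1$: $\rho\propto e^{-\beta E Z}$ has $\langle Z\rangle=-\tanh(\beta E)$, whose absolute value is $\tanh(\beta E)$.
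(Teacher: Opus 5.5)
Your proposal is correct and is essentially the paper's own proof. You identify the covariance normal-form values as $\tanh(\beta E_j)$ via \eqref{eq:product} and \eqref{eq:covtheta} with $K=-\beta H$, place both covariances in $\mathcal G_c$ with $c=1-\tanh b$, use $c(2-c)=\operatorname{sech}^2b$ in Proposition~\ref{prop:A} and halve, and finish with $|\Tr(OX)|\le\|O\|_{\rm op}\|X\|_1$. (The only slip is that you first write $\theta=\beta h$ and later $\theta=-\beta h$; this is harmless, because normal-form values are insensitive to the sign.)
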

\begin{proof}
Equation~\eqref{eq:product}, with $K=-\beta H$, gives covariance normal-form values $\tanh(\beta E_j)$. Thus both covariance matrices lie in $\mathcal G_c$ with $c=1-\tanh b$. Since $c(2-c)=1-\tanh^2b=\operatorname{sech}^2b$, Proposition~\ref{prop:A}, divided by two, gives \eqref{eq:thermal-certificate}. The observable bound follows from $|\Tr(OX)|\le\|O\|_{\rm op}\|X\|_1$.
\end{proof}
For example, $b=1$ and a certified Frobenius error $\delta=0.01$ imply $T\le0.00546$. Such a certificate can translate statistical and systematic covariance-error bounds into a state-preparation tolerance. It assumes Gaussianity and the stated covariance bound; covariance data alone do not certify an arbitrary non-Gaussian state. This is a consequence of Proposition~\ref{prop:A}, rather than a separate claim of optimal thermal tomography.

\begin{proposition}[the promise cannot be dropped]\label{prop:necessity}
Let $\Gamma=\bigoplus_kJ$ (a pure Fock state $\psi$) and $\Gamma'=\bigoplus_k(1-\delta)J$, $0<\delta\le2/m$. Then $\|\Gamma-\Gamma'\|_F=\sqrt{2m}\,\delta$ while $\|\rho_\Gamma-\rho_{\Gamma'}\|_1=2(1-(1-\delta/2)^m)\ge m\delta/2=\sqrt{m/8}\;\|\Gamma-\Gamma'\|_F$. In particular, for $\delta=2/m$, $\|\Gamma-\Gamma'\|_F=2\sqrt{2/m}\to0$ while $\|\rho_\Gamma-\rho_{\Gamma'}\|_1\ge2(1-e^{-1})>1.26$. No bound of the form \eqref{eq:propA} with a constant independent of $m$ can hold on the full set of Gaussian states.
\end{proposition}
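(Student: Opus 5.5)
Both states are products in the same rotated frame: $\Gamma$ and $\Gamma'$ are simultaneously in normal form with $O=I$. I would therefore compute everything explicitly in the Jordan--Wigner frame of Section~\ref{sec:setting}. There $\rho_\Gamma=\prod_k\tfrac12(I+Z_k)$ is the rank-one projector $\psi$ onto a Fock vector (all modes with $\lambda_k=1$). The other state is $\rho_{\Gamma'}=\prod_k\tfrac12(I+(1-\delta)Z_k)$, a product of $m$ single-mode diagonal states with eigenvalues $1-\delta/2$ and $\delta/2$.

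The Frobenius norm is immediate. The difference is $\bigoplus_k\delta J$, each block $\delta J$ has squared Frobenius norm $2\delta^2$, and there are $m$ blocks, so $\|\Gamma-\Gamma'\|_F=\sqrt{2m}\,\delta$.

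For the trace norm, both operators are diagonal in the common product eigenbasis. $\rho_\Gamma$ puts weight $1$ on the single basis vector $\psi$, to which $\rho_{\Gamma'}$ assigns weight $(1-\delta/2)^m$. On every other basis vector $\rho_\Gamma$ vanishes, and the total weight of $\rho_{\Gamma'}$ there is $1-(1-\delta/2)^m$. Hence
\[
\|\rho_\Gamma-\rho_{\Gamma'}\|_1=\big(1-(1-\delta/2)^m\big)+\big(1-(1-\delta/2)^m\big)=2\big(1-(1-\delta/2)^m\big).
\]

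The lower bound $2(1-(1-x)^m)\ge mx$ with $x=\delta/2\le1/m$ is the one inequality needing care. I would use the Bonferroni bound $(1-x)^m\le1-mx+\binom m2x^2$, which gives $1-(1-x)^m\ge mx(1-(m-1)x/2)\ge mx/2$ because $(m-1)x\le1$. This yields $\|\rho_\Gamma-\rho_{\Gamma'}\|_1\ge mx=m\delta/2$. Substituting $\delta=\|\Gamma-\Gamma'\|_F/\sqrt{2m}$ converts this to $\sqrt{m/8}\,\|\Gamma-\Gamma'\|_F$.

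For $\delta=2/m$ the Frobenius distance is $2\sqrt{2/m}$, and the trace norm is $2(1-(1-1/m)^m)\ge2(1-e^{-1})\approx1.264>1.26$, using $(1-1/m)^m\le e^{-1}$. A constant $K$ valid on all Gaussian states would then require $K\ge\sqrt{m/8}$ for every $m$, which is impossible. The only mildly delicate step is the elementary inequality; everything else is explicit diagonal bookkeeping.
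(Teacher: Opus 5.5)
Your proof is correct and follows the paper's own argument: diagonal bookkeeping in the common Fock basis gives $\|\rho_\Gamma-\rho_{\Gamma'}\|_1=2(1-(1-\delta/2)^m)$, and the Frobenius norm is computed blockwise. The only difference is the elementary inequality: you use the second-order Bonferroni bound $(1-x)^m\le1-mx+\binom m2x^2$, where the paper uses $(1-x)^m\le e^{-mx}\le1-mx+(mx)^2/2$; both give $1-(1-x)^m\ge mx/2$ when $mx\le1$.
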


\begin{proof}
$\rho_{\Gamma'}$ is diagonal in the Fock basis containing $\psi$ with $\langle\psi|\rho_{\Gamma'}|\psi\rangle=\prod_k\tfrac12(1+(1-\delta))=(1-\delta/2)^m$; for a pure state and a state diagonal in a basis containing it, $\|\rho-\rho'\|_1=2(1-\langle\psi|\rho'|\psi\rangle)$. With $x=\delta/2$ and $mx\le1$, $(1-x)^m\le e^{-mx}\le1-mx+(mx)^2/2\le1-mx/2$.
\end{proof}

\begin{remark}
The order $1/\sqrt c$ of $K_c$ is also necessary: for $c=2/m$, $\Gamma=\bigoplus(1-c)J$ and $\Gamma'=\bigoplus(1-2c)J$ lie in $\Gc$ with $\|\Gamma-\Gamma'\|_F=2\sqrt2/\sqrt m$, while testing for at least one hole (hole counts $\mathrm{Bin}(m,1/m)$ versus $\mathrm{Bin}(m,2/m)$) gives $\|\rho_\Gamma-\rho_{\Gamma'}\|_1\ge2[(1-1/m)^m-(1-2/m)^m]>0.22$ for $m\ge3$ (so that $|1-2c|\le1-c$ and both matrices lie in $\Gc$), a ratio of at least $0.1/\sqrt c$.
\end{remark}

\begin{theorem}[single-copy upper bound on $\Gc$]\label{thm:upper}
Let $\rho_\Gamma$ with $\Gamma\in\Gc$ be unknown. Collect $N$ single-copy, nonadaptive matchgate classical shadows \cite{ZRM21} and form the entrywise estimate $\hat\Gamma$; let $\tilde\Gamma$ be the Frobenius projection of $\hat\Gamma$ onto the closed convex set $\Gc$; output $\rho_{\tilde\Gamma}$. Then $N=\lceil12m^3/(c\epsilon^2)\rceil$ copies give $\|\rho_{\tilde\Gamma}-\rho_\Gamma\|_1\le\epsilon$, that is $T\le\epsilon/2$, with probability at least $2/3$; and $O(m^3\log(1/\delta)/(c\epsilon^2))$ copies give the same with probability $1-\delta$. Consequently, for every fixed $c\le1-\tanh r_0\approx0.94$ the single-copy sample complexity of mixed fermionic Gaussian tomography on $\Gc$ is $\Theta(m^3/\epsilon^2)$ for $\epsilon\le1/3600$, and for $1-\tanh r_0<c<1$ the same holds when $\epsilon\le1/3600$ and $\tanh(120\sqrt3\,\epsilon)\le1-c$, for instance for $\epsilon\le\min\{1/3600,(1-c)/(120\sqrt3)\}$; collective protocols need only $O(m^2/\epsilon^2)$ there. The restriction is exactly the membership of the hard family: at accuracy $\epsilon$ the family of Theorem~\ref{thm:main} has $\|\Gamma\|_{\rm op}\le\tanh(120\sqrt3\,\epsilon)$, and at $c=1$ the class $\mathcal G_1$ is the single state $\rho_0$.
\end{theorem}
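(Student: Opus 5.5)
The plan is to reduce trace-norm accuracy to Frobenius accuracy of the covariance estimate through Proposition~\ref{prop:A}, and then to control the Frobenius error of matchgate shadows by a variance computation.

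\textbf{Reduction to a Frobenius bound.} Since $\Gc$ is closed and convex and contains $\Gamma$, the Frobenius projection is nonexpansive:
\[
\|\tilde\Gamma-\Gamma\|_F\le\|\hat\Gamma-\Gamma\|_F .
\]
Both $\tilde\Gamma$ and $\Gamma$ lie in $\Gc$, so Proposition~\ref{prop:A} gives
\[
\|\rho_{\tilde\Gamma}-\rho_\Gamma\|_1\le K_c\|\hat\Gamma-\Gamma\|_F .
\]
Because $c\le1$, we have $2c(2-c)\ge 2c$, hence $K_c\le(2c)^{-1/2}$. It therefore suffices to show
\[
\|\hat\Gamma-\Gamma\|_F^2\le 2c\,\epsilon^2
\]
with probability at least $2/3$.

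\textbf{Variance of the shadow estimator.} I would invoke the matchgate shadow variance bound of \cite{ZRM21} for quadratic Majorana observables. For each pair $a<b$, the single-shot unbiased estimator of $\Gamma_{ab}=\Tr(\rho B_{ab})$ has the form $\binom{2m}{2}\binom{m}{1}^{-1}$ times a bounded readout, that is, a multiplier of about $2m-1$ on a quantity in $[-1,1]$. Its variance is therefore at most of order $2m$. The precise statement from \cite{ZRM21} for $k=1$ pairs is $\binom{2m}{2}/\binom m1=2m-1$.

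Summing over the $p=m(2m-1)$ entries, and counting each entry twice in the full antisymmetric matrix, gives
\[
\E\|\hat\Gamma-\Gamma\|_F^2\le\frac{2p(2m-1)}{N}\le\frac{8m^3}{N}.
\]
Markov's inequality then yields
\[
\|\hat\Gamma-\Gamma\|_F^2\le\frac{24m^3}{N}
\]
with probability at least $2/3$. With $N=\lceil12m^3/(c\epsilon^2)\rceil$ this is at most $2c\epsilon^2$, which is what the reduction requires. The constant $12$ should come out of this bookkeeping; if the variance constant from \cite{ZRM21} is slightly worse, I would absorb it there.

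\textbf{High-probability version.} For confidence $1-\delta$, I would use a median-of-means step in Hilbert space. Split the samples into $O(\log(1/\delta))$ batches, form the geometric median of the batch matrices $\hat\Gamma$ in Frobenius norm (or apply the majority-cluster trick of Lemma~\ref{lem:amplify} at the level of matrices), and then project onto $\Gc$ as before.

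\textbf{$\Theta$ claims.} The upper bound is the estimate above. The lower bound follows from Theorem~\ref{thm:main} applied to its hard family, which must lie in $\Gc$. That family has $\|\Gamma\|_{\rm op}=\max_j\tanh\lambda_j\le\tanh r$ with $r=120\sqrt3\,\epsilon\le r_0$. It therefore lies in $\Gc$ whenever $\tanh(120\sqrt3\,\epsilon)\le1-c$. For $c\le1-\tanh r_0$ this holds automatically, since $\epsilon\le1/3600$ gives $r\le r_0$. Otherwise it is exactly the stated extra condition, and $\tanh x\le x$ gives the sufficient form $\epsilon\le(1-c)/(120\sqrt3)$. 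The bound of Theorem~\ref{thm:main} is proved as an average over a prior supported on this family, so it holds for protocols promised only to succeed on $\Gc$. The factor-$2$ difference between $T$ and $\|\cdot\|_1$ only changes constants.

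\textbf{Collective claim and main obstacle.} The collective rate $O(m^2/\epsilon^2)$ is inherited from \cite{CFGLMWW26}. I expect the main obstacle to be pinning down the exact per-entry variance constant of matchgate shadows from \cite{ZRM21}, including the accounting of cross terms that produces the factor $12$.
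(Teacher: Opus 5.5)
Your proposal is correct and follows the paper's proof essentially step by step. The shared steps are: nonexpansive projection plus Proposition~\ref{prop:A} with $K_c\le(2c)^{-1/2}$; the per-entry shadow bound $2m-1$, giving $\E\|\hat\Gamma-\Gamma\|_F^2\le2p(2m-1)/N\le8m^3/N$; Markov at $24m^3/N$; a majority-cluster (or median) boost for confidence $1-\delta$; and membership of the hard family of Theorem~\ref{thm:main} in $\Gc$ for the matching lower bound.

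The one point to tighten is your variance heuristic. Boundedness of the readout alone would only give a second moment of $(2m-1)^2$. The factor $2m-1$ arises because the single-shot estimate is $\pm(2m-1)$ times a $\pm1$ readout only when the pair $\{a,b\}$ lands on one of the $m$ measured mode pairs, which happens with probability $m/\binom{2m}{2}=1/(2m-1)$, and is $0$ otherwise. The paper makes this explicit with random signed Majorana permutations. So the constant $12$ comes out exactly, with no cross terms to account for, since $\E\|\hat\Gamma-\Gamma\|_F^2$ is just the sum of per-entry variances.
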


\begin{proof}
By Theorem~1 and Eq.~(15) of \cite{ZRM21}, the squared shadow norm of $ic_ac_b$ in the matchgate ensemble is $\binom{2m}{2}/\binom m1=2m-1$. Concretely, it suffices to use Majorana swaps: each shot applies a Gaussian Clifford unitary $U_\pi$ with $U_\pi c_aU_\pi^\dagger=\pm c_{\pi(a)}$ for a uniformly random permutation $\pi$ of the $2m$ labels (signs chosen so that the signed permutation matrix has determinant $+1$), then measures every $ic_{2j-1}c_{2j}$, a computational-basis measurement in the Jordan--Wigner encoding, obtaining $z_j\in\{\pm1\}$; the single-shot estimate of $\Gamma_{ab}$ is $\pm(2m-1)z_j$, with the sign fixed by $\pi$, if $\{\pi(a),\pi(b)\}=\{2j-1,2j\}$, and $0$ otherwise. The pair $\{a,b\}$ lands on one of the $m$ measured mode pairs with probability $m/\binom{2m}{2}=1/(2m-1)$, so the estimate is unbiased, and since $z_j^2=1$ its second moment is $(2m-1)^2/(2m-1)=2m-1$, the corresponding second-moment bound. Hence the empirical mean of $N$ single-shot estimates of $\Gamma_{ab}$ has variance at most $(2m-1)/N$, and $\E\|\hat\Gamma-\Gamma\|_F^2=2\sum_{a<b}\operatorname{Var}\le2p(2m-1)/N\le8m^3/N$. By Markov's inequality, $\|\hat\Gamma-\Gamma\|_F^2\le24m^3/N$ with probability at least $2/3$. Projection onto a closed convex set is nonexpansive and fixes $\Gamma\in\Gc$, so $\|\tilde\Gamma-\Gamma\|_F\le\|\hat\Gamma-\Gamma\|_F$, and Proposition~\ref{prop:A} gives $\|\rho_{\tilde\Gamma}-\rho_\Gamma\|_1\le K_c\sqrt{24m^3/N}=\sqrt{12m^3/(c(2-c)N)}\le\epsilon$ for $N=12m^3/(c\epsilon^2)$, using $2-c\ge1$. For confidence $1-\delta$, put $r=\epsilon/(3K_c)$ and run $B=\lceil8\ln(1/\delta)\rceil$ independent batches of $N_0=32m^3/r^2$ copies; each batch has $\|\hat\Gamma_j-\Gamma\|_F\le r$ with probability at least $3/4$ (Markov), so by Hoeffding more than half the batches are good with probability at least $1-e^{-B/8}\ge1-\delta$. Select any $\hat\Gamma_j$ with more than half of all batches within $2r$ of it (a good one qualifies, and any qualifying one is within $3r$ of $\Gamma$ since two majorities intersect), project it onto $\Gc$ and output; then $\|\rho_{\rm out}-\rho_\Gamma\|_1\le3K_cr=\epsilon$ with $BN_0=O(m^3\log(1/\delta)/(c\epsilon^2))$. The lower bound is Theorem~\ref{thm:main}, whose hard family, the states with $\|\theta\|_2\le r=120\sqrt3\,\epsilon$, has normal-form values at most $\tanh r$ and so lies in $\Gc$ whenever $\tanh r\le1-c$, in particular for all $\epsilon\le1/3600$ when $c\le1-\tanh r_0$; the collective bound is \cite[Thm.~4.2]{CFGLMWW26}.
\end{proof}

\begin{remark}[worst case over all mixed Gaussian states]\label{rem:worst}
Without the promise, the same shadow estimator and Frobenius projection onto all valid covariance matrices, combined with $\|\rho-\sigma\|_1\le\tfrac12\|\Gamma(\rho)-\Gamma(\sigma)\|_1$ \cite[Thm.~1]{BMEL25}, give a constant-success upper bound $O(m^4/\epsilon^2)$. Indeed, $\|X\|_1\le\sqrt{2m}\|X\|_F$ and the proof of Theorem~\ref{thm:upper} give trace-norm error at most $\sqrt{12m^4/N}$; the conservative choice $N=\lceil48m^4/\epsilon^2\rceil$ therefore suffices. Thus the worst-case single-copy rate lies between $\Omega(m^3/\epsilon^2)$ and $O(m^4/\epsilon^2)$ at constant success probability. The extra factor $m$ in this upper-bound argument comes from converting covariance Frobenius norm to trace norm. Proposition~\ref{prop:necessity} rules out removing it by a uniform Frobenius continuity inequality alone; it does not establish a lower bound on this estimator or on other single-copy protocols.
\end{remark}

\begin{remark}[an obstruction to a same-spectrum Frobenius bound]\label{rem:false}
Even restricting to pairs of states with the same normal-form spectrum does not yield a dimension-independent Frobenius continuity bound. Let $m$ be even, pair the modes $(k,k')$, $k\le m/2$, with mode $k$ pure ($\lambda=1$) and mode $k'$ carrying the deficit $\lambda=1-2/m$, and let $\Gamma'=R\Gamma R^T$ with $R$ the number-conserving rotation by an angle $\varphi_k$ inside each pair, so that $\Gamma$ and $\Gamma'$ have the same normal-form spectrum. Concretely, in the occupation basis of the pair, $\rho_\Gamma$ restricted to pair $k$ is $(1-1/m)|11\rangle\langle11|+(1/m)|10\rangle\langle10|$, and $R$ fixes $|11\rangle$ and sends $|10\rangle$ to $|\chi_k\rangle=\cos\varphi_k|10\rangle+\sin\varphi_k|01\rangle$. Expanding the product over the set $H$ of pairs carrying the one-particle term, $\rho_\Gamma$ and $\rho_{\Gamma'}$ are mixtures, with the same weights $w_H=(1/m)^{|H|}(1-1/m)^{m/2-|H|}$, of pure states $\Phi_H$ and $\Phi'_H$ supported on mutually orthogonal subspaces for different $H$ (some pair then carries $|11\rangle$ on one side and a one-particle state on the other), with $\langle\Phi_H|\Phi'_H\rangle=\prod_{k\in H}\cos\varphi_k$; and two pure states with overlap $\omega$ satisfy $\|\psi\psi^\dagger-\chi\chi^\dagger\|_1=2(1-|\omega|^2)^{1/2}$. Then $\|\Gamma-\Gamma'\|_F=(4/m)(\sum_k\sin^2\varphi_k)^{1/2}$, whereas $\rho_\Gamma-\rho_{\Gamma'}$ splits as a direct sum over hole sets $H\subset\{1,\dots,m/2\}$ of blocks of rank at most two, giving exactly
\[
\|\rho_\Gamma-\rho_{\Gamma'}\|_1=2\sum_Hw_H\Big(1-\prod_{k\in H}\cos^2\varphi_k\Big)^{1/2},\qquad w_H=m^{-|H|}(1-1/m)^{m/2-|H|},
\]
and the $|H|=1$ terms alone give $\|\rho_\Gamma-\rho_{\Gamma'}\|_1\ge(2e^{-1/2}/m)\sum_k|\sin\varphi_k|$; for equal angles the ratio of trace to Frobenius distance is at least $(e^{-1/2}/2)\sqrt{m/2}>0.21\sqrt m$.
Hence a same-spectrum Frobenius bound $\|\rho_\Gamma-\rho_{Q\Gamma Q^T}\|_1\le C\|\Gamma-Q\Gamma Q^T\|_F$ with $C$ independent of $m$ is false. On this family the one-hole sectors already contribute an $\ell^1$ sum of angle errors, whereas the covariance Frobenius norm gives an $\ell^2$ sum. This is an obstruction to a particular metric argument, not an $m^4$ sample-complexity lower bound: a reconstruction algorithm may exploit occupation-dependent variances and the structure of the family. Analysing such estimators uniformly near pure modes remains open.
\end{remark}

\section{A two-mode illustration}\label{sec:twomode}

The smallest instance of the constraint of Proposition~\ref{prop:fisher} is $m=2$, $p=6$, at $\theta=0$, where $\rho_0=I_4/4$ and $\partial_j\rho_0=B_j/4$. Throughout this section $J$ denotes the classical Fisher matrix $I_M(0)$ of Section~\ref{sec:fisher} at $\theta=0$: for a POVM $M$ with outcome probabilities $q(x)=\Tr(M_x\rho_0)$, $J=\sum_xq(x)\,s_xs_x^{\!\top}$ with score $s_x=(\partial_j\log q(x))_j=(\Tr(M_x\partial_j\rho_0)/q(x))_j$; neither the symplectic block $J$ of Section~\ref{sec:tight} nor the prior information $J_\varpi$ of Section~\ref{sec:vantrees} occurs here, and $I_2,I_6,I_{16}$ are identity matrices. With the Jordan--Wigner encoding $c_1=XI$, $c_2=YI$, $c_3=ZX$, $c_4=ZY$ the six quadratics are
\[
B_{12}=-ZI,\quad B_{34}=-IZ,\quad B_{13}=YX,\quad B_{24}=-XY,\quad B_{14}=YY,\quad B_{23}=-XX .
\]
All six commute with the parity $P=-c_1c_2c_3c_4=ZZ$. Write $H_+=\operatorname{span}\{|00\rangle,|11\rangle\}$ and $H_-=\operatorname{span}\{|01\rangle,|10\rangle\}$ for the even and odd blocks, each a logical qubit. Restricted to the blocks, the two quadratics of each perfect matching of $\{1,2,3,4\}$ become one logical Pauli axis: $\{B_{12},B_{34}\}$ act as $(-Z,-Z)$ on $H_+$ and $(-Z,+Z)$ on $H_-$; $\{B_{14},B_{23}\}$ as $(-X,-X)$ on $H_+$ and $(+X,-X)$ on $H_-$; $\{B_{13},B_{24}\}$ as $(+Y,-Y)$ on $H_+$ and $(+Y,+Y)$ on $H_-$. Consequently, for a state with block weights $p_\pm$ and normalised block Bloch vectors $r_\pm$, $\sum_{a<b}\langle B_{ab}\rangle^2=2(p_+^2|r_+|^2+p_-^2|r_-|^2)\le2=m$, with equality exactly for pure states inside one parity block, and the six-vectors of pure states in opposite blocks are orthogonal.

\paragraph{One copy.}
For a rank-one effect $w|\psi\rangle\langle\psi|$ the outcome probability is $w/4$ and the score is the six-vector $b(\psi)$ of quadratic expectations, so $\tr J=\sum_x(w_x/4)\|b(\psi_x)\|^2\le\tfrac14\cdot4\cdot2=2$, the case $m=2$ of Proposition~\ref{prop:fisher} (refining to rank-one effects can only increase Fisher information). Measuring the two commuting quadratics of one matching gives $J=I_2$ on that matching and zero elsewhere; a uniformly random matching gives $J_{\rm one}=\tfrac13I_6$, trace $2$.

\paragraph{Two copies, separable measurements.}
For product effects $A_x\otimes B_x$ with rank-one factors (adaptive two-copy schemes are of this form), the score is $b(\alpha_x)+b(\beta_x)$ and the cross term vanishes: $\sum_x\Tr(A_xB_k)B_x=\Tr_1[(B_k\otimes I)\sum_xA_x\otimes B_x]=\Tr(B_k)\,I=0$. Hence $\tr J\le\tfrac1{16}\cdot16\cdot(2+2)=4$ for every separable, and in particular every adaptive, two-copy POVM at $\theta=0$; the isotropic separable baseline is $J_{\rm two}=\tfrac23I_6$.

\paragraph{A collective two-copy measurement with $\tr J=5$.}
Resolve the parities of both copies (all outcomes are kept; there is no postselection). On the equal-parity blocks $H_\varepsilon\otimes H_\varepsilon$, apply to the two logical qubits the five-outcome POVM
\[
E_j=\tfrac34\,|\psi_j\psi_j\rangle\langle\psi_j\psi_j|\ (j=1,\dots,4),\qquad E_5=|\Psi_-\rangle\langle\Psi_-| ,
\]
where the $\psi_j$ have tetrahedral Bloch vectors and $\Psi_-$ is the logical singlet; since the tetrahedron is a $2$-design, $\sum_jE_j+E_5=I_4$. This is the two-copy Fisher-symmetric measurement of Zhu and Hayashi \cite{ZH18} (a notion introduced for pure states in \cite{LFGKC16}) (symmetric-subspace SIC \cite{RBSC04} plus singlet), realised experimentally by Hou et al.\ \cite{Hou18}. On the opposite-parity blocks, measure the two logical qubits separately along a common Pauli axis $\alpha\in\{X,Y,Z\}$ chosen uniformly at random, retaining the label. With $\rho_0\otimes\rho_0=I/16$ and $\partial_k(\rho_0\otimes\rho_0)=(B_k\otimes I+I\otimes B_k)/16$, a rank-one effect $w|\Psi\rangle\langle\Psi|$ has probability $w/16$ and score $s_k=\langle\Psi|B_k\otimes I+I\otimes B_k|\Psi\rangle$. The eight tetrahedral effects have probability $3/64$ and score $2b(\psi_j)$ of squared length $8$, contributing $8\cdot\tfrac3{64}\cdot8=3$; their sum is $\tfrac12I_6$ for any orientation of the two tetrahedra, because the within-matching couplings have opposite signs on $H_+$ and $H_-$ and cancel. The two singlet effects have probability $1/16$ and score zero, since $\sigma\otimes I+I\otimes\sigma$ annihilates the singlet. The eight opposite-parity product effects have probability $1/16$ and score $b(a)\pm b(b)$ with orthogonal summands, squared length $4$, contributing $8\cdot\tfrac1{16}\cdot4=2$; their sum is $I_2$ on the matching of the axis $\alpha$, and averaging over $\alpha$ gives $\tfrac13I_6$. Altogether $J_{\rm pair}=\tfrac56I_6$ and $\tr J_{\rm pair}=5$, against the separable ceiling $4$ and the isotropic separable baseline $\tfrac23I_6$: $25\%$ more Fisher information per pair, and a $20\%$ smaller asymptotic covariance for efficient estimators, in every direction of the six-dimensional parameter space.

\begin{lemma}\label{lem:five}
For every two-copy POVM at $\theta=0$, $\tr J\le5$.
\end{lemma}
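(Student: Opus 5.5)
The plan is to reduce Lemma~\ref{lem:five} to an eigenvalue problem for one fixed operator on four copies of the system. First I refine the POVM into rank-one effects $w_x|\Psi_x\rangle\langle\Psi_x|$ on $\mathbb C^4\otimes\mathbb C^4$; as in the one-copy paragraph, refining can only increase Fisher information. Then $\sum_xw_x=16$, and by the computation preceding the lemma
\[
\tr J=\tfrac1{16}\sum_xw_x\sum_k\langle\Psi_x|C_k|\Psi_x\rangle^2,\qquad C_k=B_k\otimes I+I\otimes B_k .
\]
The key rewriting is $\langle\Psi|C_k|\Psi\rangle^2=\langle\Psi^{\otimes2}|C_k\otimes C_k|\Psi^{\otimes2}\rangle$. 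This gives $\tr J=\tfrac1{16}\Tr(XA)$ with
\[
X=\sum_xw_x(|\Psi_x\rangle\langle\Psi_x|)^{\otimes2},\qquad A=\sum_kC_k\otimes C_k ,
\]
where both operators act on $(\mathbb C^{16})^{\otimes2}$. The lemma thus becomes a bound on $\Tr(XA)$ over the convex set of admissible $X$.

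Next I record the constraints on $X$ that I intend to use.
\begin{itemize}
\item $X\ge0$, and $X$ is supported on the symmetric subspace $\Sym^2(\mathbb C^{16})$.
\item $\Tr_2X=\sum_xw_x|\Psi_x\rangle\langle\Psi_x|=I_{16}$, so in particular $\Tr X=16$.
\item $X\le I$. This follows from $(\Psi\Psi^\dagger)^{\otimes2}\le\Psi\Psi^\dagger\otimes I$ and summing over $x$.
\end{itemize}
With only these, $\Tr(XA)$ is at most the sum of the $16$ largest eigenvalues of $A$ compressed to the symmetric subspace. The target is to show that this sum is $80$.

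To get the spectrum of $A$ I use the Lie structure already exploited in Lemma~\ref{lem:adK}. The $B_k$ span $\mathfrak{so}(4)\cong\mathfrak{su}(2)\oplus\mathfrak{su}(2)$, and on $\mathbb C^4=H_+\oplus H_-$ the two parity blocks carry the two inequivalent spin-$\tfrac12$ representations; this is the matching/logical-Pauli dictionary of this section. Choose an orthonormal basis of $\mathfrak{so}(4)$ adapted to the splitting, $\sum_kB_k\otimes B_k=\sum_\alpha(L_\alpha\otimes L_\alpha+R_\alpha\otimes R_\alpha)$, with $L,R$ the two $\mathfrak{su}(2)$ factors suitably normalised. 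Then $A$ is a difference of quadratic Casimirs:
\[
A=\tfrac12\big[\mathrm{Cas}(\text{total})-\mathrm{Cas}(\text{first pair})-\mathrm{Cas}(\text{second pair})\big]
\]
for the $\mathfrak{so}(4)$ action on four copies, grouped as two pairs. The eigenvalues then follow from decomposing $4^{\otimes4}$, restricted to states symmetric under swapping the pairs, into $(j_L,j_R)$ irreps. I will check the normalisation against two known values. A product state of two copies in the same parity block gives $\|s\|^2=8$, the largest single-vector value. The singlet gives $0$.

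The main obstacle is that the spectral relaxation may be too weak. If the symmetric compression of $A$ has more than $16$ eigenvalues near $8$, for instance from the many same-parity products, the top-$16$ sum will exceed $80$. In that case I would restore the partial-trace constraint $\Tr_2X=I$ through a dual certificate: find a Hermitian $Y$ on $\mathbb C^{16}$ with
\[
P_{\rm sym}\big(A-Y\otimes I\big)P_{\rm sym}\le0 ,
\]
so that $\Tr(XA)\le\Tr(X(Y\otimes I))=\Tr Y$. The natural ansatz respects the symmetry of the problem: $Y=y_{\rm eq}\Pi_{\rm eq}+y_{\rm op}\Pi_{\rm op}$, where $\Pi_{\rm eq}$ and $\Pi_{\rm op}$ project onto the equal- and opposite-parity sectors of the two copies. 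The target is $\Tr Y=80$. By complementary slackness the optimal weights should be read off the measurement that attains $5$. Its Zhu--Hayashi part gives average squared score $8\cdot\tfrac34+0\cdot1=6$ per unit weight on the $8$-dimensional equal-parity sector, and its product part gives $4$ on the $8$-dimensional opposite-parity sector. This suggests $y_{\rm eq}=6$ and $y_{\rm op}=4$, which indeed gives $\Tr Y=8\cdot6+8\cdot4=80$. Verifying the operator inequality then reduces, sector by sector, to the Casimir eigenvalue computation above. I expect that verification to be the bulk of the work.
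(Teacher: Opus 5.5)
Your lifting $\tr J=\tfrac1{16}\Tr(XA)$ and the Casimir form of $A$ are correct, but neither of your two bounds closes. In the block notation of this section, $A=8\big(\mathbf J_L^{(12)}\cdot\mathbf J_L^{(34)}+\mathbf J_R^{(12)}\cdot\mathbf J_R^{(34)}\big)$. Here $\mathbf J_L,\mathbf J_R$ are the spin-$\tfrac12$ operators of the parity blocks $H_+,H_-$, and the superscripts label the two pairs of copies.

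\textbf{The spectral relaxation fails.} The spin-$2$ states of the all-$H_+$ and all-$H_-$ sectors give ten swap-symmetric eigenvectors with eigenvalue $8$. There is nothing strictly between $4$ and $8$, and there are many eigenvectors with eigenvalue $4$. So your top-$16$ sum is $10\cdot8+6\cdot4=104$, which gives only $\tr J\le6.5$, as you feared.

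\textbf{The fallback certificate is infeasible.} The certificate $Y=6\Pi_{\rm eq}+4\Pi_{\rm op}$ fails even on the product vectors $\Psi\otimes\Psi$, which are all that $X$ contains. For $\psi\in H_+$ and $\Phi=(\psi\otimes\psi)^{\otimes2}$, you get $\langle\Phi|A|\Phi\rangle=\|s(\psi\otimes\psi)\|^2=8$, but $\langle\Phi|Y\otimes I|\Phi\rangle=6$. These are exactly the tetrahedral effects of the optimal measurement, so your planned sector-by-sector verification would fail. The error is in the slackness heuristic. Complementary slackness requires $\langle\Psi_x|Y|\Psi_x\rangle=\|s(\Psi_x)\|^2$ for each effect in the support, not on average over a sector. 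The tetrahedral effects (squared score $8$) and the singlet (squared score $0$) average to $6$, which is why your trace came out right. But they force $Y$ to equal $8$ on the triplet and $0$ on the singlet of each equal-parity block.

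\textbf{The fix.} Take $Y=8\Pi^{\rm eq}_{\rm sym}+4\Pi_{\rm op}$, where $\Pi^{\rm eq}_{\rm sym}$ projects onto the two three-dimensional symmetric subspaces of $H_+\otimes H_+$ and $H_-\otimes H_-$. Still $\Tr Y=8\cdot6+4\cdot8=80$. With this $Y$ your operator inequality holds sector by sector:
\begin{itemize}
\item On triplet$\otimes$triplet, $A\le8=\tfrac12(Y\otimes I+I\otimes Y)$.
\item $A=0$ whenever one pair is an equal-parity singlet, or the two pairs sit in equal-parity blocks of opposite parity.
\item $A\le4$ whenever a pair lies in an opposite-parity block. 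There $\tfrac12(Y\otimes I+I\otimes Y)$ equals $6$ or $4$, or $2$ next to a singlet, where $A=0$ anyway.
\end{itemize}

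However, you only need the inequality on $\Psi\otimes\Psi$, that is, $\|s(\Psi)\|^2\le\langle\Psi|Y|\Psi\rangle$ for every unit $\Psi$. That is exactly what the paper proves directly:
\begin{enumerate}
\item Decompose $\Psi$ over the four parity-pair blocks and use convexity, $\|s\|^2\le\sum_\beta q_\beta\|s^\beta\|^2$.
\item On equal-parity blocks, bound $\|s^\beta\|^2\le8a^2\le8a$ via the spin-one expectation; on opposite-parity blocks, bound $\|s^\beta\|^2\le4$.
\item Sum against $\sum_xw_x|\Psi_x\rangle\langle\Psi_x|=I_{16}$.
\end{enumerate}
Once corrected, your route is the paper's argument in dual form, and the four-copy Casimir diagonalisation becomes optional extra work.
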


\begin{proof}
Refine to rank-one effects $w_x|\Psi_x\rangle\langle\Psi_x|$. Decompose $\Psi=\sum_\beta\sqrt{q_\beta}\,\Psi_\beta$ over the four parity-pair blocks $\beta\in\{++,--,+-,-+\}$ with $\Psi_\beta$ normalised. Since $B_k\otimes I+I\otimes B_k$ is block diagonal, $s=\sum_\beta q_\beta s^\beta$ with $s^\beta$ the score of $\Psi_\beta$, and $\|s\|^2\le\sum_\beta q_\beta\|s^\beta\|^2$ by Cauchy--Schwarz. In an opposite-parity block the operator acts as $\pm\sigma_\alpha\otimes I\pm I\otimes\sigma_\alpha$ with the block signs listed after the Jordan--Wigner display, so $\|s^\beta\|^2=2(|r_1|^2+|r_2|^2)\le4$ in terms of the two reduced Bloch vectors. In an equal-parity block it acts as $\pm(\sigma_\alpha\otimes I+I\otimes\sigma_\alpha)$, which annihilates the singlet and preserves the symmetric subspace; writing $\Psi_\beta=\sqrt a\,\phi_{\rm sym}+\sqrt{1-a}\,\Psi_-$ gives $s^\beta_k=\pm2a\langle\phi_{\rm sym}|\tfrac12(\sigma_{\alpha(k)}\otimes I+I\otimes\sigma_{\alpha(k)})|\phi_{\rm sym}\rangle$, and the spin-one expectation has length at most one, so $\|s^\beta\|^2\le2(2a)^2=8a^2\le8a$. Summing with $\sum_xw_xq_{x,\beta}a_{x,\beta}=\Tr\Pi^\beta_{\rm sym}=3$ for each equal block and $\sum_xw_xq_{x,\beta}=4$ for each opposite block,
\[
\tr J=\sum_x\frac{w_x}{16}\|s_x\|^2\le\frac1{16}\,(8\cdot3+8\cdot3+4\cdot4+4\cdot4)=5 .
\]
The scheme above saturates every step: pure spin-coherent effects on the symmetric subspaces, pure product effects on the opposite blocks, singlets isolated.
\end{proof}

\paragraph{Realisation and scope.}
For each axis label the joint POVM has $18$ rank-one effects summing to $I_{16}$, so the Naimark isometry \cite{NC10} $V|\phi\rangle=\sum_x\sqrt{w_x}\langle\Psi_x|\phi\rangle|x\rangle$ embeds $\mathbb C^{16}$ into $\mathbb C^{16}\otimes\mathbb C^2$: four data qubits plus one measurement ancilla with computational readout suffice in principle, $14$ of the $32$ outcomes having zero ideal probability. This is an algebraic qubit count, not a compiled gate count. Mixed inputs can be prepared by randomising pure components independently for the two copies, with the preparation labels withheld from the learner. The meaningful success criterion is a lower confidence bound on the Fisher trace per input pair above $4$ after preparation and derivative-estimation uncertainties are included; a fixed $\theta$-independent channel on each individual input, or classical readout noise, can be absorbed into its POVM, so noisy or modelled values above $4$ are not evidence against Proposition~\ref{prop:fisher}. Collective-measurement advantages for qubit estimation, predicted by Massar and Popescu \cite{MP95} and quantified for mixed qubits by Hayashi and Matsumoto \cite{HM08}, have been demonstrated on photonic \cite{Hou18,ZYY25,Yung26}, superconducting \cite{Con23,CELA23} and trapped-ion \cite{Yung25} platforms, including genuine three-copy measurements \cite{ZYY25} and, on a photonic chip, exactly the three-parameter two-copy qubit task that each parity block here inherits \cite{Yung26}; two-copy Bell measurements on a superconducting processor have likewise demonstrated the single-copy versus quantum-memory separation of \cite{CCHL22} for learning tasks \cite{HBC22}. So this is not a first demonstration of collective advantage; its content is only the embedding of the constraint into the full six-parameter two-mode Gaussian family, where the single-copy ceiling $\tr J\le m$ is met exactly. It says nothing about the $m^3$ versus $m^2$ scaling. A first hardware calibration is reported next.

\paragraph{A hardware calibration of the measurement response.}
The implementation fixed the opposite-parity readout axis to $Z$, rather than averaging it over $X,Y,Z$. It therefore has the same ideal Fisher trace $5$ but not the isotropic matrix $\tfrac56 I_6$; the $20\%$ improvement in every direction stated above concerns the axis-averaged theoretical measurement. This fixed-axis measurement was compiled for five superconducting qubits of the IBM device \texttt{ibm\_kingston}: four data qubits carry the two copies and one ancilla realises the Naimark dilation; the parity is decoded and measured mid-circuit, and the singlet flag and the triplet-to-tetrahedron unitary are applied conditionally, the longest branch containing eleven controlled-$Z$ gates. The analysis plan, the statistic and the decision rule were fixed and hashed before submission, for the pilot run. Each copy's input $I/4$ and the six centre derivatives were reconstructed exactly as linear combinations of the sixteen product-eigenstate preparations per direction, the preparation labels being averaged out before any likelihood is formed. Explicitly, if $|v_k\rangle$, $k=1,\dots,16$, are the product eigenstates of $B_j\otimes I$ and $I\otimes B_j$ with eigenvalues $s_A,s_B\in\{\pm1\}$, then $\tfrac1{16}\sum_k|v_k\rangle\langle v_k|=\rho_0\otimes\rho_0$ and $\tfrac1{16}\sum_k(s_A+s_B)|v_k\rangle\langle v_k|=(B_j\otimes I+I\otimes B_j)/16=\partial_j(\rho_0\otimes\rho_0)$, so with $p_{j,k}(x)$ the outcome distribution recorded for preparation $k$ of direction $j$, the centre distribution $p_x$ is the average of the $p_{j,k}(x)$ over $k$ and $j$, and $\partial_jp_x=\tfrac1{16}\sum_k(s_A+s_B)\,p_{j,k}(x)$. Every outcome was retained, with no post-selection and no error mitigation. Because squaring noisy derivative estimates biases a plug-in Fisher estimate upward, the pre-declared statistic is the linear variational lower bound
\[
L=\sum_{j=1}^{6}\Big[2\sum_x f_{x,j}\,\partial_j p_x-\sum_x p_x f_{x,j}^2\Big]\le\tr J ,
\]
valid for every true $(p,\partial p)$ by completing the square, with the scores $f_{x,j}=\partial_j p^{\rm ideal}_x/p^{\rm ideal}_x$ frozen from the ideal POVM and zero-probability bins carrying $f=0$. Over $196\,608$ repetitions ($393\,216$ input copies, four blocks, with directions and preparation labels in random order within each block) the observed value was $\hat L=4.16$ per input pair, with a one-sided $95\%$ Hoeffding lower bound of $3.99$ (half-width $h=0.169$ by Hoeffding's inequality \cite{Hoe63}; the per-shot weight, its ranges and the shot counts behind it are in the ancillary files); the four blocks gave $4.18$, $4.14$, $4.18$ and $4.14$, the descriptive plug-in Fisher trace was $4.24$, and a calibration-based noise model had predicted $4.22$ for the same statistic. The point estimate exceeds the separable ceiling $4$, but the pre-registered test does not resolve it at the declared confidence; we report the pilot as unresolved and do not substitute a less conservative interval. An independent replication was then pre-registered with the same compiled circuits, qubits, frozen scores and decision rule, fourfold shots (eight blocks of $1024$ per preparation row, $786\,432$ repetitions, $1\,572\,864$ input copies, a fresh randomisation seed) and the pilot excluded from its test; its half-width is therefore $0.085$ and it passes if $\hat L>4.085$. It gave $\hat L=4.20$ with one-sided $95\%$ Hoeffding bounds $4.12$ and $4.29$, so the pre-registered criterion is met; the eight blocks ranged from $4.11$ to $4.28$ and the plug-in trace was $4.28$ (Figure~\ref{fig:witness}). Independently of shot noise, both statements are calibrations of the implemented noisy measurement on encoded logical qubits under the prescribed preparations and a common channel: preparation error and drift lie outside the confidence bound and can invalidate the identification of the reconstructed statistic with Fisher information for the intended family; they do not change the mathematical ceiling for that family, the witness concerns the realised POVM and not the ideal value $5$, and no unconditional hardware advantage is claimed. The raw per-shot data, the frozen analysis plans, the compiled circuits and an independent recount script reproducing every number quoted here are provided as ancillary files with this submission (IBM job identifiers damqcftr85ps73fd7qcg and damqs8g2fm4c73f3dmf0, physical qubits 103, 104, 105, 106 and 117, context seeds 20260918 and 2026091802).

\begin{figure}[t]
\centering
\includegraphics[width=0.9\textwidth]{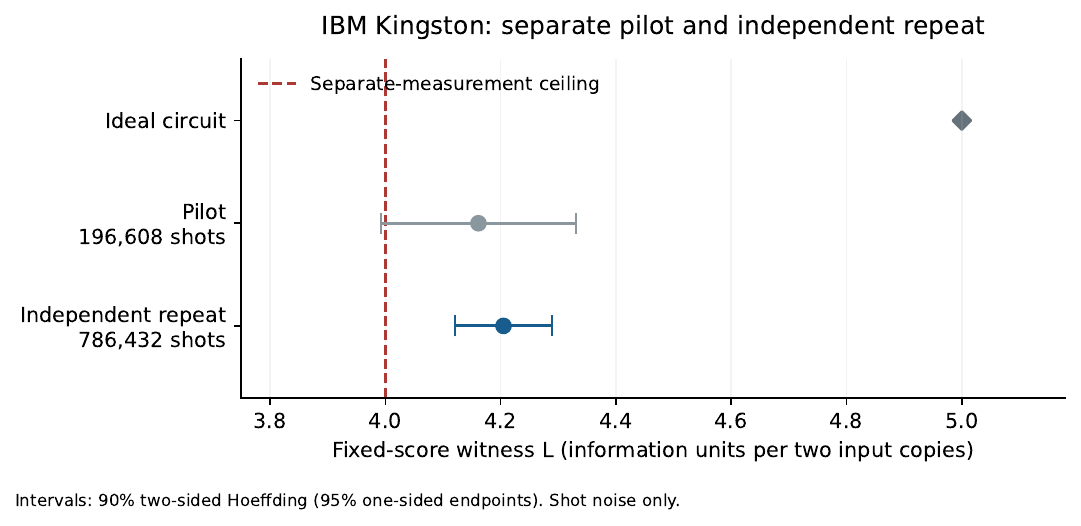}
\caption{The pre-declared linear witness $L$ (a lower bound on the Fisher trace per two input copies) for the ideal circuit, the pilot run of $196\,608$ repetitions and the independent replication of $786\,432$ repetitions on \texttt{ibm\_kingston}. The dashed line is the separable ceiling $4$; the bars are two-sided $90\%$ Hoeffding intervals, whose endpoints are the one-sided $95\%$ bounds, and cover shot noise only.}
\label{fig:witness}
\end{figure}

\section{The bosonic counterpart near the vacuum}\label{sec:bosonic}

\emph{Notation for this section.} $\Gamma_{\rm s}(X)$ and $\Gamma_{\rm a}(X)$ denote the symmetric and exterior second quantisations of an operator $X$ (the covariance matrix $\Gamma$ of the earlier sections does not occur here), $\mathsf N$ the photon-number matrix, $\vartheta(x)$ the single-mode thermal state, and $c$, $L$, $\epsilon_0$, $\epsilon_1$ absolute constants of \cite{KLMR26} where so stated; $N$ is a copy count throughout.

Theorem~\ref{thm:main} concerns fermions. This section records what the same question looks like for bosons, where, to our knowledge, no lower bound above $\Omega(m^2/\epsilon^2)$ was known for single-copy protocols with arbitrary, non-Gaussian, adaptive measurements: Theorem~2 of \cite{CMFLMHCP26} gives $\Omega(m^3/\epsilon^2)$ for measurements with non-negative Wigner functions, Theorem~3 there gives $\Omega(m^2/\epsilon^2)$ for arbitrary measurements, and \cite{CGYZ26} treats Gaussian measurements. We show that the separation $d^3$ versus $d^2$ between single-copy and collective tomography of unstructured $d$-dimensional states \cite{HHJWY17,OW16,CHLLS22,LN25} transports, through second quantisation, to a class of passive bosonic Gaussian states near the vacuum (in the sense of small occupations, at most about $1/(2m)$ per mode near $w=I/m$ and total mean photon number at most $1$; not in trace distance, since $T(\rho_{I/m},|0\rangle\langle0|)=1-(1-\tfrac1{2m})^m\to1-e^{-1/2}$), with the block lower bound of Keskin, Luo, Majid and Radzihovsky \cite{KLMR26} as the one new ingredient. Throughout this section $T(\rho,\sigma)=\tfrac12\|\rho-\sigma\|_1$, as in Sections~\ref{sec:intro} to~\ref{sec:vantrees}, and $\|\cdot\|_1$ is the trace norm without the half, as in Section~\ref{sec:tight}.

\paragraph{Setting.}
Let $a_1,\dots,a_m$ be the annihilation operators of $m$ bosonic modes, $[a_i,a_j^\dagger]=\delta_{ij}$, acting on the Fock space $\Fock=\bigoplus_{n\ge0}\Fock_n$, where $\Fock_n$, the $n$-photon sector, is the eigenspace of the number operator $\hat N=\sum_ia_i^\dagger a_i$ with eigenvalue $n$; it is spanned by the occupation states $|n_1,\dots,n_m\rangle$ with $\sum_in_i=n$, and $\Fock_0$ is the vacuum $|0\rangle$. Let $P_n=(n!)^{-1}\sum_\pi U_\pi$ be the projector onto the symmetric subspace $\Sym^n(\mathbb C^m)$ of $(\mathbb C^m)^{\otimes n}$, $U_\pi$ permuting the tensor factors. The map sending the normalised symmetrisation of $e_1^{\otimes n_1}\otimes\cdots\otimes e_m^{\otimes n_m}$ to the occupation vector $|n_1,\dots,n_m\rangle$ is a unitary $V_n\colon\Sym^n(\mathbb C^m)\to\Fock_n$ (we write $\Fock_n=\Sym^n(\mathbb C^m)$ when no confusion arises), $\Fock_1=\mathbb C^m$, and $\Fock=\bigoplus_n\Sym^n(\mathbb C^m)$. For an operator $X$ on $\mathbb C^m$ put $\Sym^n(X)=P_nX^{\otimes n}P_n=X^{\otimes n}P_n$ (the two commute) and
\begin{equation}\label{eq:bos-Gamma}
\Gamma_{\rm s}(X)=\bigoplus_{n\ge0}\Sym^n(X),\qquad \Sym^0(X)=1 ,
\end{equation}
the second quantisation of $X$: for a unitary $V$ on $\mathbb C^m$, $\Gamma_{\rm s}(V)$ is the passive (photon-number-conserving) linear-optical unitary $U_V$ with $U_Va_i^\dagger U_V^\dagger=\sum_jV_{ji}a_j^\dagger$, and for a positive diagonal $X=\operatorname{diag}(x_1,\dots,x_m)$, $\Gamma_{\rm s}(X)=\bigotimes_ix_i^{a_i^\dagger a_i}$ is diagonal in the occupation basis with entry $\prod_ix_i^{n_i}$. Hence $\Tr\Sym^n(X)=h_n(x_1,\dots,x_m)$, the complete homogeneous symmetric polynomial of the eigenvalues, and for $0\le X$ with $\|X\|_{\rm op}<1$,
\begin{equation}\label{eq:bos-trace}
\Tr\Gamma_{\rm s}(X)=\sum_{n\ge0}h_n(x)=\prod_i(1-x_i)^{-1}=\det(I-X)^{-1} .
\end{equation}
Also $h_n(x)\le(\sum_ix_i)^n$ for $x\ge0$, every monomial of $h_n$ occurring in the expansion of the right side. In the quadratures $x_i=(a_i+a_i^\dagger)/\sqrt2$, $p_i=(a_i-a_i^\dagger)/(i\sqrt2)$ a state has a mean vector and a real $2m\times2m$ covariance matrix $\Sigma$, the vacuum having $\Sigma=I/2$; a Gaussian state is determined by these two, and the class of bosonic Gaussian states is the one studied in \cite{CFGLMWW26,CMFLMHCP26,CGYZ26,WW25}. For $0\le x<1$ the single-mode thermal state $\vartheta(x)=(1-x)\sum_{n\ge0}x^n|n\rangle\langle n|=(1-x)x^{a^\dagger a}$ (the vacuum if $x=0$) is Gaussian with zero mean, mean photon number $\nu=x/(1-x)$ and covariance $(\nu+\tfrac12)I_2$. A passive Gaussian state is a state $U_V\big[\bigotimes_i\vartheta(x_i)\big]U_V^\dagger$; it has zero mean, commutes with $\hat N$, is determined by its photon-number matrix $\mathsf N_{ij}=\Tr(\rho\,a_j^\dagger a_i)$ (this index order makes $\mathsf N=V\operatorname{diag}(\nu_i)V^\dagger$ for the displayed action of $U_V$ on creation operators), and its covariance is $\bigoplus_i(\nu_i+\tfrac12)I_2$ conjugated by the orthogonal symplectic matrix of $V$ (the real $2m\times2m$ matrix by which $U_V$ acts on the quadrature vector $(x,p)$), so that $\|\Sigma\|_{\rm op}=\max_i(\nu_i+\tfrac12)$. These are the passive states of \cite[Thm.~2 and Thm.~6]{CMFLMHCP26}, and $\|\Sigma\|_{\rm op}$ is the energy parameter used there.

For a density matrix $w$ on $\mathbb C^m$ with eigenvalues $\lambda_1,\dots,\lambda_m\ge0$, $\sum_i\lambda_i=1$, define
\begin{equation}\label{eq:bos-class}
Z(w)=\Tr\Gamma_{\rm s}(w/2),\qquad \rho_w=Z(w)^{-1}\Gamma_{\rm s}(w/2),\qquad \Qm=\{\rho_w:\ w\ \text{a density matrix on }\mathbb C^m\} .
\end{equation}
Since $\|w/2\|_{\rm op}\le\tfrac12<1$, $\Gamma_{\rm s}(w/2)$ is trace class by \eqref{eq:bos-trace} and $\rho_w$ is a state. We write $\Pi_n$ for the projector onto $\Fock_n$.

\begin{lemma}[the class $\Qm$]\label{lem:bos-class}
Let $w$ be a density matrix on $\mathbb C^m$ with eigenvalues $\lambda_i$.
\begin{enumerate}
\item[(i)] $Z(w)=\det(I-w/2)^{-1}=\prod_i(1-\lambda_i/2)^{-1}\in[e^{1/2},2]$.
\item[(ii)] $\rho_w$ is the passive Gaussian state with photon-number matrix $\mathsf N(w)=(w/2)(I-w/2)^{-1}$, whose eigenvalues $\lambda_i/(2-\lambda_i)$ lie in $[0,1]$; hence $\|\mathsf N(w)\|_{\rm op}\le1$, $\tfrac12\le\Tr \mathsf N(w)\le1$ and $\|\Sigma\|_{\rm op}\le\tfrac32$. Every $\rho_w$ is mixed.
\item[(iii)] $\Pi_1\rho_w\Pi_1=w/(2Z(w))$ on $\Fock_1=\mathbb C^m$, and $\Tr(\Pi_1\rho_w)=1/(2Z(w))\in[\tfrac14,\tfrac12e^{-1/2}]$.
\item[(iv)] $\|\rho_w-\rho_{w'}\|_1\le4e^{-1/2}\|w-w'\|_1$, that is $T(\rho_w,\rho_{w'})\le2.43\,T(w,w')$.
\end{enumerate}
Consequently $\Qm$ lies inside $\{\text{passive},\ \Tr\mathsf N\le1,\ \|\Sigma\|_{\rm op}\le\tfrac32\}$; a lower bound proved on $\Qm$ holds a fortiori for the passive states of \cite{CMFLMHCP26} with $\|\Sigma\|_{\rm op}\le\tfrac32$ and for all bosonic Gaussian states.
\end{lemma}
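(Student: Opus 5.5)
The plan is to diagonalise $w=V\operatorname{diag}(\lambda_1,\dots,\lambda_m)V^\dagger$ once and reduce every item to the product structure of second quantisation. The identity $\Gamma_{\rm s}(VXV^\dagger)=U_V\Gamma_{\rm s}(X)U_V^\dagger$ follows from $\Sym^n(VXV^\dagger)=V^{\otimes n}X^{\otimes n}V^{\dagger\otimes n}P_n$ and the fact that $V^{\otimes n}$ commutes with $P_n$. With it, $\rho_w=U_V\big[\bigotimes_i\vartheta(\lambda_i/2)\big]U_V^\dagger$, because $\Gamma_{\rm s}$ of a diagonal matrix is $\bigotimes_i x_i^{a_i^\dagger a_i}$ and normalising each factor by $1-x_i$ gives $\vartheta(x_i)$.

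\emph{Items (i)--(iii).} For (i), apply \eqref{eq:bos-trace} with $x_i=\lambda_i/2\le\tfrac12$ to get $Z(w)=\prod_i(1-\lambda_i/2)^{-1}$. The lower bound follows from $-\log(1-x)\ge x$, which gives $\log Z\ge\sum_i\lambda_i/2=\tfrac12$. The upper bound follows because $\lambda\mapsto-\sum_i\log(1-\lambda_i/2)$ is convex on the simplex, so its maximum is attained at a vertex, where $Z=2$.

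For (ii), the thermal factor $\vartheta(\lambda_i/2)$ has $\nu_i=\lambda_i/(2-\lambda_i)$. Hence $\mathsf N=V\operatorname{diag}(\nu_i)V^\dagger=(w/2)(I-w/2)^{-1}$. The function $f(\lambda)=\lambda/(2-\lambda)$ is convex with $f(0)=0$ and $f(1)=1$, so $\lambda/2\le f(\lambda)\le\lambda\le1$. Summing gives $\tfrac12\le\Tr\mathsf N\le1$ and $\nu_i\le1$, hence $\|\Sigma\|_{\rm op}=\max_i(\nu_i+\tfrac12)\le\tfrac32$. Some $\lambda_i$ is positive, so the corresponding factor $\vartheta(\lambda_i/2)$ has rank greater than one, and therefore $\rho_w$ is mixed.

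For (iii), $\Pi_1\Gamma_{\rm s}(w/2)\Pi_1=\Sym^1(w/2)=w/2$. Dividing by $Z$ and taking the trace gives $1/(2Z)$, and the stated range follows from (i).

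\emph{Item (iv), the only nontrivial step.} Set $A=\Gamma_{\rm s}(w/2)$ and $B=\Gamma_{\rm s}(w'/2)$, and work sector by sector. Compression by $P_n$ does not increase the trace norm, so $\|\Sym^n(X)-\Sym^n(Y)\|_1\le\|X^{\otimes n}-Y^{\otimes n}\|_1$. Telescoping then gives
\[
\|X^{\otimes n}-Y^{\otimes n}\|_1\le n\max(\|X\|_1,\|Y\|_1)^{n-1}\|X-Y\|_1 ,
\]
which with $\|w/2\|_1=\tfrac12$ becomes $n2^{-n}\|w-w'\|_1$. Summing over $n\ge1$ and using $\sum_n n2^{-n}=2$ yields $\|A-B\|_1\le2\|w-w'\|_1$, and $|Z-Z'|=|\Tr(A-B)|$ obeys the same bound.

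Then $\|A/Z-B/Z'\|_1\le\|A-B\|_1/Z+\|B\|_1\,|1/Z-1/Z'|=(\|A-B\|_1+|Z-Z'|)/Z$, since $\|B\|_1=Z'$. Using $Z\ge e^{1/2}$ this is at most $4e^{-1/2}\|w-w'\|_1$. Halving both sides gives the trace-distance form, with $4e^{-1/2}<2.43$.

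The only care point is the telescoping estimate in trace norm for tensor powers, which uses $\|X\otimes Y\|_1=\|X\|_1\|Y\|_1$. The closing inclusion statement then follows directly from (ii).
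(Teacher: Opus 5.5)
Your proposal is correct and follows the paper's proof essentially step for step. For (i)--(iii) you diagonalise $w$ and read off a rotated product of thermal states, and for (iv) you use compression by $P_n$, the telescoping tensor-power bound summed with $\sum_n n2^{-n}=2$, and the same normalisation estimate $(\|A-B\|_1+|Z-Z'|)/Z$ with $Z\ge e^{1/2}$. The only differences are cosmetic: your termwise bounds $-\log(1-x)\ge x$ and $\lambda/(2-\lambda)\ge\lambda/2$ replace the paper's Jensen steps, and you do not need to take $Z$ to be the larger of the two normalisers.
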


\begin{proof}
(i) The product formula is \eqref{eq:bos-trace}. Put $g(\lambda)=-\log(1-\lambda/2)$, convex on $[0,1]$ with $g(0)=0$ and $g(1)=\log2$. Convexity gives $g(\lambda)\le\lambda\log2$, so $\log Z=\sum_ig(\lambda_i)\le\log2$; and Jensen gives $\log Z\ge m\,g(1/m)=-m\log(1-\tfrac1{2m})\ge\tfrac12$ since $-\log(1-x)\ge x$.
(ii) In the eigenbasis of $w$, $\Gamma_{\rm s}(w/2)=\bigotimes_i(\lambda_i/2)^{a_i^\dagger a_i}$, so $\rho_w=U_V[\bigotimes_i\vartheta(\lambda_i/2)]U_V^\dagger$ with $V$ diagonalising $w$; the mean photon numbers are $\nu_i=(\lambda_i/2)/(1-\lambda_i/2)=\lambda_i/(2-\lambda_i)\in[0,1]$, which is the statement about $\mathsf N(w)$, and $\|\Sigma\|_{\rm op}=\max_i(\nu_i+\tfrac12)\le\tfrac32$. The function $\lambda\mapsto\lambda/(2-\lambda)$ is convex on $[0,1]$ with values $0$ and $1$ at the endpoints, so $\Tr\mathsf N=\sum_i\lambda_i/(2-\lambda_i)\le\sum_i\lambda_i=1$, and by Jensen $\Tr\mathsf N\ge m\cdot\tfrac{1/m}{2-1/m}=\tfrac1{2-1/m}\ge\tfrac12$. Some $\lambda_i>0$, and $\vartheta(x)$ is mixed for $0<x\le\tfrac12$, so $\rho_w$ is mixed.
(iii) $\Sym^1(w/2)=w/2$.
(iv) By telescoping, $\|X^{\otimes n}-Y^{\otimes n}\|_1\le n\|X-Y\|_1\max(\|X\|_1,\|Y\|_1)^{n-1}$, and compression by $P_n$ is trace-norm contractive, so $\|\Sym^n(w/2)-\Sym^n(w'/2)\|_1\le n2^{-n}\|w-w'\|_1$ and, summing over $n\ge1$ with $\sum_nn2^{-n}=2$, $\|\Gamma_{\rm s}(w/2)-\Gamma_{\rm s}(w'/2)\|_1\le2\|w-w'\|_1$. For positive trace-class $A,B$ with traces $a\ge b>0$, $\|A/a-B/b\|_1\le\|A-B\|_1/a+\|B\|_1|1/a-1/b|=(\|A-B\|_1+|a-b|)/a\le2\|A-B\|_1/a$; with $a=\max(Z(w),Z(w'))\ge e^{1/2}$ this gives $\|\rho_w-\rho_{w'}\|_1\le4e^{-1/2}\|w-w'\|_1$.

\end{proof}

The reader may keep in mind the isotropic point $w=I/m$: $\rho_{I/m}$ is the product of $m$ thermal states with $\nu_i=1/(2m-1)$ each, total mean photon number $m/(2m-1)$, a mixed state within operator-norm distance $1/(2m-1)$ of the vacuum in covariance; the hard instances below are its neighbours.

\paragraph{Manufacturing copies of $\rho_w$ from copies of $w$.}
The point of the class is that a copy of $\rho_w$ can be manufactured exactly from a random number of copies of $w$ by a procedure that does not depend on $w$. The following rejection sampler $S$ does it. Draw $n\in\{0,1,2,\dots\}$ with $\Prob(n)=2^{-n-1}$. If $n=0$, output the vacuum $|0\rangle$, consuming no copy. If $n\ge1$, take $n$ fresh copies of $w$, measure $\{P_n,I-P_n\}$ on $(\mathbb C^m)^{\otimes n}$; on the outcome $P_n$ (``accept'') output the post-measurement state $P_nw^{\otimes n}P_n/\Tr(P_nw^{\otimes n})$, an element of $\Sym^n(\mathbb C^m)=\Fock_n$; on $I-P_n$ (``reject'') discard the copies and start a fresh attempt with a fresh draw. The truncated sampler $S_k$ is the same except that a draw $n>k$ counts as a rejected attempt consuming no copies, equivalently $n$ is drawn from $\Prob(n\mid n\le k)$.

\begin{lemma}[manufacturing]\label{lem:bos-manufacture}
Let $w$ be a density matrix on $\mathbb C^m$.
\begin{enumerate}
\item[(i)] Each attempt of $S$ accepts with probability $Z(w)/2\in[\tfrac12e^{1/2},1]$, and conditioned on acceptance (marginalising $n$) its output is exactly $\rho_w$. Successive accepted outputs are independent copies of $\rho_w$.
\item[(ii)] The expected number of copies of $w$ consumed per accepted output is $2/Z(w)\le2e^{-1/2}<2$.
\item[(iii)] For $k\ge1$ put $Z_{\le k}(w)=\sum_{n\le k}\Tr\Sym^n(w/2)$ and $Z_{>k}=Z-Z_{\le k}$. Each attempt of $S_k$ accepts with probability $Z_{\le k}(w)/2$, consumes at most $k$ copies, and conditioned on acceptance outputs
\[
\rho_w^{(k)}=Z_{\le k}(w)^{-1}\bigoplus_{n\le k}\Sym^n(w/2),\qquad T\big(\rho_w,\rho_w^{(k)}\big)=\frac{Z_{>k}(w)}{Z(w)}\le e^{-1/2}\,2^{-k}<2^{-k} .
\]
For $k\ge2$, $Z_{\le k}\ge e^{1/2}-2^{-k}>1.3987$, and the expected number of copies of $w$ per accepted output of $S_k$ is at most $2/Z_{\le k}\le1.43$, uniformly in $w$.
\end{enumerate}
\end{lemma}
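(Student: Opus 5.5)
The plan is a direct computation of one attempt of $S$, using the identity $\Tr(P_nw^{\otimes n})=\Tr\Sym^n(w)=h_n(\lambda)$ from \eqref{eq:bos-trace}. In a single attempt the draw $n$ has probability $2^{-n-1}$. For $n=0$ the attempt accepts with probability one, and the output is the vacuum $=\Sym^0$. For $n\ge1$ it accepts with probability $h_n(\lambda)$, and the output is $\Sym^n(w)/h_n(\lambda)$, using $P_nw^{\otimes n}P_n=\Sym^n(w)$.

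\textbf{Part (i).} The joint weight of the event (draw $n$, accept) is
\[
2^{-n-1}h_n(\lambda)=\tfrac12\Tr\Sym^n(w/2),
\]
by homogeneity $h_n(\lambda/2)=2^{-n}h_n(\lambda)$. Summing over $n$ gives the acceptance probability $Z(w)/2$. The bound $Z(w)/2\in[\tfrac12e^{1/2},1]$ then follows from Lemma~\ref{lem:bos-class}(i).

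The unnormalised accepted output, summed over the sectors $n$, is $\tfrac12\bigoplus_n\Sym^n(w/2)=\tfrac12\Gamma_{\rm s}(w/2)$. Dividing by the acceptance probability gives exactly $\rho_w$. Note that the sector label $n$ is forgotten by marginalising; since different $n$ live in orthogonal $\Fock_n$, the mixture is the direct sum.

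Independence of successive outputs comes from the fact that every attempt uses a fresh draw and fresh copies, and the procedure does not depend on $w$. The attempts are therefore i.i.d., and conditioning on the index of the accepting attempt leaves its output distributed as $\rho_w$, independent of the earlier attempts.

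\textbf{Part (ii).} This is the step needing the most care, because copies are consumed also by rejected attempts. I would use Wald's identity. The expected number of copies per attempt is $\E[n]=\sum_n n2^{-n-1}=1$, independent of $w$. The number of attempts until acceptance is geometric with mean $2/Z(w)$. Since the stopping time is determined by the i.i.d. attempt outcomes, Wald applies. Its integrability holds because all moments are finite, as $h_n(\lambda)\le1$ keeps things bounded. The expected consumption is therefore $1\cdot2/Z(w)\le2e^{-1/2}$.

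\textbf{Part (iii).} I repeat the computation with the draw restricted to $n\le k$. Draws $n>k$ consume nothing and reject. The acceptance probability becomes $\sum_{n\le k}\tfrac12\Tr\Sym^n(w/2)=Z_{\le k}/2$, and the accepted output is $\rho_w^{(k)}$ as displayed.

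For the trace distance I note that $\rho_w$ and $\rho^{(k)}_w$ are block diagonal, and on the sectors $n\le k$ one is a multiple $Z_{\le k}/Z$ of the other. Hence
\[
\|\rho_w-\rho^{(k)}_w\|_1=(1/Z_{\le k}-1/Z)Z_{\le k}+Z_{>k}/Z=2Z_{>k}/Z .
\]
The tail is bounded by $Z_{>k}\le\sum_{n>k}2^{-n}(\sum_i\lambda_i)^n=2^{-k}$, using $h_n\le(\sum x_i)^n$. Together with $Z\ge e^{1/2}$ this gives $T\le e^{-1/2}2^{-k}$.

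For $k\ge2$, $Z_{\le k}=Z-Z_{>k}\ge e^{1/2}-2^{-k}$, and this is $\ge e^{1/2}-\tfrac14>1.3987$. For the copy count, each attempt consumes in expectation $\sum_{n\le k}n2^{-n-1}\le1$ copies. Wald's identity as before then gives at most $2/Z_{\le k}<2/1.3987<1.43$. Every attempt uses at most $k$ copies by construction.
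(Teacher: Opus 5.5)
Your proposal is correct and follows the paper's proof essentially step by step. It uses the sector weights $2^{-n-1}h_n(\lambda)=\tfrac12\Tr\Sym^n(w/2)$, which sum to $\tfrac12\Gamma_{\rm s}(w/2)$, then Wald's identity with $\E n=1$ and a geometric number of attempts. The truncation error comes from the orthogonal-sector computation $T(\rho_w,\rho_w^{(k)})=Z_{>k}/Z$ together with $h_n(\lambda)\le(\sum_i\lambda_i)^n=1$.

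One small correction: the integrability needed for Wald does not come from $h_n(\lambda)\le1$. It comes from the geometric law of $n$, and from the acceptance probability being bounded below by $\tfrac12e^{1/2}$ (respectively by $Z_{\le k}/2$ for $S_k$).
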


\begin{proof}
(i) Since $w^{\otimes n}$ commutes with $P_n$, $P_nw^{\otimes n}P_n=\Sym^n(w)$ with trace $h_n(\lambda)$, and $\Sym^n(cX)=c^n\Sym^n(X)$. The unnormalised output of one attempt, summed over $n$, is therefore
\[
\sum_{n\ge0}2^{-n-1}\Sym^n(w)=\tfrac12\sum_{n\ge0}\Sym^n(w/2)=\tfrac12\Gamma_{\rm s}(w/2),
\]
whose trace $Z(w)/2$ is the acceptance probability, in $[\tfrac12e^{1/2},1]$ by Lemma~\ref{lem:bos-class}(i); dividing by it gives $\rho_w$. Fresh copies and fresh draws make accepted outputs independent.
(ii) The copies consumed by one attempt have mean $\E n=\sum_nn2^{-n-1}=1$; the number of attempts up to the first acceptance is geometric with mean $2/Z(w)$; the pairs (draw, accept) are i.i.d.\ across attempts and the first acceptance is a stopping time, so Wald's identity ($\E\sum_{i\le\tau}X_i=\E\tau\,\E X_1$ for i.i.d.\ $X_i$ and a stopping time $\tau$) gives $2/Z(w)\cdot1\le2e^{-1/2}$.
(iii) The same computation restricted to $n\le k$ gives acceptance $Z_{\le k}/2$ and conditional output $\rho_w^{(k)}$. Since $\rho_w=(Z_{\le k}/Z)\rho_w^{(k)}+(Z_{>k}/Z)\rho_w^{(>k)}$ with $\rho_w^{(>k)}$ supported on $\bigoplus_{n>k}\Fock_n$, orthogonal to the support of $\rho_w^{(k)}$, the trace distance is $Z_{>k}/Z$ exactly. Now $Z_{>k}=\sum_{n>k}2^{-n}h_n(\lambda)\le\sum_{n>k}2^{-n}=2^{-k}$ because $h_n(\lambda)\le(\sum_i\lambda_i)^n=1$, and $Z\ge e^{1/2}$. Then $Z_{\le k}\ge e^{1/2}-2^{-k}\ge1.64872-0.25>1.3987$ for $k\ge2$, and the copies per accepted output are bounded by Wald as in (ii), with $\E[n\,1(n\le k)]\le1$ and $2/1.3987<1.43$.
\end{proof}

\paragraph{Reading $w$ off an estimate of $\rho_w$.}

\begin{lemma}[readout]\label{lem:bos-readout}
Let $0<\epsilon<\tfrac18$ and let $\hat\rho$ be any density matrix on $\Fock$ with $T(\hat\rho,\rho_w)\le\epsilon$. Put $B=\Pi_1\hat\rho\Pi_1$, an operator on $\Fock_1=\mathbb C^m$, and $b=\Tr B$. Then $b>0$, and $\hat w=B/b$ is a density matrix on $\mathbb C^m$ with (for an arbitrary $\hat\rho$ we put $\hat w=I/m$ when $b=0$, a case that cannot occur under the hypothesis)
\[
T(\hat w,w)\le4Z(w)\,\epsilon\le8\epsilon .
\]
In particular $T(\rho_w,\rho_{w'})\ge\tfrac18T(w,w')$ for all $w,w'$, so with Lemma~\ref{lem:bos-class}(iv) the map $w\mapsto\rho_w$ is bi-Lipschitz for the trace distance, with constants $\tfrac18$ and $2.43$.
\end{lemma}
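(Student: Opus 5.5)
The plan is to compress both $\hat\rho$ and $\rho_w$ by $\Pi_1$ and compare the results on $\Fock_1=\mathbb C^m$. By Lemma~\ref{lem:bos-class}(iii), $A:=\Pi_1\rho_w\Pi_1=w/(2Z(w))$, with trace $a=1/(2Z(w))\ge\tfrac14$. Compression $X\mapsto\Pi_1X\Pi_1$ is trace-norm contractive, so
\[
\|B-A\|_1\le\|\hat\rho-\rho_w\|_1\le2\epsilon ,
\]
and $B\ge0$. Taking traces gives $|b-a|\le2\epsilon<\tfrac14\le a$, hence $b>0$, $\hat w=B/b$ is a density matrix, and the fallback $\hat w=I/m$ is never invoked under the hypothesis.

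Next I would normalise, using the inequality already proved in Lemma~\ref{lem:bos-class}(iv). For positive $A,B$ with traces $a,b$, the bound $\|A/a-B/b\|_1\le(\|A-B\|_1+|a-b|)/\max(a,b)$ holds, by the same splitting with the larger trace in the denominator. Applying it with $\|A-B\|_1\le2\epsilon$, $|a-b|\le2\epsilon$ and $\max(a,b)\ge a=1/(2Z(w))$ gives
\[
\|\hat w-w\|_1\le4\epsilon\cdot2Z(w)=8Z(w)\epsilon .
\]
This is $T(\hat w,w)\le4Z(w)\epsilon$, and $Z(w)\le2$ from Lemma~\ref{lem:bos-class}(i) gives $8\epsilon$. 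The only care needed is using the larger trace in the denominator, which is at least $a$ regardless of which of $a,b$ is larger; I do not expect a real obstacle.

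For the bi-Lipschitz claim, fix $w,w'$, put $\epsilon=T(\rho_w,\rho_{w'})$ and take $\hat\rho=\rho_{w'}$. If $\epsilon<\tfrac18$, then $\Pi_1\rho_{w'}\Pi_1/\Tr(\cdot)=w'$ by Lemma~\ref{lem:bos-class}(iii), so $\hat w=w'$ and $T(w',w)\le8\epsilon$, which is the claim. If $\epsilon\ge\tfrac18$, the bound is trivial because $T(w,w')\le1\le8\epsilon$. The upper constant $2.43$ is Lemma~\ref{lem:bos-class}(iv).
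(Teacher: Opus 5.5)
Your proof is correct and follows the paper's argument. You compress by $\Pi_1$, use contractivity to get $\|A-B\|_1\le2\epsilon$ and $|a-b|\le2\epsilon$, normalise with $a=1/(2Z(w))\ge\tfrac14$ in the denominator, and obtain the bi-Lipschitz bound by taking $\hat\rho=\rho_{w'}$, with the trivial case $T(\rho_w,\rho_{w'})\ge\tfrac18$ handled separately. The only cosmetic difference is in the normalisation step: the paper splits $B/b-A/a=(B/b-B/a)+(B-A)/a$ directly, which needs no comparison of $a$ and $b$, whereas you use the larger-trace form of the inequality from Lemma~\ref{lem:bos-class}(iv); both give $\|\hat w-w\|_1\le4\epsilon/a=8Z(w)\epsilon$.
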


\begin{proof}
Let $A=\Pi_1\rho_w\Pi_1=w/(2Z)$ and $a=\Tr A=1/(2Z)\ge\tfrac14$ (Lemma~\ref{lem:bos-class}(i),(iii)). Compression by $\Pi_1$ is trace-norm contractive, so $\|A-B\|_1\le\|\rho_w-\hat\rho\|_1\le2\epsilon$ and $|a-b|=|\Tr(A-B)|\le2\epsilon$; hence $b\ge\tfrac14-2\epsilon>0$. $B$ is positive, so $\hat w$ is a density matrix, and
\[
\|\hat w-w\|_1\le\Big\|\frac Bb-\frac Ba\Big\|_1+\frac{\|B-A\|_1}{a}=\frac{|a-b|}{a}+\frac{\|A-B\|_1}{a}\le\frac{2\|A-B\|_1}{a}\le\frac{4\epsilon}{a}=8Z(w)\epsilon\le16\epsilon ,
\]
using $\|B/b-B/a\|_1=b\,|1/b-1/a|=|a-b|/a$. Halving gives the claim. For the last statement apply it to $\hat\rho=\rho_{w'}$, whose readout is $\hat w=w'$ exactly, when $T(\rho_w,\rho_{w'})<\tfrac18$; otherwise $T(\rho_w,\rho_{w'})\ge\tfrac18\ge\tfrac18T(w,w')$.
\end{proof}

\paragraph{The lower bound.}
The black box is the block model of \cite{KLMR26}. Definition~1.1 there describes a protocol that receives $n$ copies of an unknown state $\rho$ on $\mathbb C^d$ and, in round $j$, as a function of its internal randomness and the previous classical outcomes, chooses an integer $1\le t_j\le k$ and applies an arbitrary joint POVM to $t_j$ fresh copies of $\rho$; it may stop at a random time, must satisfy $\sum_jt_j\le n$ almost surely, retains unlimited classical information between rounds and no quantum memory. Problem~1.2 there asks for an output $\hat\rho$ with $T(\rho,\hat\rho)\le\epsilon$ with probability at least $2/3$ for every input state. Theorem~1.3 there states that there is an absolute $\epsilon_0>0$ such that for every $d\ge2$, every integer $k\ge1$ and every $0<\epsilon\le\epsilon_0$, any protocol of Definition~1.1 solving Problem~1.2 with $n$ copies satisfies $n\ge c_0\,d^3/(\epsilon^2\sqrt{\min\{k,d^2\}})$ for an absolute $c_0>0$, adaptive measurements and nonuniform or random block sizes included. Corollary~1.5 there states that Theorem~1.3, and the local Corollary~1.4 (the same rate for protocols required to work only on $\{\rho:\|\rho-\rho_*\|_{\rm op}\le L\epsilon/d\}$ for a known $\rho_*\succeq I/(2d)$, with absolute constants $L\ge4$ and $\epsilon_1>0$), remain valid when the number of copies consumed is random with $\sup_\rho\E_\rho[\text{copies}]\le n$. We use Corollary~1.5 of \cite{KLMR26}, in these two forms, as a black box, and nothing else from that paper in the proof of the theorem. \begin{theorem}[single-copy lower bound on $\Qm$]\label{thm:bosonic}
There are absolute constants $c,\epsilon_0>0$, determined by those of \cite[Cor.~1.5]{KLMR26}, such that the following holds for every $m\ge2$ and $0<\epsilon\le\epsilon_0$. Consider a protocol that receives $N$ copies of an unknown $\rho_w\in\Qm$, measures them one at a time with an arbitrary POVM on the current copy together with fresh ancillas, chooses each POVM as an arbitrary (possibly randomised) function of all earlier outcomes, keeps no quantum memory between copies, and finally outputs a density matrix $\hat\rho$ on $\Fock$, not necessarily Gaussian. If $T(\hat\rho,\rho_w)\le\epsilon$ with probability at least $2/3$ for every density matrix $w$ on $\mathbb C^m$, then
\begin{equation}\label{eq:bosonic}
N\ \ge\ c\,\frac{m^3}{\epsilon^2\sqrt{\log(m/\epsilon)}} .
\end{equation}
The same bound holds for protocols required to succeed only on $\{\rho_w:\ \|w-I/m\|_{\rm op}\le24L\epsilon/m\}$, $L$ the constant $L$ of \cite[Cor.~1.4]{KLMR26}, an operator-norm neighbourhood in $w$ of the isotropic thermal state $\rho_{I/m}$.
\end{theorem}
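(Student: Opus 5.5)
The plan is a reduction from qudit tomography with bounded blocks to the bosonic single-copy problem. It uses the manufacturing Lemma~\ref{lem:bos-manufacture}, the readout Lemma~\ref{lem:bos-readout} and the black box \cite[Cor.~1.5]{KLMR26} with $d=m$.

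\textbf{Setup and block size.} Suppose a bosonic protocol $\mathcal P$ as in the statement uses $N$ copies. We may assume $N\le C_1m^3/\epsilon^2$ for a suitable absolute $C_1$, since otherwise \eqref{eq:bosonic} holds trivially. We fix the block size
\[
k=\big\lceil\log_2\!\big(100\,C_1m^3/\epsilon^2\big)\big\rceil=O(\log(m/\epsilon)),
\]
so that $N2^{-k}\le1/100$.

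\textbf{The qudit protocol $\mathcal P'$.} Given copies of an unknown $w$ on $\mathbb C^m$, $\mathcal P'$ simulates $\mathcal P$ step by step. Whenever $\mathcal P$ asks to measure a copy with a POVM $\{M_y\}$ (ancilla absorbed) chosen from its classical history, $\mathcal P'$ runs the truncated sampler $S_k$, one attempt per round. In a round with draw $n\le k$, it applies to $n$ fresh copies of $w$ the joint POVM with effects
\[
I-P_n\ \ (\text{reject}),\qquad P_n\,\big(V_n^\dagger M_y V_n\big)\,P_n\ \ (\text{accept with outcome }y).
\]
This is a legitimate POVM on $t_j=n\le k$ copies, and its accepted outcome has exactly the law of measuring $\{M_y\}$ on the output of the attempt. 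A draw $n=0$ is handled classically: the vacuum's outcome law is known, so it consumes no copies. A draw $n>k$ counts as a rejection that consumes nothing. Rounds are chosen adaptively from classical data only, with no quantum memory, so $\mathcal P'$ is of the form of \cite[Def.~1.1]{KLMR26} with block bound $k$.

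\textbf{Cost and fidelity of the simulation.} By Wald's identity and Lemma~\ref{lem:bos-manufacture}(iii), the expected number of copies of $w$ per simulated copy is at most $1.43$, uniformly in $w$ and in the history. Hence $\sup_w\E[\text{copies}]\le1.43N$, which is exactly what Corollary~1.5 of \cite{KLMR26} allows. Each simulated copy is $\rho_w^{(k)}$ rather than $\rho_w$. By the chain rule over the $N$ adaptive steps and $T(\rho_w,\rho_w^{(k)})<2^{-k}$, the transcript law of $\mathcal P'$ is within total variation $N2^{-k}\le1/100$ of that of $\mathcal P$ run on genuine copies of $\rho_w$. Therefore $\mathcal P'$ produces $\hat\rho$ with $T(\hat\rho,\rho_w)\le\epsilon$ with probability at least $2/3-1/100$.

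\textbf{Amplification and readout.} We restore success probability $2/3$ with a constant number of independent repetitions and the majority-cluster selection of Lemma~\ref{lem:amplify}, giving error $3\epsilon$ at constant extra cost. The readout $\hat w$ of Lemma~\ref{lem:bos-readout}, valid for $3\epsilon<1/8$, then satisfies $T(\hat w,w)\le24\epsilon$.

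\textbf{Applying \cite[Thm.~1.3 via Cor.~1.5]{KLMR26}.} Apply it with accuracy $\epsilon'=24\epsilon\le\epsilon_0^{\rm KLMR}$, which fixes our $\epsilon_0$. This gives
\[
C_2N\ \ge\ c_0\,\frac{m^3}{(24\epsilon)^2\sqrt{\min\{k,m^2\}}}\ \ge\ c\,\frac{m^3}{\epsilon^2\sqrt{\log(m/\epsilon)}}.
\]

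\textbf{Local version.} The same reduction applies with the local Corollary~1.4 form, taking $\rho_*=I/m$ and radius $L\epsilon'/m=24L\epsilon/m$. The simulation is only ever run on $\rho_w$ with $w$ in that operator-norm ball, so success is needed only there.

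\textbf{Main obstacle.} I expect the delicate point to be the bookkeeping that makes $\mathcal P'$ literally satisfy Definition~1.1 and Corollary~1.5 of \cite{KLMR26}. This has two parts. First, the rejection loop has an unbounded number of rounds, so only the expected copy count is controlled, and the bound must be uniform over $w$ and over adaptive stopping; this is why we need the random-copy form of Corollary~1.5 rather than Theorem~1.3. Second, the choice of $k$ must be justified without circularity. I would handle that by first disposing of the case of large $N$, as above, and then checking that $\log(C_1m^3/\epsilon^2)=O(\log(m/\epsilon))$.
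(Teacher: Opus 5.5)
Your proposal is correct and follows essentially the same route as the paper. It simulates the protocol with the truncated sampler $S_k$ (rounds of at most $k$ copies, at most $1.43$ copies of $w$ per simulated copy in expectation), bounds the truncation error by $N2^{-k}$, amplifies by the majority-cluster trick, reads out $\hat w$ from the one-photon sector with error $24\epsilon$, and invokes Corollary~1.5 of \cite{KLMR26} in its global and local forms. The differences are only bookkeeping: the paper amplifies $\mathcal P$ to success $0.99$ before simulating and takes $k=\lceil\log_2(9900N)\rceil$, disposing of the case $N\ge m^3/\epsilon^2$ afterwards, whereas you fix $k$ from $m,\epsilon$ after that case split and amplify the simulated protocol, which works since $2/3-1/100>1/2$.
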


\begin{proof}
Let $L$ be the protocol. The proof is a ledger with four entries: amplification, readout, manufacturing, and the black box.

\emph{Amplification.} As in the proof of Lemma~\ref{lem:amplify} (its selection step; no projection onto a family is needed here), run $L$ independently $99$ times on fresh copies, keep only the classical outputs, and select an output such that more than half of the $99$ outputs lie within trace distance $2\epsilon$ of it, if one exists, and otherwise any output. If at least $50$ runs succeed, a selectable output exists and any selected output is within $3\epsilon$ of $\rho_w$, since $50+50>99$; by Hoeffding's inequality for $\mathrm{Bin}(99,q)$ with $q\ge2/3$, at most $49$ successes occur with probability at most $e^{-5.5}<0.01$. Call the result $L_{99}$: it is again a protocol of the kind in the theorem, with $99N$ copies, because only classical data passes between the runs, and it returns a state within $3\epsilon$ of $\rho_w$ with probability at least $0.99$.

\emph{Readout.} Apply Lemma~\ref{lem:bos-readout} to the selected output; $3\epsilon\le3\epsilon_0<\tfrac18$ by the choice of $\epsilon_0$ below. On the success event of $L_{99}$ the readout $\hat w$ satisfies $T(\hat w,w)\le4Z(w)\cdot3\epsilon\le24\epsilon$.

\emph{Manufacturing.} Put $k=\lceil\log_2(9900N)\rceil\ge14$. Let $C_k$ be the following procedure on copies of $w$: whenever $L_{99}$ asks for its next copy, run the truncated sampler $S_k$ of Lemma~\ref{lem:bos-manufacture} until it accepts, hand the accepted state to the POVM that $L_{99}$ has chosen, record the outcome, and at the end output the readout $\hat w$ of the selected output. We claim that $C_k$ is a protocol of Definition~1.1 of \cite{KLMR26} with $d=m$ and block bound $k$. A round of $C_k$ is one attempt of $S_k$ with a draw $n\in\{1,\dots,k\}$: it takes $n$ fresh copies of $w$, and if $L_{99}$'s current POVM is $\{M^{(h)}_y\}$ on $\Fock\otimes(\text{ancilla})$ with the ancilla in a fixed state $\eta$ and $h$ the classical history, the round applies to $(\mathbb C^m)^{\otimes n}$ the POVM
\[
E^{(h)}_y=P_nV_n^\dagger\,\Tr_{\rm anc}\big[(I\otimes\eta^{1/2})M^{(h)}_y(I\otimes\eta^{1/2})\big]V_nP_n\quad(\text{outcome ``accept, }y\text{''}),\qquad E_{\rm rej}=I-P_n ,
\]
which is a legitimate POVM since $\sum_yE^{(h)}_y=P_n$; it depends on the history through $M^{(h)}$, which Definition~1.1 permits, and the block size $n$, with $1\le n\le k$, is chosen from internal randomness, which it also permits (measurability and pre-drawn internal randomness as in \cite[Sec.~4.2]{KLMR26}). Rejected attempts discard their copies. A draw $n=0$ makes $L_{99}$ measure the vacuum, whose outcome law $\Tr\big(M^{(h)}_y(|0\rangle\langle0|\otimes\eta)\big)$ does not depend on $w$ and is sampled from internal randomness; a draw $n>k$ is a rejection consuming no copies; neither is a round of Definition~1.1 and neither needs to be. Only classical data crosses round boundaries, because $L_{99}$ keeps no quantum memory between copies. (We keep the unconditioned convention, in which draws $n>k$ are rejected and the per-attempt acceptance probability is $Z_{\le k}/2$; conditioning the draw on $n\le k$ instead would make it $Z_{\le k}/(2(1-2^{-k-1}))$ and would change neither the accepted output $\rho_w^{(k)}$ nor the expected cost per output.) Finally, $C_k$ manufactures $99N$ copies at an expected cost of at most $1.43$ copies of $w$ each (Lemma~\ref{lem:bos-manufacture}(iii), $k\ge2$), so $\sup_w\E_w[\text{copies}]\le99\cdot1.43\,N\le142N$.

Conditioned on acceptance the manufactured states are independent copies of $\rho_w^{(k)}$, so the transcript of $L_{99}$ inside $C_k$ has exactly the law of $L_{99}$ run on $(\rho_w^{(k)})^{\otimes99N}$. The transcript of an adaptive single-copy protocol on $99N$ input copies is the image of the input product state under a fixed quantum-to-classical channel (the protocol itself: measure the first copy, choose the next POVM from the outcome, and so on), so by the data-processing inequality and the telescoping identity $\sigma^{\otimes n}-\rho^{\otimes n}=\sum_t\sigma^{\otimes(t-1)}\otimes(\sigma-\rho)\otimes\rho^{\otimes(n-t)}$, the total variation distance between the transcript laws on $(\rho_w^{(k)})^{\otimes99N}$ and on $\rho_w^{\otimes99N}$ is at most $99N\,T(\rho_w^{(k)},\rho_w)\le99N\,e^{-1/2}2^{-k}\le0.01\,e^{-1/2}<0.01$ by Lemma~\ref{lem:bos-manufacture}(iii). Hence, for every $w$, $C_k$ outputs $\hat w$ with $T(\hat w,w)\le24\epsilon$ with probability at least $0.99-0.01=0.98\ge2/3$.

\emph{Black box.} $C_k$ is a protocol of Definition~1.1 with $d=m\ge2$, block bound $k$, expected copy count at most $142N$, solving Problem~1.2 at accuracy $24\epsilon$, and $24\epsilon\le\epsilon_0^{\rm KLMR}$ by the choice of $\epsilon_0$. Corollary~1.5 of \cite{KLMR26} gives
\[
142N\ \ge\ c_0\,\frac{m^3}{(24\epsilon)^2\sqrt{\min\{k,m^2\}}}\ \ge\ \frac{c_0}{576}\cdot\frac{m^3}{\epsilon^2\sqrt k},\qquad\text{i.e.}\qquad N\ge c_1\frac{m^3}{\epsilon^2\sqrt k},\quad c_1=\frac{c_0}{142\cdot576} .
\]

\emph{Self-consistency of $k$.} If $N\ge m^3/\epsilon^2$ then \eqref{eq:bosonic} holds outright, since $c\le1$ and $\log(m/\epsilon)\ge\log4>1$ for $m\ge2$, $\epsilon\le\tfrac12$. Otherwise $N<m^3/\epsilon^2$ and
\[
k<\log_2\!\big(9900\,m^3/\epsilon^2\big)+1\le3\log_2m+2\log_2(1/\epsilon)+15\le10.5\,\log_2(m/\epsilon),
\]
using $\log_29900<13.3$, $3\log_2m+2\log_2(1/\epsilon)\le3\log_2(m/\epsilon)$ since $\epsilon\le1$, and $15\le7.5\log_2(m/\epsilon)$, the last because $\log_2(m/\epsilon)\ge2$. Then $\sqrt k\le(10.5/\log2)^{1/2}\sqrt{\log(m/\epsilon)}$ and \eqref{eq:bosonic} follows with $c=\min\{1,\,c_1(\log2/10.5)^{1/2}\}$. Set $\epsilon_0=\min\{\epsilon_0^{\rm KLMR},\epsilon_1^{\rm KLMR},\tfrac14\}/24$; then $3\epsilon_0<\tfrac18$ and $\epsilon_0\le\tfrac12$ as used above.

\emph{Local version.} If $L$ is only required to succeed (on the ideal inputs $\rho_w$; the manufactured truncated states need not lie in $\Qm$) on $\{\rho_w:\|w-I/m\|_{\rm op}\le24L\epsilon/m\}$, every state in play above lies in that set and the composite $C_k$ solves the local problem of \cite[Cor.~1.4]{KLMR26} on $\{w:\|w-I/m\|_{\rm op}\le L\epsilon'/m\}$ with $\epsilon'=24\epsilon$ and centre $\rho_*=I/m\succeq I/(2m)$; Corollary~1.5 in its Corollary~1.4 form gives the same bound.
\end{proof}

\begin{corollary}[rates on $\Qm$]\label{cor:bos-rate}
Let $m\ge2$ and $0<\epsilon\le\epsilon_0$. The number of copies needed to learn every state of $\Qm$ to trace distance $\epsilon$ with probability $2/3$ is $\Omega\big(m^3/(\epsilon^2\sqrt{\log(m/\epsilon)})\big)$ and $O(m^3/\epsilon^2)$ for adaptive single-copy protocols, and $\Theta(m^2/\epsilon^2)$ for collective protocols. Hence, up to the factor $\sqrt{\log(m/\epsilon)}$, entangled measurements across copies are necessary for optimal tomography of this class of passive bosonic Gaussian states, and therefore of mixed bosonic Gaussian states in general. To our knowledge this answers, up to that factor, the bosonic mixed-state case of the question of \cite[Sec.~6]{CFGLMWW26} (for pure bosonic Gaussian states, single-copy Gaussian measurements already reach $\tilde O(m^2/\epsilon^2)$ \cite[Thm.~5]{CMFLMHCP26}) for arbitrary adaptive single-copy POVMs.
\end{corollary}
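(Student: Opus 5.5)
The corollary collects four bounds: a single-copy lower bound, a single-copy upper bound, and matching collective upper and lower bounds. I will prove each one separately. The necessity statement then follows by comparing them.

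\emph{Single-copy lower bound.} This is Theorem~\ref{thm:bosonic} for $0<\epsilon\le\epsilon_0$ and $m\ge2$, and nothing more is needed.

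\emph{Single-copy upper bound $O(m^3/\epsilon^2)$.} I reduce to qudit tomography on $\Fock_1=\mathbb C^m$. Each copy of $\rho_w$ is first measured with $\{\Pi_1,I-\Pi_1\}$. This measurement is a single-copy POVM, and $\Tr(\Pi_1\rho_w)=1/(2Z(w))\ge\tfrac14$ by Lemma~\ref{lem:bos-class}(iii). On the outcome $\Pi_1$ the post-measurement state is exactly $w$, by Lemma~\ref{lem:bos-class}(iii), since $\Pi_1\rho_w\Pi_1=w/(2Z)$. Because the projector is measured first, these conditional copies of $w$ are independent and arrive one at a time. A known nonadaptive single-copy tomography scheme for $d$-dimensional states \cite{HHJWY17} then gives $\hat w$ with $T(\hat w,w)\le\epsilon/2.43$ from $O(m^3/\epsilon^2)$ copies of $w$. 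By a Chernoff bound, $O(m^3/\epsilon^2)$ copies of $\rho_w$ yield that many conditional copies with probability close to one. The protocol outputs $\rho_{\hat w}$, and Lemma~\ref{lem:bos-class}(iv) gives $T(\rho_{\hat w},\rho_w)\le2.43\,T(\hat w,w)\le\epsilon$. The success probability is set to $2/3$ by splitting the failure budget between the count event and the tomography step.

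\emph{Collective bounds.} For the collective upper bound I apply the same reduction with a collective $d$-dimensional learner \cite{HHJWY17,OW16} at $O(m^2/\epsilon^2)$ copies of $w$. The heralding measurement $\{\Pi_1,I-\Pi_1\}$ acts on each copy before the joint measurement, so this is a legitimate collective protocol. Alternatively, one can cite the bosonic Gaussian collective rate of \cite{CFGLMWW26}, restricted to the zero-mean class of bounded energy that contains $\Qm$ by Lemma~\ref{lem:bos-class}(ii).

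The collective lower bound $\Omega(m^2/\epsilon^2)$ is the step I expect to need the most care. I will reduce from state tomography of $w$: Lemma~\ref{lem:bos-readout} turns any $\epsilon$-learner for $\rho_w$ into a $24\epsilon$-learner for $w$. The learner for $w$ is run on manufactured copies produced by the truncated sampler $S_k$ of Lemma~\ref{lem:bos-manufacture}, with the truncation error controlled as in the proof of Theorem~\ref{thm:bosonic}. Since $\rho_w$ is built entirely from $w$, collectively measuring the manufactured copies is a collective measurement on copies of $w$, so the collective bound $\Omega(m^2/\epsilon^2)$ for $m$-dimensional tomography \cite{HHJWY17} applies. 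It is needed only in the local form near $I/m$, which holds as in \cite{HHJWY17}. The expected copy count, at most $1.43$ copies of $w$ per manufactured copy, is handled by Markov truncation.

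\emph{Conclusion.} The ratio of the single-copy lower bound to the collective upper bound is $\Omega(m/\sqrt{\log(m/\epsilon)})$, which is the claimed necessity of entangled measurements. Since $\Qm$ consists of passive bosonic Gaussian states (Lemma~\ref{lem:bos-class}), the lower bound transfers to all mixed bosonic Gaussian states.
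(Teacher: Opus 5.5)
Your proposal is correct and matches the paper's proof in three of its four parts. The single-copy lower bound is Theorem~\ref{thm:bosonic}. The single-copy upper bound is the same heralding on $\Fock_1$, followed by qudit tomography and Lemma~\ref{lem:bos-class}(iv); the paper writes out an explicit covariant-POVM estimator with Frobenius projection rather than citing one. The collective lower bound is the same amplify, manufacture, read out and truncate reduction, with the qudit bound of \cite{HHJWY17} in place of Corollary~4.17 of \cite{KLMR26}.

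Two small points on that reduction:
\begin{itemize}
\item Your $24\epsilon$ already presupposes the $99$-fold amplification, since Lemma~\ref{lem:bos-readout} alone gives $8\epsilon$. That amplification is also what keeps the success probability above $2/3$ after the truncation and Markov losses, so it should be stated explicitly.
\item The ``local form near $I/m$'' is not needed, because the reduction works for every $w$.
\end{itemize}

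The one genuine difference is the collective upper bound. Heralding $\Fock_1$ and running a collective qudit learner \cite{OW16} is self-contained given the embedding, but it only covers $\Qm$. The paper instead cites Theorem~5.1 of \cite{CFGLMWW26}, which has no energy, purity or mean hypothesis. That unrestricted form is what supplies the $O(m^2/\epsilon^2)$ collective benchmark for all bosonic Gaussian states, and this is what the clause ``and therefore of mixed bosonic Gaussian states in general'' requires. Transferring the single-copy lower bound upward is only half of that comparison. So you should cite that theorem unrestricted, not only its restriction to the bounded-energy class containing $\Qm$.
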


\begin{proof}
\emph{Collective upper bound.} Theorem~5.1 of \cite{CFGLMWW26} learns every $m$-mode bosonic Gaussian state, with no energy, purity or mean hypothesis, to purified distance $\epsilon$ with probability $1-\delta$ from $O((m^2+\log\delta^{-1})/\epsilon^2)$ copies; purified distance dominates trace distance \cite{FvdG99}, and $\delta=\tfrac13$ gives $O(m^2/\epsilon^2)$ on $\Qm$. Theorem~6 of \cite{CMFLMHCP26}, for passive states with $\|\Sigma\|_{\rm op}\le\bar E$, gives the same rate on $\Qm$ with $\bar E=\tfrac32$ (Lemma~\ref{lem:bos-class}(ii)), up to an additive $O(m)$ term, absorbed in $O(m^2/\epsilon^2)$.

\emph{Collective lower bound.} The embedding of Theorem~\ref{thm:bosonic} needs no block bound when the target model is unrestricted. Given a collective learner of $\Qm$ with $N$ copies, amplify as before ($99N$ copies, accuracy $3\epsilon$, success $0.99$), manufacture its inputs with $S_k$, $k=\lceil\log_2(9900N)\rceil$ (loss $0.01$ in success), now keeping the manufactured copies in quantum memory until the learner's joint measurement, and read out $\hat w$ (accuracy $24\epsilon$). The number of copies of $w$ consumed is random with mean at most $142N$, so by Markov's inequality it exceeds $14200N$ with probability at most $0.01$; implement this cap by aborting before a positive draw would exceed the remaining budget, and pad to exactly $14200N$ copies. Coupling to the uncapped sampler bounds the abort probability by the same Markov estimate. A sequence of measurements with quantum memory followed by a joint POVM, together with classical randomness, is one joint POVM on all $14200N$ copies, so the result is an unrestricted protocol learning an arbitrary state $w$ on $\mathbb C^m$ to trace distance $24\epsilon$ with probability at least $0.97$. Corollary~4.17 of \cite{KLMR26} states that unrestricted protocols need $n\ge c_2d^2/\epsilon^2$ copies for $\epsilon\le\tfrac14$; with $24\epsilon\le\tfrac14$ by the choice of $\epsilon_0$ it gives $14200N\ge c_2m^2/(24\epsilon)^2$, that is $N=\Omega(m^2/\epsilon^2)$.

\emph{Single-copy upper bound.} Measure $\hat N$ on each copy; the outcome-$1$ events are independent with probability $1/(2Z(w))\ge\tfrac14$ each and leave exactly $w$ on $\Fock_1$. On such a copy apply the current POVM $\{M_x\}$ of a single-copy qudit learner; the POVM $\{\Pi_0,\Pi_{\ge2}\}\cup\{\Pi_1M_x\Pi_1\}_x$ acts on one copy of $\rho_w$, so the composite is a single-copy protocol. The following single-copy estimator learns $w$ to trace distance $\epsilon'$ with $O(m^3/\epsilon'^2)$ copies of $w$: measure the covariant rank-one POVM $m|v\rangle\langle v|\,d\mu(v)$, $\mu$ the uniform measure on unit vectors, and record $X=(m+1)|v\rangle\langle v|-I$; the Haar second-moment identity $\E[|v\rangle\langle v|\langle v|w|v\rangle]=(w+I)/(m(m+1))$ gives $\E X=w$, and $\Tr X^2=m^2+m-1$, so the mean $\bar X$ of $s$ samples has $\E\|\bar X-w\|_F^2=(m^2+m-1-\Tr w^2)/s$; project $\bar X$ onto the density matrices in Frobenius norm (which cannot increase the distance to $w$), use $\|A\|_1\le\sqrt m\|A\|_F$ and Markov's inequality to get trace distance $\epsilon'$ with constant probability from $s=O(m^3/\epsilon'^2)$ samples; take $\epsilon'=\epsilon/2.43$ and output $\rho_{\hat w}$, within $\epsilon$ of $\rho_w$ by Lemma~\ref{lem:bos-class}(iv). Among $N$ copies of $\rho_w$ at least $N/8$ yield a copy of $w$ except with probability $e^{-N/32}$ (Hoeffding), so $O(m^3/\epsilon^2)$ copies of $\rho_w$ suffice, the constants of the success probability being absorbed.

\emph{Single-copy lower bound.} Theorem~\ref{thm:bosonic}. The separation statement follows because $\Qm$ consists of mixed passive Gaussian states (Lemma~\ref{lem:bos-class}), so a uniform $O(m^2/\epsilon^2)$ single-copy guarantee for all mixed bosonic Gaussian states would restrict to the same guarantee on $\Qm$, contradicting the lower bound for fixed sufficiently small $\epsilon$ and growing $m$.
\end{proof}

\begin{remark}[removing the $\sqrt{\log}$, conditionally]\label{rem:logfree}
The factor $\sqrt{\log(m/\epsilon)}$ in \eqref{eq:bosonic} comes only from the block bound $k$, and $k$ enters the proof of Theorem~1.3 of \cite{KLMR26} only at its last step. As we read that proof: Proposition~7.1 and Eq.~(55) there bound the Fisher trace of the transcript of any protocol of Definition~1.1 with an almost-sure copy budget, at any $\rho\succeq I/(4d)$, by a constant times $d^2\,\E_\rho[\sum_jt_j\sqrt{\min\{t_j,d^2\}}]$; Eqs.~(40) and~(57) there state that every protocol solving Problem~1.2 satisfies $\E[\sum_jt_j\sqrt{\min\{t_j,d^2\}}]\ge c_3d^3/\epsilon^2$, the expectation being over their prior and the transcript, and their text notes that the requirement in Eq.~(40) applies directly to nonuniform and random block sizes; the bound $t_j\le k$ is applied only afterwards. If Eq.~(40) holds as displayed, it applies to the composite $C_k$ of the proof of Theorem~\ref{thm:bosonic} after a Markov truncation of its copy count at $14200N$ (abort before a positive draw would exceed the remaining budget, as in the preceding proof; stopping early can only decrease $\sum_jt_j^{3/2}$, and it costs $0.01$ in success probability, leaving $0.97$). The block sizes of $C_k$ are the geometric draws, so by Wald's identity
\[
\E\Big[\sum_jt_j\sqrt{\min\{t_j,m^2\}}\Big]\le\E\Big[\sum_jt_j^{3/2}\Big]\le99N\cdot\frac2{Z_{\le k}}\cdot\sum_{n\ge1}n^{3/2}2^{-n-1}\le99N\cdot1.43\cdot3<430N ,
\]
using $n^{3/2}\le n^2$ and $\sum_{n\ge1}n^22^{-n-1}=3$. Hence $430N\ge c_3m^3/(24\epsilon)^2$, that is $N=\Omega(m^3/\epsilon^2)$ with no logarithm, for $m$ above a constant, the finitely many small $m$ being covered by Corollary~4.17 as in \cite{KLMR26}. We state this as a remark, not a theorem, because it invokes Eq.~(40) of \cite{KLMR26}, an intermediate display in the proof of their Theorem~1.3 valid for $d$ above an unquantified constant, whose hypotheses (an almost-sure copy budget, $24\epsilon\le\epsilon_0$, success probability at least $1/2$, a finite block bound) the truncated composite satisfies; nothing in their proof is re-run. An exact sampler with a deterministic copy cap gives the same conclusion by a different route, not reproduced here. With it the single-copy rate on $\Qm$ is $\Theta(m^3/\epsilon^2)$ against the collective $\Theta(m^2/\epsilon^2)$, a factor $m$ exactly as in Corollary~\ref{cor:sep}.
\end{remark}

\begin{remark}[the same construction for fermions]\label{rem:bos-fermions}
Replace symmetric by exterior powers: on the fermionic Fock space $\bigoplus_n\Lambda^n(\mathbb C^m)$ put $\Gamma_{\rm a}(X)=\bigoplus_n\Lambda^n(X)$, so that $\Tr\Lambda^n(X)=e_n(x)$, the elementary symmetric polynomial, and $\Tr\Gamma_{\rm a}(X)=\det(I+X)$. For a density matrix $w$ let $\rho^F_w=\Gamma_{\rm a}(w/2)/\det(I+w/2)$: in the eigenbasis of $w$ it is the product over modes of $(1-\nu_i)|0\rangle\langle0|+\nu_i|1\rangle\langle1|$ with $\nu_i=\lambda_i/(2+\lambda_i)$, a number-conserving fermionic Gaussian state in the sense of Section~\ref{sec:setting} with occupation matrix $\mathsf N=(w/2)(I+w/2)^{-1}$, $\|\mathsf N\|_{\rm op}\le\tfrac13$ and $\tfrac13\le\Tr\mathsf N<\tfrac12$. The sampler that draws $n$ with $\Prob(n)=2^{-n-1}$ and projects $n$ copies of $w$ onto $\Lambda^n$ produces $\rho^F_w$ exactly, with acceptance probability $\det(I+w/2)/2\in[\tfrac34,\tfrac12e^{1/2}]$; $e_n(\lambda)\le1$ gives truncation error at most $2^{-k}$; $\Pi_1\rho^F_w\Pi_1=w/(2\det(I+w/2))$ has weight at least $\tfrac12e^{-1/2}$, so the readout costs a factor $4\det(I+w/2)\le4e^{1/2}<6.6$; and the analogue of Lemma~\ref{lem:bos-class}(iv) holds with the same $\sum_nn2^{-n}=2$. Everything in the proof of Theorem~\ref{thm:bosonic} goes through with the constants $24$ and $142$ replaced by $20$ and $159$: any adaptive single-copy protocol learning $\{\rho^F_w\}$ to trace distance $\epsilon$ needs $\Omega(m^3/(\epsilon^2\sqrt{\log(m/\epsilon)}))$ copies, and single-copy protocols achieve $O(m^3/\epsilon^2)$ there by number measurement and qudit tomography of $w$ on the one-particle sector.
This is a second, independent $m^3$ lower bound (up to the $\sqrt{\log}$ factor) for mixed fermionic Gaussian states, in a different corner from Theorem~\ref{thm:main}. There the hard family sits near the maximally mixed Gaussian state, with explicit constants and $\epsilon\le1/3600$. Here $\rho^F_{I/m}$ has occupations $\nu_i=1/(2m+1)$, that is normal-form values $1-2/(2m+1)\approx1-1/m$ in the notation of Section~\ref{sec:tight}: the family lies in the near-pure regime $c\approx1/m$ of Remark~\ref{rem:worst}, where Theorem~\ref{thm:upper} gives only $O(m^4/\epsilon^2)$ and which illustrates the unresolved boundary regime (or, by particle--hole symmetry, the corresponding regime near the filled state). It does not decide that question: the reduction is to qudit tomography in dimension $m$, whose single-copy rate is $d^3$, and the family itself is learnable with $O(m^3/\epsilon^2)$ single-copy measurements, so the bound does not push above $m^3$ anywhere. It adds that the near-pure regime contains a number-conserving family that is $m^3$-hard up to the $\sqrt{\log}$ factor of Theorem~\ref{thm:bosonic} and $m^3$-easy, consistent with Remark~\ref{rem:false}.
\end{remark}

\paragraph{What this shows and what it does not.}
By Lemmas~\ref{lem:bos-class}(iv) and~\ref{lem:bos-readout}, $\Qm$ is a bi-Lipschitz image of the qudit state space on $\mathbb C^m$ inside the passive Gaussian states with $\|\Sigma\|_{\rm op}\le\tfrac32$, and by Lemma~\ref{lem:bos-manufacture} the embedding preserves sample complexity up to constants and the block structure. The separation of Corollary~\ref{cor:bos-rate} is therefore the qudit separation $d^3$ versus $d^2$ \cite{HHJWY17,OW16,CHLLS22,LN25} transported by second quantisation; the inputs are the block lower bound of \cite{KLMR26}, needed because manufacturing a copy of $\rho_w$ uses a joint measurement on a random number of copies of $w$, and the observation that the passive class admits the embedding at all. The $\sqrt{\log(m/\epsilon)}$ is exactly the price of that joint measurement, blocks of size up to $k=O(\log(m/\epsilon))$ costing $\sqrt k$ in \cite[Thm.~1.3]{KLMR26} as they do in \cite{CLL24}. This is in contrast with Theorem~\ref{thm:main}, whose hard family sits near the maximally mixed Gaussian state, where the Fock space is $2^m$-dimensional, the family is $m(2m-1)$-dimensional, the per-copy Fisher budget is $2m$ by Proposition~\ref{prop:fisher}, and, to our knowledge, no such embedding of qudit tomography is available; the argument there is a Fisher-budget cap, not a reduction. The two results are thus different mechanisms for the same conclusion in the two Gaussian families. Corollary~\ref{cor:bos-rate} also bears on \cite{CMFLMHCP26}: their Theorem~2 gives $\Omega(m^3/\epsilon^2)$ for measurements with non-negative Wigner functions on passive states of bounded energy, and their Theorem~6 beats it with non-Gaussian measurements that are entangled across copies; on $\Qm$, which satisfies their passivity and energy promise with $\bar E=\tfrac32$, no single-copy measurement, Gaussian or not, adaptive or not, does better than $m^3/(\epsilon^2\sqrt{\log(m/\epsilon)})$ (up to the constant $c$ and for $\epsilon\le\epsilon_0$), so attaining the optimal quadratic-in-$m$ rate on this class requires measurements across copies, beyond merely allowing non-Gaussian single-copy measurements.

What is not claimed: the log-free rate is conditional (Remark~\ref{rem:logfree}); $\Qm$ is a promise class, on which the single-copy upper bound $O(m^3/\epsilon^2)$ of Corollary~\ref{cor:bos-rate} is specific (for all bosonic Gaussian states the single-copy adaptive protocol of \cite{BMEM25} recalled in Section~\ref{sec:discussion} gives $\tilde O(m^3/\epsilon^2)$ with a $\log\log$ dependence on the energy), while the lower bound transfers upward to every class containing $\Qm$; the hard instances are mixed states near the vacuum with total mean photon number about $\tfrac12$, and the construction does not determine optimal single-copy upper bounds for general squeezed or displaced states; and although Theorem~\ref{thm:main} is stated for all mixed fermionic Gaussian states, its lower bound is proved on a subfamily. Neither lower bound by itself determines the exact worst-case single-copy rate over the full Gaussian class of its kind.

\section{Open questions and conjectures}\label{sec:open}

The questions below separate extensions of the proved results from possible research directions. Unless specified otherwise, the loss is trace distance, the success probability is $2/3$, and single-copy protocols have the model of Theorem~\ref{thm:main}. None of the proposed rates below is used in a theorem of this paper.

\paragraph{Uniform tomography near pure modes.}
Theorem~\ref{thm:upper} is tight when the covariance remains a fixed distance from the boundary; the worst-case bracket is $[\Omega(m^3/\epsilon^2),O(m^4/\epsilon^2)]$.
\begin{conjecture}\label{conj:worst}
All mixed fermionic Gaussian states can be learned with $\tilde O(m^3/\epsilon^2)$ single-copy measurements, uniformly in their covariance spectrum.
\end{conjecture}
Remark~\ref{rem:bos-fermions} supplies a near-vacuum family learnable at this rate, while Remark~\ref{rem:false} rules out a uniform Frobenius continuity shortcut. Number-conserving shadows \cite{Low22} suggest exploiting occupation-dependent variances after estimating a suitable mode basis. Whether adaptivity improves the optimal rate is a separate question; the trace-distance and infidelity distinctions for qudit tomography \cite{CHLLS22,LN25,NZ26} and advantages under restricted measurements \cite{GQZGW26} caution against identifying these tasks.

\paragraph{Sharp continuity constants.}
Let $K_c^*(m)$ be the least constant in \eqref{eq:propA}. Varying $\ell$ equal normal-form values inward from $1-c$ gives the rigorous lower bound
\[
K_c^*(m)\ge\max_{1\le\ell\le m}R(\ell,c),\qquad
R(\ell,c)=\frac{\E|h-\ell c/2|}{(2-c)\sqrt{\ell c^2/2}},\quad h\sim\mathrm{Bin}(\ell,c/2).
\]
Indeed the logarithmic derivative of the product probability at a configuration with $h$ holes is $-2(h-\ell c/2)/(c(2-c))$, while the covariance derivative has Frobenius norm $\sqrt{2\ell}$. Taking the trace norm of this diagonal derivative gives the displayed ratio. For $\ell=1$ it is $1/\sqrt2$. Taking $c\downarrow0$ and integers $\ell$ with $\ell c/2\to1/2$ gives $\sqrt c\,R(\ell,c)\to e^{-1/2}/\sqrt2$ by the Poisson limit, whereas $\sqrt c\,K_c\to1/2$ in Proposition~\ref{prop:A}.
\begin{question}\label{conj:const}
Is $K_c^*(m)=\max_{1\le\ell\le m}R(\ell,c)$, or can a noncommuting perturbation give a larger ratio? What is the sharp constant in the one-sided bound of Corollary~\ref{cor:onesided}?
\end{question}
The product-state calculation gives evidence for the first candidate, not a proof that it exhausts all directions.

\paragraph{Efficient pure-state reconstruction.}
Pure fermionic Gaussian states admit single-copy $\tilde O(m^2/\epsilon^2)$ learning via general bounded-gate-complexity methods \cite{ZLKQHC24,CFGLMWW26}. Native fermionic algorithms and their computational costs are studied in \cite{AG23,BMEL25,CZ26}.
\begin{conjecture}\label{conj:pure}
For $m\ge2$, nonadaptive Haar-random Gaussian frames with occupation readout, together with a $\mathrm{poly}(m,1/\epsilon)$-time estimator, learn every pure fermionic Gaussian state with $\tilde O(m^2/\epsilon^2)$ copies.
\end{conjecture}
The additional requirement here is an efficient estimator for this specified measurement ensemble. Local Fisher-information calculations alone do not establish a uniform finite-sample guarantee or an efficient likelihood optimisation algorithm.

\paragraph{Local detectors and persistent memory.}
Call a measurement Jordan--Wigner-local if each copy is measured by single-qubit POVMs with local ancillas in the encoding of Section~\ref{sec:setting}, allowing classical feedforward. This restriction depends on the encoding: ternary-tree encodings have different local measurement costs \cite{JKMN20}.
\begin{question}\label{conj:local}
For fixed $0<c<1$, do Jordan--Wigner-local measurements require $\Theta(m^4/\epsilon^2)$ copies on $\Gc$ at an accuracy range independent of $m$?
\end{question}
At the maximally mixed state product effects constrain long Majorana strings more strongly than arbitrary effects. Extending that observation uniformly to a neighbourhood requires a new bound on the conditional states $\rho_\theta^{1/2}M_x\rho_\theta^{1/2}/q_\theta(x)$; the proof of Proposition~\ref{prop:fisher} does not preserve the product restriction.

\begin{question}\label{q:memory}
How does the copy complexity on $\Gc$ depend on the number $s$ of qubits of quantum memory retained between preparations? In particular, how much memory is needed to attain $O(m^2/\epsilon^2)$ copies?
\end{question}
Theorem~\ref{thm:main} addresses $s=0$. A memory holding $k-1$ complete fermionic copies uses $(k-1)m$ qubits and permits $k$-copy blocks, but persistent memory need not be equivalent to stored blocks. Bounds for block tomography \cite{CLL24,KLMR26}, Pauli estimation \cite{CGY24}, and stabilizer learning \cite{AS26} provide distinct comparison models; they do not settle this Gaussian problem.

\paragraph{Testing instead of full reconstruction.}
\begin{question}\label{q:test}
Under a Gaussian promise, what are the single-copy and collective complexities of distinguishing $\rho=I/2^m$ from $T(\rho,I/2^m)\ge\epsilon$? Does single-copy testing of pure-state Gaussianity require a number of copies growing with $m$ at fixed error?
\end{question}
For the first task, Corollary~\ref{cor:onesided} implies $\sum_{a<b}\Gamma_{ab}^2\ge\epsilon^2$ under the alternative. This reduces the geometric separation to detecting covariance energy; proving sharp testing rates still requires controlling the statistic under both hypotheses. The relevant distinctions for unstructured states are studied in \cite{OW15,BCL20,CHLL22}. Gaussianity tests and non-Gaussianity witnesses are developed in \cite{BMEL25,HTS26,Tar26,LB24,LB26}; stabilizer testing offers another structured comparison \cite{HH25,GNW21}. Their losses and promises must be matched before comparing rates.

\paragraph{Interacting and non-Gaussian families.}
For Gibbs states with quadratic and quartic Majorana terms there are $\Theta(m^4)$ parameters. The quadratic-span invariance used in Lemma~\ref{lem:adK} fails, although commutation-index bounds \cite{KGKB25} and fermionic hypercontractivity \cite{CL93} remain available.
\begin{question}\label{conj:kbody}
Can a uniform Fisher budget and trace-distance comparison be established on a dimension-independent parameter ball for quadratic-plus-quartic Gibbs states? Would they yield a single-copy lower bound of order $m^6/\epsilon^2$?
\end{question}
The parameter count and the information budget at the maximally mixed state motivate this exponent. They supply neither its uniform neighbourhood bound nor a matching algorithm.

\begin{question}\label{q:beyond}
Do analogous collective-versus-single-copy separations hold for quadratic Gibbs states restricted to a fixed particle-number sector, or for mixed Gaussian states acted on by a fixed number of non-Gaussian gates?
\end{question}
The first family retains a $U(m)$ representation but is generally not Gaussian; covariant estimation \cite{Keyl06,HHJWY17} may be relevant. For the second, existing compression and learning results \cite{MH25} identify a small non-Gaussian subsystem under additional hypotheses. They do not directly provide a tomography rate for arbitrary mixed inputs.

\paragraph{Bosonic squeezing and energy.}
Theorem~\ref{thm:bosonic} establishes a separation on a low-energy passive class. Extending the single-copy upper bound to unrestricted squeezing without an additional energy term is a different issue.
\begin{question}\label{q:energy}
Can every mixed $m$-mode bosonic Gaussian state be learned with $O(m^3/\epsilon^2)$ single-copy measurements independently of squeezing? Already for one mode, can an arbitrary squeezed thermal state be learned with $O(1/\epsilon^2)$ copies uniformly in the squeezing?
\end{question}
This is related to the questions in \cite{CGYZ26}. The adaptive Gaussian protocol of \cite[Thm.~3]{BMEM25} has an additional doubly logarithmic squeezing dependence; its Theorem~4 removes that dependence given access also to the transposed state. Neither such access nor a known squeezing frame is available in the question above.

\section{Discussion}\label{sec:discussion}

\paragraph{Physical and mathematical utility.}
For free-fermion simulators and matchgate state preparation, the single-copy lower bound is a resource limit on full-state characterisation: arbitrary within-copy operations and unlimited classical feedback do not attain the collective copy scaling on worst-case mixed inputs. It motivates measurements that preserve coherence across preparations, but does not specify the required quantum-memory size, gate fidelity or circuit depth. The optimal collective schemes cited here need not yet have efficient laboratory implementations \cite[Sec.~6]{CFGLMWW26}. Estimating a chosen observable, rather than the whole density matrix, can be substantially cheaper.

For thermal quadratic models, Corollary~\ref{cor:thermal} gives a directly usable conversion from covariance uncertainty to trace-distance and observable-error tolerances. It applies at any finite dimensionless normal-mode bound $b$, with a constant that deteriorates at low temperature as $\cosh b$. Thus it quantifies the cost of approaching pure modes instead of assuming temperature-independent robustness. On the bosonic side, the class $\Qm$ consists of passive thermal mixtures with total mean photon number at most one: the separation persists without large occupation, displacement or squeezing. It concerns tomography of those mixtures, not a speedup for arbitrary optical computation.

For mathematical statistics and operator theory, the reusable statements are the uniform Fisher bound (Proposition~\ref{prop:fisher}), the two-sided local metric comparison (Proposition~\ref{prop:metric} and its following remark), the exact weighted covariance identity \eqref{eq:chi2}, and the second-quantisation embedding with explicit trace-distance constants (Lemmas~\ref{lem:bos-class} and~\ref{lem:bos-readout}). Each has hypotheses independent of the particular tomography estimator. Extending them to interacting Gibbs models or misspecified Gaussian models is an open direction, not a consequence established here.

\paragraph{Relation to bosonic results.}
For bosonic Gaussian states, Chen, Mele, Fanizza, Li, Mann, Huang, Chen and Preskill \cite{CMFLMHCP26} prove $\Omega(m^3/\epsilon^2)$ for protocols whose POVM elements have non-negative Wigner functions, a class containing all Gaussian measurements, even under a bounded-covariance and passivity promise (their Theorem~2), and $\Omega(m^2/\epsilon^2)$ for arbitrary POVMs (their Theorem~3); their method simulates such measurements by Wigner sampling and differs from the Fisher-information argument here. Chen, Gong, Ye and Zhang \cite{CGYZ26} give Fisher-information lower bounds for adaptive, possibly entangled Gaussian measurements and note that analogous fermionic separations had remained largely unexplored. The single-copy adaptive bosonic protocol with $\tilde O(m^3/\epsilon^2)$ copies that \cite{CFGLMWW26} compare with is that of \cite{BMEM25}. The present result is the fermionic statement for arbitrary, non-Gaussian, adaptive single-copy POVMs, and Section~\ref{sec:bosonic} gives the bosonic statement on $\Qm$: there no single-copy measurement, Gaussian or not, beats $m^3/(\epsilon^2\sqrt{\log(m/\epsilon)})$, so the optimal quadratic-in-$m$ rate on this class requires measurements across copies. The result does not exclude smaller non-Gaussian advantages within the single-copy model.

\paragraph{Relation to adaptive lower-bound techniques.}
The Bayesian van Trees strategy is standard \cite{vT68,GL95} and has been used in quantum estimation before \cite{BBGMM06}; Fisher-information lower bounds for adaptive Gaussian measurements appear in \cite{CGYZ26} (through a posterior recursion rather than a bound uniform on a ball), adaptive single-copy tomography lower bounds for unstructured states in \cite{CHLLS22,LN25} by likelihood-ratio methods, with a process-tomography analogue in \cite{BGM26} and few-copy variants in \cite{CLL24,CGZ26}. Two general-state papers posted in the fortnight before this note use its architecture: Keskin, Luo, Majid and Radzihovsky \cite{KLMR26} prove that adaptive measurements on blocks of at most $k$ copies need $\Theta(\max\{d^3/(\sqrt k\,\epsilon^2),d^2/\epsilon^2\})$ copies from a uniform per-block Fisher bound and the chain rule, with Fano and log-Sobolev inequalities in place of van Trees, and Nayak and Zhou \cite{NZ26} prove $\Theta((dr/\epsilon^2)\max\{1,r/\sqrt t\})$ for rank-$r$ states with the same four steps as here, on a hard family of support rotations. Neither concerns Gaussian states, and their per-copy budgets are exponential in $m$ on the Fock space, so neither implies Theorem~\ref{thm:main}; Theorem~\ref{thm:bosonic}, by contrast, imports Corollary~1.5 of \cite{KLMR26} as its one black box, and its content is the embedding that makes that corollary apply. The novelty, to our knowledge, is the dimension-free budget of Proposition~\ref{prop:fisher} and the dimension-free metric comparison of Proposition~\ref{prop:metric} on the Gaussian family, which convert the commutation-index mechanism of \cite{KGKB25} from a coordinatewise into a trace-distance statement.

\paragraph{What remains open.}
The open questions are collected in Section~\ref{sec:open}. No unconditional experimental demonstration of a collective advantage is claimed: the hardware calibration of Section~\ref{sec:twomode} resolved its pre-registered witness above the separable ceiling only under the prescribed-preparation and common-channel assumptions, with preparation systematics outside the bound, and that section specifies what a two-mode experiment would and would not show.

\appendix
\section{Proof of Proposition~\ref{prop:A}}\label{app:propA}

\begin{lemma}[$\chi^2$ bound on the trace norm]\label{lem:chi2}
For Hermitian $Y$ and a density matrix $\rho>0$ (full rank, so that $\rho^{-1/2}$ exists), $\|Y\|_1\le\big(\Tr(\rho^{-1/2}Y\rho^{-1/2}Y)\big)^{1/2}$.
\end{lemma}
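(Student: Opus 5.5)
The plan is a Cauchy--Schwarz argument for the Hilbert--Schmidt inner product after ``whitening'' $Y$ by $\rho$. Put $R=\rho^{-1/2}Y\rho^{-1/2}$, which is Hermitian, so that $Y=\rho^{1/2}R\rho^{1/2}$. By cyclicity the right-hand side equals
\[
\Tr(\rho^{-1/2}Y\rho^{-1/2}Y)=\Tr(R\,\rho^{1/2}R\rho^{1/2})=\|\rho^{1/4}R\rho^{1/4}\|_2^2 ,
\]
because $\rho^{1/4}R\rho^{1/4}$ is Hermitian and its squared Hilbert--Schmidt norm is $\Tr(\rho^{1/4}R\rho^{1/2}R\rho^{1/4})$. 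So the goal becomes $\|Y\|_1\le\|\rho^{1/4}R\rho^{1/4}\|_2$, where $Y=\rho^{1/4}(\rho^{1/4}R\rho^{1/4})\rho^{1/4}$.

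The key step is H\"older's inequality in the unnormalised Schatten norms with exponents $(4,2,4)$, as recalled in Section~\ref{sec:setting}:
\[
\|\rho^{1/4}W\rho^{1/4}\|_1\le\|\rho^{1/4}\|_4\,\|W\|_2\,\|\rho^{1/4}\|_4,\qquad W=\rho^{1/4}R\rho^{1/4}.
\]
Since $\|\rho^{1/4}\|_4=(\Tr\rho)^{1/4}=1$, this gives $\|Y\|_1\le\|W\|_2$, which is the claim.

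A more hands-on alternative, if preferred, uses the duality form of the trace norm. Write $\|Y\|_1=\Tr(UY)$ with $U=\operatorname{sgn}(Y)$, a Hermitian contraction. Then
\[
\Tr(UY)=\Tr\bigl((\rho^{1/4}U\rho^{1/4})\,W\bigr)\le\|\rho^{1/4}U\rho^{1/4}\|_2\|W\|_2 ,
\]
and $\|\rho^{1/4}U\rho^{1/4}\|_2^2=\Tr(\rho^{1/2}U\rho^{1/2}U)\le\Tr\rho=1$.

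The only step needing care is this last inequality. It follows from Cauchy--Schwarz,
\[
\Tr(\rho^{1/2}U\rho^{1/2}U)\le\|\rho^{1/2}U\|_2^2=\Tr(U\rho U)\le\Tr\rho ,
\]
using $U^2\le I$. There is no real obstacle: the only subtlety is choosing the symmetric $\rho^{1/4}$ splitting so that the Hilbert--Schmidt norm of $W$ matches the given quadratic form exactly.
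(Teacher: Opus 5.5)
Your proof is correct. Your ``hands-on alternative'' is essentially the paper's argument. The paper takes $W=\operatorname{sgn}(Y)$ with $\operatorname{sgn}(0):=1$, so that $W$ is a Hermitian unitary rather than a contraction, and writes $\|Y\|_1=\Tr(WY)=\Tr\big((\rho^{1/4}W\rho^{1/4})(\rho^{-1/4}Y\rho^{-1/4})\big)$. It then applies Hilbert--Schmidt Cauchy--Schwarz and bounds $\Tr(\rho^{1/2}W\rho^{1/2}W)\le\|\rho^{1/2}\|_2\|W\rho^{1/2}W\|_2=1$ by a second Cauchy--Schwarz and unitary invariance. You instead pair $U\rho^{1/2}$ with $\rho^{1/2}U$ and use $U^2\le I$, which works equally well.

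Your main route is a genuine repackaging. It puts everything into the three-factor Schatten H\"older inequality $\|\rho^{1/4}W\rho^{1/4}\|_1\le\|\rho^{1/4}\|_4\|W\|_2\|\rho^{1/4}\|_4$ with $\|\rho^{1/4}\|_4=(\Tr\rho)^{1/4}=1$. This buys brevity, and it shows that the only quantitative input from $\rho$ is $\Tr\rho=1$. It is the same $(4,2,4)$ pattern the paper uses, in normalised norms, in Lemma~\ref{lem:Df}. The cost is that the Schatten H\"older inequality enters as a black box. The paper's duality argument is self-contained, using only Cauchy--Schwarz for $\Tr(A^\dagger B)$; in effect it proves that particular H\"older instance directly.
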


In this appendix $\|Z\|_2:=(\Tr Z^\dagger Z)^{1/2}$ denotes the unnormalised Hilbert--Schmidt norm of an operator on the Fock space (not the normalised $\|\cdot\|_{2,\tau}$ of Section~\ref{sec:setting}, and not the Euclidean norm of a vector), with inner product $\langle A,B\rangle=\Tr(A^\dagger B)$ and Cauchy--Schwarz inequality $|\Tr(A^\dagger B)|\le\|A\|_2\|B\|_2$. The name of the lemma comes from the commuting case: if $\rho=\sum_i\rho_i|i\rangle\langle i|$ and $Y=\sum_iy_i|i\rangle\langle i|$ in a common eigenbasis, the right-hand side is $(\sum_iy_i^2/\rho_i)^{1/2}$, a classical $\chi^2$ quantity, and the lemma reduces to the Cauchy--Schwarz step $\sum_i|y_i|=\sum_i(|y_i|/\sqrt{\rho_i})\sqrt{\rho_i}\le(\sum_iy_i^2/\rho_i)^{1/2}(\sum_i\rho_i)^{1/2}$. For $Y=\sigma-\rho$ the right-hand side is the $\alpha=\tfrac12$ member of the quantum $\chi^2$ family of Temme et al.\ \cite{TKRWV10}; we include the short proof.

\begin{proof}
Write $Y=\sum_iy_i|i\rangle\langle i|$ in an eigenbasis and let $W=\sum_i\operatorname{sgn}(y_i)|i\rangle\langle i|$ with $\operatorname{sgn}(0):=1$; then $W$ is a Hermitian unitary ($W^2=I$) and $\Tr(WY)=\sum_i|y_i|=\|Y\|_1$. Writing $WY=\rho^{-1/4}(\rho^{1/4}W\rho^{1/4})(\rho^{-1/4}Y\rho^{-1/4})\rho^{1/4}$ and using cyclicity of the trace, $\Tr(WY)=\Tr(AB)$ with $A=\rho^{1/4}W\rho^{1/4}$ and $B=\rho^{-1/4}Y\rho^{-1/4}$, both Hermitian, so $\Tr(AB)=\langle A,B\rangle\le\|A\|_2\|B\|_2$ by Cauchy--Schwarz. The first factor squared is $\|A\|_2^2=\Tr(\rho^{1/2}W\rho^{1/2}W)=\langle\rho^{1/2},W\rho^{1/2}W\rangle\le\|\rho^{1/2}\|_2\,\|W\rho^{1/2}W\|_2$, and both norms equal $(\Tr\rho)^{1/2}=1$ because $\|\cdot\|_2$ is unitarily invariant: $\|W\rho^{1/2}W\|_2^2=\Tr(W\rho^{1/2}W\,W\rho^{1/2}W)=\Tr(W\rho W)=\Tr\rho$, using $W^2=I$. The second factor squared is $\|B\|_2^2=\Tr(\rho^{-1/4}Y\rho^{-1/2}Y\rho^{-1/4})=\Tr(\rho^{-1/2}Y\rho^{-1/2}Y)$, the stated trace.
\end{proof}

\begin{lemma}[Gaussian rotations]\label{lem:rot}
For a real antisymmetric $A$ let $H_A=\tfrac14\sum_{ab}A_{ab}c_ac_b$. Then $H_A$ is anti-Hermitian, $[H_A,c_d]=\sum_ac_aA_{ad}$, and $U=e^{\varepsilon H_A}$ satisfies $Uc_dU^\dagger=\sum_ac_a(e^{\varepsilon A})_{ad}$. The state $U\rho_\Gamma U^\dagger$ is Gaussian with covariance $e^{\varepsilon A}\Gamma e^{-\varepsilon A}$; hence $D_{[A,\Gamma]}\rho_\Gamma=[H_A,\rho_\Gamma]$.
\end{lemma}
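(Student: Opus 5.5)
The plan is to verify the four assertions in order, each reducing to the anticommutation relations $\{c_a,c_b\}=2\delta_{ab}I$. The only step needing more than algebra is that conjugation by $U$ preserves Gaussianity.

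\emph{Anti-Hermiticity.} Since $c_a^\dagger=c_a$, we have $(c_ac_b)^\dagger=c_bc_a$. Hence $H_A^\dagger=\tfrac14\sum_{ab}A_{ab}c_bc_a$. Relabelling $a\leftrightarrow b$ and using $A_{ba}=-A_{ab}$ gives $H_A^\dagger=-H_A$.

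\emph{The commutator.} I would use the identity
\[
[c_ac_b,c_d]=c_a\{c_b,c_d\}-\{c_a,c_d\}c_b=2\delta_{bd}c_a-2\delta_{ad}c_b .
\]
This gives $[H_A,c_d]=\tfrac12\big(\sum_aA_{ad}c_a-\sum_bA_{db}c_b\big)$. Antisymmetry turns the second sum into $+\tfrac12\sum_aA_{ad}c_a$, so $[H_A,c_d]=\sum_ac_aA_{ad}$.

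\emph{Conjugation formula.} Because $H_A$ is anti-Hermitian, $U=e^{\varepsilon H_A}$ is unitary with $U^\dagger=e^{-\varepsilon H_A}$. The operator $Y_d(\varepsilon)=Uc_dU^\dagger$ therefore solves $Y_d'=[H_A,Y_d]$ with $Y_d(0)=c_d$. The candidate $\tilde Y_d(\varepsilon)=\sum_ac_a(e^{\varepsilon A})_{ad}$ has the same initial value. Its derivative is $\sum_ac_a(Ae^{\varepsilon A})_{ad}$, which equals $[H_A,\tilde Y_d]$ by linearity of $\ad H_A$ and the commutator just proved. The linear ODE on the finite-dimensional operator space has a unique solution, so $Y_d=\tilde Y_d$.

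\emph{Gaussianity and the transformed covariance.} Put $O=e^{\varepsilon A}$, which is orthogonal because $A$ is antisymmetric. Then the operators $Uc_dU^\dagger=\sum_ac_aO_{ad}$ are again Majorana operators. To show $U\rho_\Gamma U^\dagger$ is Gaussian I would use the product characterisation of Section~\ref{sec:setting}.
\begin{itemize}
\item Write $\rho_\Gamma=\prod_k\tfrac12(I+\lambda_kZ_k)$, where $Z_k=i\tilde c_{2k-1}\tilde c_{2k}$ and $\tilde c=O'c$ is the rotated frame.
\item Conjugating by $U$ replaces each $\tilde c_a$ by $U\tilde c_aU^\dagger$. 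This is another orthogonally rotated Majorana frame, with rotation composed from $O'$ and $O$.
\item So $U\rho_\Gamma U^\dagger$ is again a product of single-mode factors in a Majorana frame, hence Gaussian.
\end{itemize}
For its covariance, cyclicity gives $\Gamma'_{ab}=\Tr(\rho_\Gamma\,U^\dagger ic_ac_bU)$. Here $U^\dagger c_aU=\sum_cc_c(O^T)_{ca}$, by the same ODE argument with $-\varepsilon$. Expanding,
\[
\Gamma'_{ab}=\sum_{cd}O_{ac}O_{bd}\Gamma_{cd}=(O\Gamma O^T)_{ab}=(e^{\varepsilon A}\Gamma e^{-\varepsilon A})_{ab}.
\]

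\emph{The derivative identity.} By uniqueness of the Gaussian state with a given covariance, $U\rho_\Gamma U^\dagger=\rho_{e^{\varepsilon A}\Gamma e^{-\varepsilon A}}$. I would then differentiate both sides at $\varepsilon=0$.
\begin{itemize}
\item The left side has derivative $[H_A,\rho_\Gamma]$.
\item On the right, $\Gamma\mapsto\rho_\Gamma$ is polynomial, so the chain rule applies. The curve $e^{\varepsilon A}\Gamma e^{-\varepsilon A}$ has velocity $[A,\Gamma]$ at $\varepsilon=0$, giving $D_{[A,\Gamma]}\rho_\Gamma$.
\end{itemize}
Equating the two yields $D_{[A,\Gamma]}\rho_\Gamma=[H_A,\rho_\Gamma]$.

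The main point requiring care is the Gaussianity step. It must use frame-independence: the product form in an arbitrary orthogonally rotated Majorana frame is Gaussian, which is exactly what Section~\ref{sec:setting} established via the Pfaffian moment check.
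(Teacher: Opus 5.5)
Your proof is correct and follows the paper's route: the commutator identity $[c_ac_b,c_d]=2\delta_{bd}c_a-2\delta_{ad}c_b$, uniqueness for the linear ODE satisfied by $Uc_dU^\dagger$, the transformed covariance via cyclicity, uniqueness of the Gaussian state with a given covariance, and differentiation at $\varepsilon=0$. You also spell out two things the paper leaves implicit: anti-Hermiticity, and why $U\rho_\Gamma U^\dagger$ is Gaussian (via the product form in a rotated frame). The one point both you and the paper gloss is that the $c=d$ terms in your covariance expansion contribute $i\delta_{ab}$; this affects only the diagonal, where both sides are zero by convention, so it is harmless.
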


\begin{proof}
$[c_ac_b,c_d]=2\delta_{bd}c_a-2\delta_{ad}c_b$; summing against $A_{ab}$ and using antisymmetry gives $[H_A,c_d]=\sum_ac_aA_{ad}$. Differentiating $Uc_dU^\dagger$ in $\varepsilon$ gives the linear system solved by $\sum_ac_a(e^{\varepsilon A})_{ad}$. The covariance of $U\rho U^\dagger$ is $\Tr(\rho\,i\,U^\dagger c_aU\,U^\dagger c_bU)=((e^{-\varepsilon A})^T\Gamma e^{-\varepsilon A})_{ab}=(e^{\varepsilon A}\Gamma e^{-\varepsilon A})_{ab}$, and a Gaussian state is determined by its covariance. Differentiating at $\varepsilon=0$ gives the last claim.
\end{proof}

\begin{proof}[Proof of Proposition~\ref{prop:A}]
The case $c=1$ is trivial because $\mathcal G_1=\{0\}$, so assume $0<c<1$. Both sides of \eqref{eq:propA} are invariant under a common Majorana rotation and continuous in $(\Gamma,\Gamma')$ ($\rho_\Gamma$ is polynomial in $\Gamma$), and the set of nondegenerate $\Gamma\in\Gc$ (all $\lambda_k$ distinct and nonzero) is dense in $\Gc$. Since $\rho_{\Gamma'}-\rho_\Gamma=\int_0^1D_{\Gamma'-\Gamma}\rho_{\Gamma_t}dt$ along the segment $\Gamma_t=(1-t)\Gamma+t\Gamma'\subset\Gc$, it suffices to prove
\begin{equation}\label{eq:derivbound}
\|D_X\rho_\Gamma\|_1\le K_c\|X\|_F
\end{equation}
for nondegenerate $\Gamma=\bigoplus_k\lambda_kJ$ and every real antisymmetric $X$; \eqref{eq:derivbound} then extends by continuity.

\emph{Decomposition of the direction.} Write $X$ in $2\times2$ blocks. The diagonal blocks are $X_{kk}=x_kJ$; they change $\lambda_k$ at rate $x_k$, and $\partial_{\lambda_k}\rho_\Gamma=\tfrac12Z_k\prod_{j\ne k}\tfrac12(I+\lambda_jZ_j)$. For an off-diagonal block ($k<l$) split $A_{kl}=A^+_{kl}+A^-_{kl}$ with $A^+\in\operatorname{span}\{I,J\}$ (commuting with $J$) and $A^-\in\operatorname{span}\{\sigma_x,\sigma_z\}$ (anticommuting with $J$); then $[A,\Gamma]_{kl}=\lambda_lA_{kl}J-\lambda_kJA_{kl}=(\lambda_l-\lambda_k)A^+_{kl}J+(\lambda_l+\lambda_k)A^-_{kl}J$. In the nondegenerate case this is invertible, so $X=X_{\rm diag}+[A,\Gamma]$ for a unique off-block-diagonal antisymmetric $A$, and, $J$ being orthogonal and the two spans Frobenius-orthogonal,
\begin{equation}\label{eq:blocknorm}
\|X_{kl}\|_F^2=(\lambda_l-\lambda_k)^2\|A^+_{kl}\|_F^2+(\lambda_l+\lambda_k)^2\|A^-_{kl}\|_F^2,\qquad \|X\|_F^2=2\sum_kx_k^2+2\sum_{k<l}\|X_{kl}\|_F^2 .
\end{equation}
By linearity and Lemma~\ref{lem:rot}, $D_X\rho_\Gamma=\sum_kx_k\partial_{\lambda_k}\rho_\Gamma+[H_A,\rho_\Gamma]$, and $H_A=\tfrac12\sum_{k<l}h_{kl}$ with $h_{kl}=\sum_{a\in k,\,b\in l}A_{ab}c_ac_b$.

\emph{Commutators in complex modes.} With $a_k=(c_{2k-1}+ic_{2k})/2$ we have $Z_k=2a_k^\dagger a_k-I$, so mode $k$ is occupied with probability $p_k=(1+\lambda_k)/2$ and empty with probability $q_k=(1-\lambda_k)/2$, $p_kq_k=(1-\lambda_k^2)/4$, and $\rho_\Gamma$ is diagonal in the occupation basis. Expanding $h_{kl}$ in the complex modes, anti-Hermiticity forces the form $h_{kl}=\alpha\,a_k^\dagger a_l-\bar\alpha\,a_l^\dagger a_k+\beta\,a_k^\dagger a_l^\dagger-\bar\beta\,a_la_k$; the hopping terms are the $U(1)$-invariant part and correspond to $A^+$, the pairing terms to $A^-$, and comparing $\tau(h_{kl}^\dagger h_{kl})=\|A_{kl}\|_F^2$ with the normalised traces of the four monomials (each $1/4$, cross terms vanishing by occupation counting) gives $\|A^+_{kl}\|_F^2=|\alpha|^2/2$ and $\|A^-_{kl}\|_F^2=|\beta|^2/2$.
For each pair $k,l$, reorder the fermionic modes so that they are adjacent; this is a common unitary change of basis and leaves all trace expressions unchanged. Writing $\rho_\Gamma=\rho_{kl}\rho_{\rm rest}$ in that order, the operator $T=a_k^\dagger a_l$ maps $|01\rangle\mapsto|10\rangle$ in the occupations $(n_k,n_l)$, so $T\rho_{kl}=q_kp_lT$ and $\rho_{kl}T=p_kq_lT$, giving $[T,\rho_\Gamma]=\tfrac12(\lambda_l-\lambda_k)T\rho_{\rm rest}$; likewise $[a_k^\dagger a_l^\dagger,\rho_\Gamma]=-\tfrac12(\lambda_k+\lambda_l)a_k^\dagger a_l^\dagger\rho_{\rm rest}$, with the conjugate relations for the adjoints. Hence $Y_{kl}:=\tfrac12[h_{kl},\rho_\Gamma]=\rho_{\rm rest}M_{kl}$ where $M_{kl}$ acts on modes $k,l$, is Hermitian, and is off-diagonal in the occupation basis with $|(M_{kl})_{10,01}|=|\lambda_l-\lambda_k||\alpha|/4$ and $|(M_{kl})_{11,00}|=(\lambda_k+\lambda_l)|\beta|/4$.

\emph{The $\chi^2$ identity.} Let $Q(X)=\Tr(\rho_\Gamma^{-1/2}D_X\rho_\Gamma\,\rho_\Gamma^{-1/2}D_X\rho_\Gamma)$. In the original Jordan--Wigner order, nonadjacent pairs acquire spectator parity signs. These have modulus one and cancel in the squared matrix elements below; they do not affect which occupations are flipped. Since $\rho_\Gamma^{-1/2}$ is diagonal in the occupation basis, all cross terms vanish: a product containing one $Y_{kl}$ and one diagonal factor flips the occupations of modes $k,l$ and is traceless; $Y_{kl}Y_{k'l'}$ with $\{k,l\}\ne\{k',l'\}$ flips some mode an odd number of times; and $\partial_{\lambda_k}\rho\,\partial_{\lambda_{k'}}\rho$ with $k\ne k'$ carries a free factor $\Tr Z_k=0$. The surviving terms are $\Tr(\rho^{-1}(\partial_{\lambda_k}\rho)^2)=\tfrac14(1/p_k+1/q_k)=1/(1-\lambda_k^2)$ and, for each pair, $\Tr(\rho_{kl}^{-1/2}M_{kl}\rho_{kl}^{-1/2}M_{kl})=\sum_{ij}|(M_{kl})_{ij}|^2(\rho_i\rho_j)^{-1/2}$ with $\rho_{10}\rho_{01}=\rho_{11}\rho_{00}=(1-\lambda_k^2)(1-\lambda_l^2)/16$, which by \eqref{eq:blocknorm} and the two normalisations equals $\|X_{kl}\|_F^2/\sqrt{(1-\lambda_k^2)(1-\lambda_l^2)}$. Therefore
\[
Q(X)=\sum_k\frac{x_k^2}{1-\lambda_k^2}+\sum_{k<l}\frac{\|X_{kl}\|_F^2}{\sqrt{(1-\lambda_k^2)(1-\lambda_l^2)}}=\tfrac12\big\|(I+\Gamma^2)^{-1/4}X(I+\Gamma^2)^{-1/4}\big\|_F^2 ,
\]
using $(I+\Gamma^2)^{-1/4}=\bigoplus_k(1-\lambda_k^2)^{-1/4}I_2$ and the block bookkeeping in \eqref{eq:blocknorm}; this is \eqref{eq:chi2}.

\emph{Conclusion.} By Lemma~\ref{lem:chi2} and \eqref{eq:chi2}, $\|D_X\rho_\Gamma\|_1\le Q(X)^{1/2}\le\|X\|_F/(\sqrt2\min_k(1-\lambda_k^2)^{1/2})$, and $\lambda_k\le1-c$ gives $1-\lambda_k^2\ge c(2-c)$, which is \eqref{eq:derivbound}.
\end{proof}

\begingroup\footnotesize
\section*{Statement on the use of artificial-intelligence tools}
The problem, the solution strategy and all calculations in this paper were worked out by the author on paper. Large language model systems, Claude (Anthropic) and GPT (OpenAI), were then used to verify every result by independent re-derivation, carried out by separately instantiated model instances; to prepare the manuscript text from the author's derivations; to search the literature and check the references; and to implement and analyse the hardware protocol under a pre-registered plan. No language model is an author, and the author takes full responsibility for every statement in this paper. This statement is made in accordance with the arXiv policy on reporting significant use of such tools.

\endgroup

\end{document}